\newcommand{\mc}{\mathcal}
\newcommand{\1}{\mathbbm{1}}
\newtheorem{assum}{Assumption}
\newtheorem{definition}{Definition}
\newtheorem{thm}{Theorem}
\newtheorem{prop}{Proposition}
\newtheorem{example}{Example}
\journal{Transportation Research Part B: Methodological}
\begin{document}
\begin{frontmatter}


\title{Risk-aware Urban Air Mobility Network Design with Overflow Redundancy}



\author{Qinshuang Wei\footnote{Corresponding Author. Postdoctoral Fellow, Oden Institute for Computational Engineering and Sciences, {\tt\small qinshuang@utexas.edu}}, Zhenyu Gao\footnote{Postdoctoral Fellow, Department of Aerospace Engineering and Engineering Mechanics, {\tt\small zhenyu.gao@utexas.edu}}, John-Paul Clarke\footnote{Professor and Ernest Cockrell, Jr. Memorial Chair in Engineering, Department of Aerospace Engineering and Engineering Mechanics, {\tt\small johnpaul@utexas.edu}}, and Ufuk Topcu\footnote{Professor, Department of Aerospace Engineering and Engineering Mechanics and Oden Institute for Computational Engineering and Sciences, {\tt\small utopcu@utexas.edu}}}

\address{The University of Texas at Austin, Austin, Texas 78712-1221, United States}

\begin{abstract}

Urban air mobility (UAM), as envisioned by aviation professionals, will transport passengers and cargo at low altitudes within urban and suburban areas. 
To operate in urban environments, precise air traffic management, in particular the management of traffic overflows due to physical and operational disruptions will be critical to ensuring system safety and efficiency. 
To this end, we propose UAM network design with reserve capacity, i.e., a design where alternative landing options and flight corridors are explicitly considered as a means of improving contingency management. Similar redundancy considerations are incorporated in the design of many critical infrastructures, yet remain unexploited in the air transportation literature. 
In our methodology, we first model how disruptions to a given UAM network might impact on the nominal traffic flow and how this flow might be re-accommodated on an extended network with reserve capacity. Then, through an optimization problem, we select the locations and capacities for the backup vertiports with the maximal expected throughput of the extended network over all possible disruption scenarios, while the throughput is the maximal amount of flights that the network can accommodate per unit of time.  
We show that we can obtain the solution for the corresponding bi-level and bi-linear optimization problem by solving a mixed-integer linear program. We demonstrate our methodology in the case study using networks from Milwaukee, Atlanta, and Dallas--Fort Worth metropolitan areas and show how the throughput and flexibility of the UAM networks with reserve capacity can outcompete those without.

\end{abstract}

\begin{keyword}
Urban Air Mobility \sep Air Transportation \sep Network Design \sep Network Optimization \sep Disruption

\end{keyword}

\end{frontmatter}






\section{Introduction}\label{sec:intro}

Urban air mobility (UAM) is a transformative air transportation concept that will primarily utilize electric vertical take-Off and landing (eVTOL) aircraft and unmanned aerial systems (UAS) at lower altitudes within or traversing metropolitan areas for application scenes such as passenger mobility, cargo delivery, infrastructure monitoring, public safety, and emergency services~\citep{cohen2021urban}. By expanding the urban transportation into the sky, UAM will enable the transport of passengers and cargo more effectively and serve the public good, especially in the currently underserved local and regional environments.

Large-scale deployment of UAM requires solutions to a range of technological and operational challenges to ensure system safety, efficiency, and sustainability~\citep{garrow2021urban,cohen2021urban,gao2023noise,wei2022safe,wei2021scheduling}. Among the challenges in air traffic management and infrastructure design, air traffic overflow is a critical factor. The term overflow refers to a situation where a quantity exceeds the limitation or capacity of a given system. Properly managing overflow can help avoid unfavorable consequences such as system inefficiency, loss, and failure, and is therefore a common consideration in the design of critical infrastructures, such as intensive care units in hospitals~\citep{litvak2008overflow}, telecommunication systems~\citep{glkabowski2008modeling}, and sewage discharges~\citep{joseph2014minimization}. In air transportation, air traffic overflow occurs when capacities at nodes (vertiports) and links (flight corridors) are reduced due to physical and operational disruptions such as weather, power outage, incidents, and temporal traffic control. Effective accommodation of overflow air traffic and aircraft in a UAM network is of vital significance to two aspects of operations. First, any air transportation infrastructure is safety-critical in nature. When certain disruptions happen to a vertiport or flight corridor, the affected aircraft in operation must have alternative options to safely divert and land. Second, from the economic perspective, even small local spill-over effects under disruption could cascade and result in network-scale delays, cancellations, and even network failure, which translate into direct economic and welfare losses. These observations highlight the importance of managing air traffic overflow via network design with improved redundancy and flexibility.

To date, network design with redundancy in functionality for risk and contingency management has not been fully exploited in the air transportation. However, the ground transportation and supply chain engineering communities have tackled network design with redundancy consideration from various angles. Most related research efforts focus on finding optimal network expansion policies to improve network redundancy, which mainly includes two aspects: (1) network capacity~\citep{zhang2012enhancing,kerner2016maximization} or reserve capacity~\citep{chen1999capacity}, and (2) travel alternative diversity~\citep{xu2018redundancy}. Such a network design problem is commonly formulated as a bi-level optimization problem, where the upper-level and lower-level sub-problems optimizes the network design objective and traffic flow conditions, respectively. Network design under uncertainty~\citep{sumalee2009evaluation} further utilizes bi-level stochastic program~\citep{liu2009twostage,unnikrishnan2009freight} to take into account stochastic demand and/or capacity. On the other hand, network design for UAM operations has become an active research stream in recent years. People have proposed methods to design UAM networks that can best serve passenger demand~\citep{wu2021integrated,kai2022vertiport}, minimize traffic congestion~\citep{yu2023vertiport}, and plan for various operational strategies~\citep{willey2021uamnetwork,wei2023dynamic}. Albeit these fruitful advances, design with redundancy consideration for risk and contingency management is still a noteworthy research gap in UAM network.

\begin{figure}[htbp]
	\centering
        \includegraphics[width=0.45\textwidth]{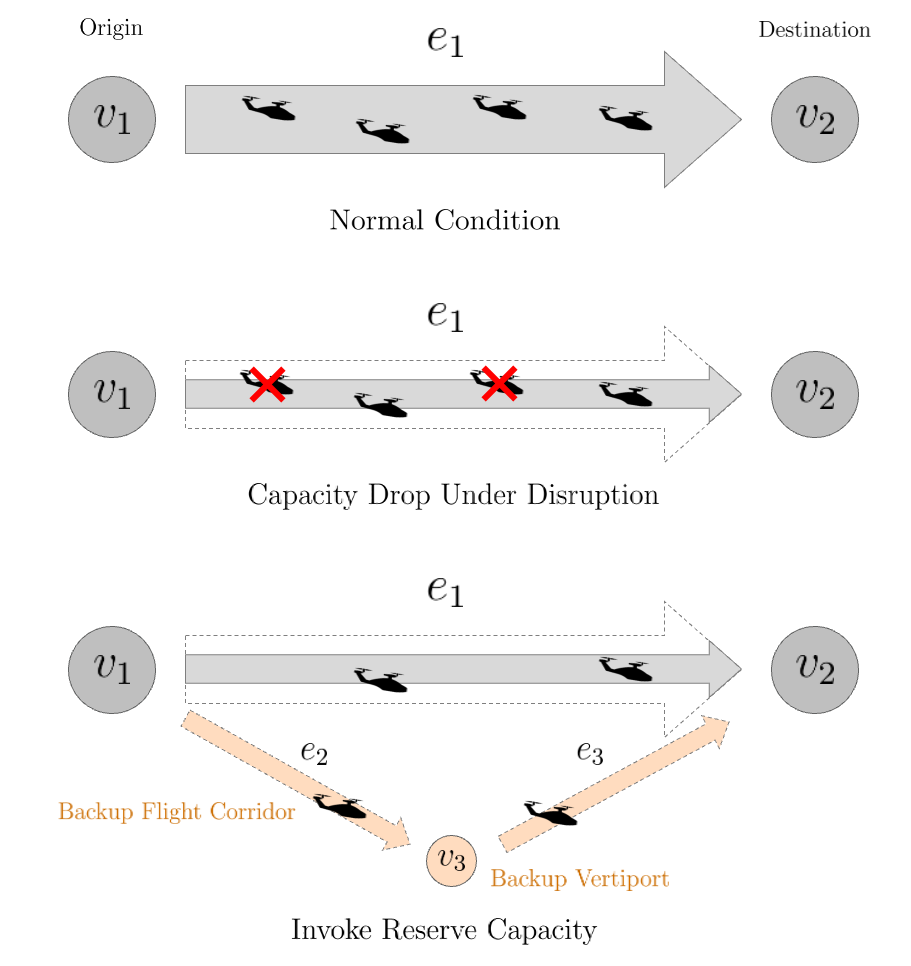}
	\caption{The concept of utilizing reserve capacity for operations under disruption. The three graphs demonstrate the network with an origin-destination pair under normal condition (top), disrupted condition with overflow (middle), and disrupted condition with reserve capacity (bottom). }
	\label{fig:Concept}
\end{figure}

We propose the design of an air mobility network with a small amount of reserve capacity to better manage overflow and thus mitigate the impact of disruption over the entire network. The core idea is to provide alternative flight routes and landing options to reallocate aircraft in case of service disruptions and emergencies. Figure~\ref{fig:Concept} uses a simple example to illustrate the design idea. In Figure~\ref{fig:Concept}, the top figure illustrates an origin-destination (O-D) pair under normal condition; the middle figure illustrates the scenario when the capacity of the flight corridor drops to half the original capacity due to disruption; the bottom figure illustrates how the additional reserve capacity accommodates the overflow traffic during disruption, utilizing a backup vertiport and two backup flight corridors. Because an air transportation network is a node-based spatial network~\citep{yoo2016airtrans}, at the kernel of this risk-aware air mobility network design problem is a network design problem (NDP) considering two types of nodes with permanent infrastructures---the larger ordinary vertiports, and the smaller backup vertiports. The design plans for the optimal locations and capacities of the backup vertiports to maximize the expected throughput of the network, considering all possible disruption scenarios, budget constraints, and cost-effective analysis, where we define the throughput of an air transportation network as the maximal amount of flights that the network can accommodate per unit of time. Overall, we summarize our four primary contributions as follows:
\begin{enumerate}
    \item \textit{Proposing a novel concept of air transportation network design with reserve capacity.} To the best of our knowledge, the concept of air transportation network with both regular and reserve functionalities has not emerged in the current literature. In addition to a regular air mobility network design driven by passenger demand and operational efficiency, we allocate limited resources (budget) and place a small amount of less-utilized capacity in reserve throughout the network. Such reserve capacity provides alternative landing sites and flight corridors for the network to adapt to disruptions. Figure~\ref{fig:Demos} displays an illustration of the design idea on a real-world UAM network.
    \item \textit{Modeling a risk-aware air mobility network with reserve capacity.} We consider potential disruptions to an air mobility network by characterizing the epistemic uncertainties---in the capacities of vertiports and travel corridors---with discrete probability distributions. We then model a extended (supplementary) network for operations under disruptions, where we enable the additional reserve capacity by constructing backup vertiports on a set of candidate locations.
    \item \textit{Formulating, simplifying, and solving a bi-level optimization problem that optimally plan backup vertiports under limited budget.} To maximize the extended network's expected throughput, we formulate a bi-level optimization problem for selecting the best locations and capacities of backup vertiports under budget constraints and cost-benefit trade-off, and show that we can find the optimal solution by solving a mixed-integer programming problem with bi-linear constraints. We then further simplify the problem into a mixed-integer linear program (MILP) with a manageable size of variables.
    \item \textit{Demonstrating the proposed design and its benefits via three representative use cases.} We apply our proposed approach on UAM network designs for Milwaukee, Atlanta, and Dallas--Fort Worth metropolitan areas, which have different network scales and topologies. The three networks contains 7, 11, and 15 ordinary vertiports, 12, 58, and 64 ordinary flight corridors, as well as 24, 26, and 45 candidate backup nodes, respectively. We generate design outcomes at different budget and cost-benefit valuation levels, comprehensively evaluate the design outcomes through visualizations and quantitative metrics, and discuss insights obtained from the results.
\end{enumerate}

\begin{figure}[htbp]
	\centering
        \includegraphics[width=0.3\textwidth]{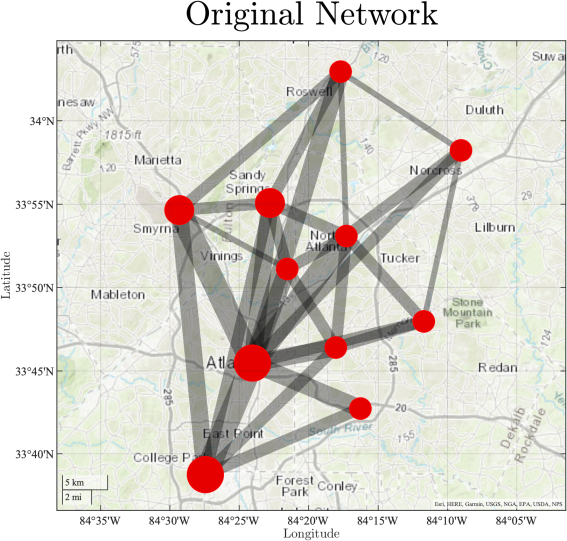}
        \hspace*{1.45cm}
        \includegraphics[width=0.3\textwidth]{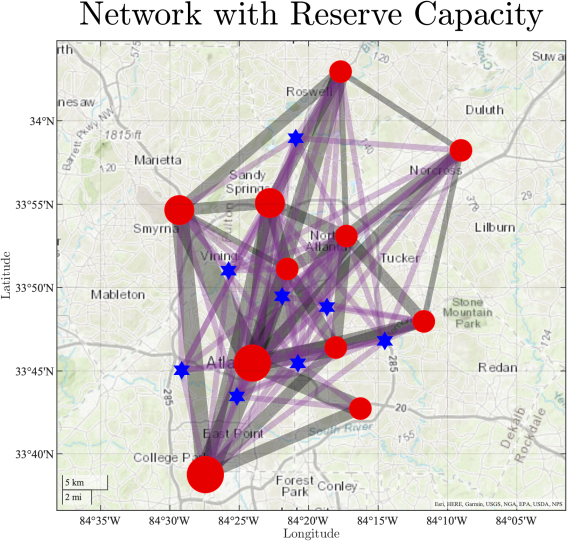}
	\caption{Illustrations of the original UAM network (left) and network with reserve capacity -- alternative flight routes and landing sites (right).}
	\label{fig:Demos}
\end{figure}

We come up with the first concept of air transportation network with both ordinary and backup functionalities and consider it a crucial step to the design of risk-aware air transportation networks outside of UAM, such as other detailed branches of advanced air mobility (AAM). The method is particularly useful in improving the redundancy of a demand-driven air transportation network. The remainder of the paper is organized as follows. Section~\ref{sec:litreview} reviews literature in three relevant streams and identifies research gap. Section~\ref{sec:network} introduces the two air transportation network models. Section~\ref{sec:selection} formulates and solves the mathematical program to find the optimal network designs. Section~\ref{sec:casestudy} applies the proposed approach to three representative use cases and evaluates the results. Section~\ref{sec:remarks} discusses the limitations and extensions of the study before Section~\ref{sec:conclusion} concludes the paper.

\section{Background and Literature Review}\label{sec:litreview}

This work is related to three streams of literature in the domains of transportation and aeronautical engineering: (1) network redundancy and capacity, (2) network design problem, and (3) air transportation and UAM network. This section serves to review the key concepts and recent progresses of these three research areas and identify the research gaps to be addressed.

\subsection{Network Redundancy and Capacity}

Network redundancy and capacity are two key concepts in this work. Research efforts to quantify the redundancy and capacity of a network, as well as design of networks with improved redundancy and capacity have been active in the past two decades among the ground transportation communities. Many related concepts in the literature are designed to quantify a network or system's flexibility~\citep{morlok2004measuring,chen2011modeling}, i.e., the ability of a network to adapt to external changes, while maintaining satisfactory system performance. External changes are uncontrolled factors that affect the system, including changes in level of demand or use, shifts in spatial traffic patterns, infrastructure loss and degradation, etc. System performance is characterized by parameters such as level of service, maintainability, and profitability. Overall, network redundancy refers to the extent to which elements or systems of a network are substitutable~\citep{xu2018redundancy}. In the context of UAM network, redundancy is the existence of alternative landing sites and flight corridors between origins and destinations that can serve the same purpose and result in less serious consequences in case of a degradation in some part of the system. In supply chain design problems, it can also refer to under-utilized capacity/resource in reserve to be used in case of disruption~\citep{jansuwan2021redundancy,pavlov2019optimization}. When quantifying network redundancy, both network capacity and travel alternative diversity should be taken into account~\citep{jansuwan2021redundancy}.

Network capacity is the maximum admittable traffic flow that can be accommodated by the network, without violating the the system’s resources (fleet, terminal and link capacities, node capacities~\citep{dui2022maintenance}, etc.) under network equilibrium~\citep{zhang2012enhancing}. It can also be interpreted as the maximum throughput of the network at which flows on the bottleneck links just reach their capacity~\citep{yang1998paradox,kerner2016maximization}, but no queues develop behind the bottlenecks. Network capacity can be computed as the largest possible sum of O-D flows that can be handled by the network and can be estimated by the MAXCAP model, formulated as maximize $\sum_{r=1}^R x_r$ subject to limitations of link capacity, node capacity, available fleet, conservation of flow, and routing options available in the system, where $x_r$ is the traffic volume (flights/hr) on O-D pair $r$. Network reserve capacity~\citep{chen1999capacity,morlok2004measuring,lucker2019roles} is another important concept in the literature. It measures the difference between the current traffic and the maximum traffic of a transportation network. It is defined as the largest multiplier $\mu$ applied to an existing O-D demand matrix that can be allocated to a transportation network in a user-optimal way without violating the link capacities or exceeding a pre-specified volume to capacity ratio (level of service). The problem is written as maximize $\mu$ subject to $\nu_a (\mu \boldsymbol{q}) \leq C_a, \forall a \in A$, where $\nu_a (\mu \boldsymbol{q})$ is the equilibrium flow on link a with the demands of all O-D pairs being uniformly scaled by $\mu$ times the based O-D demands $\boldsymbol{q}$ and $C_a$ is the capacity of link $a$. In~\citep{chen2011modeling}, $\nu_a (\mu \boldsymbol{q})$ is obtained by solving a user equilibrium problem. Given the network topology, known link capacities, and O-D demands, this problem is deterministic. However, stochasticity could also exist in sources such as link capacity~\citep{kato2023estimation} and demand~\citep{bayram2023hub}.

Travel alternative diversity refers to the number of effective paths between a specific O-D pair~\citep{xu2018redundancy}. In a more strict criterion, an alternative path does not share any of the links or nodes except for the origin and destination nodes. Some literature~\citep{millerhooks2012} mention network resilience as a function of the number of reliable paths between all O-D pairs. A reliable path must also meet a given level of service (e.g., travel time, distance)~\citep{morlok2004measuring}. For example, in~\citep{kerner2016maximization}, the alternative routes are defined as possible routes from origin $O_i$ and destination $D_j$ for which the maximum difference between travel times in free flow conditions does not exceed a given threshold value $\Delta T_{ij}^{(\text{th})}$.

\subsection{Network Design Problem}


Network design problem (NDP) has a long history in the transportation domain~\citep{magnanti1984network}. Our problem of air transportation network design with redundancy in functionality has a similar formulation with the capacity expansion problem in the literature~\citep{mathew2009expansion,liu2021estimation}. In this formulation, a regular UAM network, which is a completely business-driven design for on-demand passenger service, is prepared as a base network. It is designed in accordance with urban structure, demand forecast, ground transportation and infrastructure, etc. Its node and link capacities are restricted by factors such as geographic conditions, safety considerations, and societal impacts. The capacity expansion problem starts with the identification of a set of candidate locations for backup nodes, which are selected in accordance with geographic restrictions, infrastructure, and societal impacts~\citep{gao2023noise}, and have much smaller capacities than the regular nodes. Because building and maintaining such reserve infrastructures bring additional costs, this capacity expansion problem becomes finding optimal network expansion policies by choosing optimal decision variables (for facility location and resource allocation) in terms of capacity enhancement values under budget constraints and/or cost-benefit analysis.

Such an NDP is often formulated as a bi-level optimization problem in the literature. The upper-level sub-problem optimizes the network design objective such as network capacity, reserve capacity, and total system cost, and the decision variables are dependent on the actual problem. The lower-level sub-problem determines link flow through equilibrium conditions, which can be obtained by route choice behavior, traffic equilibrium model,  user equilibrium (e.g., game theory), and traffic routing and assignment. In the design of ground transportation network, drivers' route choice behaviors must be explicitly considered. In some other types of networks (e.g., communication), traffic have no preference in choosing paths and can be assigned to take paths that lead to maximum flow in the network. In most cases, such bi-level formulations of NDP are nonconvex and non-differentiable.

Network design under uncertainty~\citep{sumalee2009evaluation,govindan2020integrated,arani2021lateral,xiang2022robust} considers stochastic demand, supply, and/or network capacity in the design problem. The most commonly used criterion in network design under uncertainty is the mean performance (expectation). In general, expectation-based strategies are risk-neutral and tend to perform well in the long run in a repetitive environment~\citep{huang2022optimal}. Robust network design not only maximizes the expected future system performance but also minimizes the variance of system performance~\citep{unnikrishnan2009freight}. Robust approaches are introduced to handle decision making under environments of extreme uncertainty and focuses on the worst-case scenario, which is usually more conservative than techniques focusing on expectation. Common formulations for network design under uncertainty include bi-level stochastic program~\citep{unnikrishnan2009freight,liu2009twostage,millerhooks2012}, expected value model, mean-variance model, chance-constrained model, and minimax model~\citep{chen2011review}.

\subsection{Air Transportation and UAM Network}


As an air transportation system, the nature of UAM network is node-based spatial network with node capacity~\citep{yoo2016airtrans}. Such a network has three main features: (1) spatial network: the nodes are located in a space equipped with a metric; (2) node-based network: the cost of link connection is relatively low; and (3) functional network: each node has a capacity to function. The idea of strategic network capacity expansion can be found in some ground transportation literature, where optimal decisions are made in the NDP to either enhance the capacity of existing links or introduce new links~\citep{yang1998paradox,cats2015planning}. Air transportation, however, is a unique problem such that its network capacity is expanded through introducing additional nodes into the network, like a facility location problem. These backup/overflow nodes then drive alternative links to complete a network with reserve functionalities. 

Network design for UAM operations has become active especially in the past five years. Researchers from various disciplines (aerospace engineering, operations research, transportation, urban planning, etc.) have been tackling this problem from various angles. In some earlier works, \citep{holden2016uber} and \citep{rajendran2019insights} apply k-means clustering and its variant to optimize the selection of vertiport locations. More recent works have started to utilize more advanced optimization setups in UAM network design. Among some representative works in recent years, \citep{wu2021integrated} and \citep{kai2022vertiport} design UAM network with optimal vertiport planning to serve passenger demand. \citep{yu2023vertiport} determines the locations and capacities of vertiports through minimizing the traffic congestion in a hybrid air-ground transportation network. \citep{willey2021uamnetwork} identifies vertiport locations by taking into account vehicle limitations and operational strategies such as multi-leg trips. \citep{wang2023uam} investigates UAM network structure and transportation benefits for several major cities in the world. 

\subsection{Research Gap}

Through reviewing literature from the three most relevant research streams, we identify two research gaps which have not been adequately addressed up to this point. First of all, studies revolving around network redundancy, network capacity, and network expansion/enhancement have been found only in ground transportation literature. Second, most UAM network design works have focused on objectives such as passenger demand, travel time, and operational strategy, without considering redundancy for risk and contingency management. The lack of studies on similar problems in air transportation could be ascribed to two main reasons: (1) the already dense networks of commercial and general aviation networks for diversion and landing (especially in places where the aviation industry is highly developed); and (2) the unrealistically high costs of building and maintaining reserve infrastructures (airports with runway) for commercial and general aviation. UAM network, on the other hand, is a paradigm case to research and design air transportation network with redundancy because of the concrete needs and manageable costs. Building or repurposing a handful of small backup vertiports throughout the city could significantly benefit the system's safety and continuity of operation under disruptive events. This study aims to fill in these two gaps through an approach to optimally determine the number, locations, and capacities of backup vertiports to enhance the throughput and flexibility of the entire UAM network.

\section{Network Models}\label{sec:network}

This section formally introduces the network models involved in this work. We first introduce a risk-aware network model that depicts the UAM network with possible disruptions, referred to as the \emph{original network}, followed by definitions of the \emph{extended network} that supplements the original network upon disruption with additional landing options and flight corridors.

\subsection{Notations}
In the rest of the paper, we denote all real numbers as $\mathbb{R}$, positive real number as $\mathbb{R}_{>0}$, nonnegative real number as $\mathbb{R}_{\geq 0}$, natural number including $0$ as $\mathbb{N}$, and natural number without $0$ as $\mathbb{N}_{>0}$. For any matrix $A$, we use $A_+$ to denote a matrix of the same size as $A$ and while each entry of $A_+$ is the absolute value of the corresponding entry in $A$, i.e., $[A_+]_{ij}  = \lvert[A]_{ij}\rvert$. The notation $\1_{ K}$ for any $K\in\mathbb{N}_{>0}$ represents a $K \times 1$ vector of ones. The notation $\mathbf{0}^{K_1 \times K_2}$ for any $K_1, K_2 \in\mathbb{N}_{>0}$ represents a $K_1 \times K_2$ matrix of zeros. We also define $\vec{e}_{K}(i)$ for any $i, K\in\mathbb{N}_{>0}$ and $i \leq K$ as an elementary vector of length $K$ with $1$ only at the $i$'th entry and $0$ elsewhere.


\subsection{Risk-Aware Original Network Model}
\label{sec:original_model}
We model an air transportation network with a directed graph $\mc G = (\mc V, \mc E)$, where $\mc V = \{v_1,v_2, \cdots, v_{N_{\mc V}}\}$ is the set of all nodes and $\mc E= \{e_1,e_2, \cdots, e_{N_{\mc E}}\}$ is the set of all links in the network, with $N_{\mc V}$ (resp., $N_{\mc E}$ denoting the number of nodes (resp. links). Nodes are physical landing sites, also called vertiports, for UAM aircraft to land or takeoff; links are corridors of airspace connecting nodes that allow the flights to travel through. For any link $e = (v_i,v_j) \in \mc E$ where $v_i,v_j \in \mc V$, we denote $\sigma_e = v_j$ (resp., $\tau_e = v_i$) as the head (resp., tail) of $e$. 
We further define the index set $\mc J_{\mc V} = \{1,2,\cdots,{N_{\mc V}}\}$ (resp., $\mc J_{\mc E} = \{1,2,\cdots,{N_{\mc E}}\}$) for the nodes (resp., links). A route is a path in the network connecting multiple nodes along its route.

Each node $v \in \mc V$ (resp., link $e \in \mc E$) has capacity $c_v \in \mathbb{R}_{\geq 0}$ (resp., $c_e \in \mathbb{R}_{\geq 0}$). 
In real-world operations, disruption events, such as extreme weather, may lead to nondeterministic capacity degradations of vertiports and flight corridors. We therefore assume that the capacity of each link or node $ev \in \mc E \cup \mc V$, $c_{ev} \in [0,C_{ev}]$ for some positive number $C_{ev}$. A node or link $ev$ is \emph{not disturbed} if $c_{ev} = C_{ev}$, and $ev$ is \emph{disturbed} if $c_{ev} < C_{ev}$. The \emph{network is disturbed} if any of the nodes or links in the network is disturbed. 

Our model is based on the following assumptions.
\begin{assum} [Disturbed Network]\
\label{assum:disturb}
\begin{enumerate}
    \item The network is disturbed with probability $P_{dis} \in [0,1]$.
    \item Only one link or node $i \in \mc E \cup \mc V$ will be disturbed at a time.
    \item When a node or link $ev\in \mc E\cup \mc V$ is disturbed, $c_{ev} \in \{c_{ev,1}, c_{ev,2}, \cdots, c_{ev,k_{ev}}\}$ with $k_{ev} \geq 1$, where $ C_{ev} > c_{ev,1} > c_{ev,2}> \cdots > c_{ev,k_{ev}} \geq 0$ and $k_{ev}$ represents the total number of values the capacity of disturbed $ev$ may take.
    \item $Prob\{c_{ev} = c_{ev,k}\} = p_{ev,k} \in [0,P_{dis}]$ for all $k = 1, \cdots, k_{ev}$, such that $\sum_{ev \in \mc E\cup \mc V} \sum_{k = 1}^{k_{ev}} p_{ev,k} = P_{dis}$.
    \item  The distributions of the disturbed capacity $c_{ev}$ given that $ev$ is disturbed are independent for all $ev \in \mc E \cup \mc V$.
\end{enumerate}
\end{assum}

The third item in Assumption~\ref{assum:disturb} demonstrate the discretized capacity values of the disturbed links and nodes. The second and fifth items in Assumption~\ref{assum:disturb} are key assumptions in this work which are also supported by some existing works in the literature. For example, the assumption that only one link or node fails at a time can be found in~\citep{sumalee2009evaluation}. The assumption that capacity degradation of each network link or node is treated independently can be found in \citep{chen1999capacity} and \citep{peeta2010pre}. 

\begin{assum}[Traffic Flow]\
\label{assum:flow}
\begin{enumerate}
    \item The amount of flights traveling through each link per unit of time never exceeds its capacity.
    \item The total incoming and outgoing amount of flights per unit of time at each node never exceeds its capacity.
    \item The traffic is at equilibrium, so that the amount of flights entering and exiting a link per unit of time is the same.
\end{enumerate}
\end{assum}

According to Assumption~\ref{assum:disturb}, the probability that the link or node $ev \in \mc E \cup \mc V$ is disturbed can be computed as $P^{ev}_{dis} := \sum_{k = 1}^{k_{ev}} p_{ev,k}$, and thus the probability that $ev$ is not disturbed is $Prob\{c_{ev} = C_{ev}\} = 1-P^{ev}_{dis}$. We therefore let $c_{ev,0} = C_{ev}$ and $p_{ev,0} = 1-P^{ev}_{dis}$.
For every $ev \in \mc E \cup \mc V$, we denote the vector of all possible capacities $\vec{C}^{ev}= \{c_{ev,k}\}_{k=0}^{k_{ev}}$, and the vector of all corresponding probabilities $\vec{P}^{ev}= \{p_{ev,k}\}_{k=0}^{k_{ev}}$. We let tuples $C^{\mc E} = (\vec{C}^{e_1},\vec{C}^{e_2},\cdots,\vec{C}^{e_{N_{\mc E}}})$, $C^{\mc V} = (\vec{C}^{v_1},\vec{C}^{v_2},\cdots,\vec{C}^{v_{N_{\mc V}}})$, $P^{\mc E} = (\vec{P}^{e_1},\vec{P}^{e_2},\cdots,\vec{P}^{e_{N_{\mc E}}})$, and $P^{\mc V} = (\vec{P}^{v_1},\vec{P}^{v_2},\cdots,\vec{P}^{v_{N_{\mc V}}})$ represent the possible capacities and corresponding probabilities for all links and nodes.

We now define the risk-aware air transportation network.

\begin{definition}[Risk-Aware Air Transportation Network]\ 
\label{def:network_ori}
A risk-aware air transportation network $\mc N$ is a tuple $\mc N = (\mc G,P_{dis},C^{\mc E},C^{\mc V},P^{\mc E},P^{\mc V})$, where $\mc G = (\mc V, \mc E),P_{dis},C^{\mc E},C^{\mc V},P^{\mc E},P^{\mc V}$ are the network graph, the network disturbed probability, the set of all possible node and link capacities, and their corresponding probabilities as defined above.
\end{definition}

We now define the vector of link and node capacities in a disturbed network.
We let $C^{\mc E}(e_i,k)$ represent the vector of link capacities when a link $e_i$ is disturbed and the corresponding capacity drops to $c_{e_i,k}$ for some $i \in \mc J_{\mc E}$ and $k = 1, \cdots, k_{e_i}$. Hence $C^{\mc E}(e_i,k)$ is a vector of length $N_{\mc E}$ such that $[C^{\mc E}(e_i,k)]_l = C_{e_l}$ for all $l \neq i$ and $[C^{\mc E}(e_i,k)]_i = c_{e_i,k}$. 
Similarly, we let $C^{\mc V}(v_j,k)$ represent the vector of node capacities when a node $v_i$ is disturbed and the corresponding capacity drops to $c_{v_j,k}$ for some $j \in \mc J_{\mc V}$ and $k = 1, \cdots, k_{v_j}$. Hence $C^{\mc V}(v_j,k)$ is a vector of length $N_{\mc V}$ such that $[C^{\mc V}(v_j,k)]_l = C_{v_l}$  for all $l \neq j$ and $[C^{\mc V}(v_j,k)]_j = c_{v_j,k}$.
We also define the vectors of undisturbed link and node capacities, $C^{\mc E}_0 = \{C_{e_i}\}_{i = 1}^{N_{\mc E}}$ and $C^{\mc V}_0 = \{C_{v_j}\}_{j = 1}^{N_{\mc V}}$. For any $k$, if a node $v \in \mc V$ is disturbed, we let $C^{\mc E}(v,k) = C^{\mc E}_0$, if a link $e \in \mc E$ is disturbed, we let $C^{\mc V}(e,k) = C^{\mc V}_0$.

\begin{definition}[Momentary Air Transportation Network]\
    \label{def:moment_ATN}
A \emph{Momentary Air Transportation Network} is a tuple $\mc N^M = (\mc G,\vec{C}_1,\vec{C}_2)$, where $\mc G= (\mc V, \mc E)$ is the network graph as defined above, $\vec{C}_1 \in \mathbb{R}_{\geq 0}^{N_\mc E}$ and $\vec{C}_2 \in \mathbb{R}_{\geq 0}^{N_\mc V}$. A momentary air transportation network $\mc N^M = (\mc G,\vec{C}_1,\vec{C}_2)$ is a disturbed or undisturbed \emph{Moment} of an air transportation network $\mc N = (\mc G,P_{dis},C^{\mc E},C^{\mc V},P^{\mc E},P^{\mc V})$ where $\mc G = (\mc V, \mc E)$ if $\vec{C}_1,\vec{C}_2$ are the undisturbed or disturbed vector of capacities for the links and nodes, i.e., 
\begin{enumerate}
    \item $\mc N^M$ is a \emph{disturbed moment} of $\mc N$ if $\vec{C}_1 = C^{\mc E}(ev,k)$ and $\vec{C}_2 = C^{\mc V}(ev,k)$ for some $ev \in \mc V \cup \mc E$ and $k = 1, \cdots, k_{ev}$. Formally, we let $\mc N(ev,k):= (\mc G,C^{\mc E}(ev,k),C^{\mc V}(ev,k))$;
    \item  $\mc N^M$ is an \emph{undisturbed moment} of $\mc N$ if $\vec{C}_1 = C^{\mc E}_0$ and $\vec{C}_2 =C^{\mc V}_0$. Formally, we let $\mc N(0,0) = (\mc G,C^{\mc E}_0,C^{\mc V}_0)$,
\end{enumerate}
where $C^{\mc E}_0, C^{\mc V}_0, C^{\mc E}(ev,k)$, and $C^{\mc V}(ev,k)$ are the vectors of undisturbed and disturbed link and node capacities defined as above.
\end{definition}


Consider an air transportation network $\mc N$ with graph $(\mc V, \mc E)$, and a disturbed or undisturbed moment of it $\mc N(i,k)$ defined as in Definition~\ref{def:moment_ATN}, where $i=0$ and $k =0$, or $i \in \mc E \cup \mc V$ and $k = 1, \cdots, k_i$ we then introduce the concepts of incidence matrices, demand, flow matrix, and throughput.


\underline{Incidence matrix}: We let $E$ represent the $N_{\mc V} \times N_{\mc E}$ incidence matrix of the graph, that is, for all $i \in \mc J_{\mc V}$ and $j \in \mc J_{\mc E}$,
\begin{align}
    [E]_{ij} = \begin{cases} -1 &\text{ if } \tau_{e_j} = v_i\\
                            1 &\text{ if } \sigma_{e_j} = v_i\\
                            0 &\text{ otherwise} \,. 
    \end{cases}
\end{align}

\underline{Demands:} We associate a demand---a request of traveling through the network---with an O-D pair $(o,d)$, where $o\in \mc V$ (resp., $d\in \mc V$) is the origin (resp., destination) of the trip and $o \neq d$. We then denote the set of all demands as $S = \{s_\ell = (o_\ell,d_\ell)\}_{\ell =1}^{N_S}$, where $N_S$ is the number of O-D pairs with travel demand. We let $\mc J_S = \{1,2,\cdots, N_S\}$ represent the index set of the demands.

To formulate the demands into the matrix form, we let $\Delta_1$ (resp., $\Delta_2$) be the $N_\mc V \times N_S$ destination (resp., origin) indicating matrix such that 
\begin{align}
    [\Delta_1]_{i\ell} = \begin{cases} 1 &\text{ if } d_\ell= v_i\\
                            0 &\text{ otherwise} \,. 
    \end{cases}
\end{align}
\begin{align}
    [\Delta_2]_{i\ell} = \begin{cases} -1 &\text{ if } o_\ell = v_i\\
                            0 &\text{ otherwise} \,,
    \end{cases}
\end{align}

We denote the amount of demand the momentary network $\mc N(i,k)$ will fulfill for any O-D pair $(o_\ell,d_\ell)$ per unit of time as $n_{\ell}(i,k)$ for all $\ell \in \mc J_S$.
We then represent the amount of flights with all different destinations (resp., origins) with the $N_\mc V \times N_S$ matrices $D_1(i,k)$ and $D_2(i,k)$, such that $[D_1(i,k)]_{j\ell} = [\Delta_1]_{j\ell}\cdot n_\ell(i,k)$ and $[D_2(i,k)]_{j\ell} = [\Delta_2]_{j\ell}\cdot n_\ell(i,k)$. Therefore, the total number of fulfilled demands in the network can be computed as
\begin{equation}
\label{eq:n2D}
    \sum_{\ell =1}^{N_S} n_{\ell}(i,k)= \1^T_{ N_S} D_1(i,k)^T \1_{ N_\mc V}\, ,
\end{equation}
with non-negative $n_{\ell}(i,k)$ for all $\ell \in \mc J_S$, and $D_1(i,k), D_2(i,k), \Delta_1,\Delta_2$ must satisfy the constraints below: 
\begin{align}
\label{eq:D_constr}
\begin{split}    
    & D_1(i,k)^T \1_{N_\mc V} = -D_2(i,k)^T \1_{ N_\mc V}\\
    & D_1(i,k) \leq  \Delta_1 M ,
    \quad  -D_2(i,k) \leq  -\Delta_2 M, 
    \quad D_1(i,k) \geq 0,
    \quad  D_2(i,k) \leq 0 \,,
\end{split}
\end{align}
where $M$ can be any large enough positive real number.




\underline{Flow matrix}: 
We let $X(i,k)$ be a $N_{\mc E} \times N_S$ flow matrix for all links and demands, where each entry $[X(i,k)]_{j\ell}$ represents the amount of flights assigned to demand $s_{\ell}$ traveling through the link $e_j$ per unit of time for all $j\in \mc J_{\mc E}$ and $\ell \in \mc J_S$. According to the flow conservation law, $X(i,k)$ must satisfies
\begin{equation}
\label{eq:flow_cons}
    EX(i,k) = D_1(i,k)+D_2(i,k), 
    \quad X(i,k) \geq 0
\end{equation}

\begin{example}
\label{ex:ori}
\begin{figure}
    \centering    
    \begin{tikzpicture}
    
    \node[draw, circle] (1) at (0, 0) {$v_1$};
    \node[text width=1cm] at (0.3,0.6) {$10$};
    
    \node[draw, circle] (2) at (2, 1) {$v_2$};
    \node[text width=1cm] at (2.3,1.6) {$15$};

    \node[draw, circle] (3) at (2, -1) {$v_3$};
    \node[text width=1cm] at (2.4,-0.4) {$15$};

    \node[draw, circle] (4) at (4, 0) {$v_4$};
    \node[text width=1cm] at (4.3,0.6) {$10$};

    \draw[->] (1) --  node[above] {$e_1$} node[below] {$8$}(2);
    \draw[->] (1) --  node[below] {$e_2$} node[above] {$4$}(3);
    \draw[->] (2) --  node[above] {$e_3$} node[below] {$4$}(4);
    \draw[->] (3) --  node[below] {$e_4$} node[above] {$8$}(4);
    \end{tikzpicture}
    \caption{A notional air transportation network with 4 nodes and 4 links, with undisturbed capacities labeled beside the corresponding node or link.}
    \label{fig:ex_ori_network}
\end{figure}
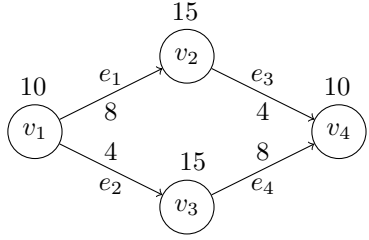

We use Fig.~\ref{fig:ex_ori_network} to illustrate our concepts of the air transportation network. We have the set of nodes $\mc V = \{v_1, v_2, v_3, v_4\}$, the set of links $\mc E = \{e_1, e_2, e_3, e_4\}$, and the undisturbed capacities of all nodes and links labeled in the figure.
We can hence obtain $E$, $C^\mc V_0$, and $C^\mc E_0$ directly,
\begin{equation*}
E =  \begin{bmatrix}
  -1 & -1 &  0 &  0  \\
   1 &  0 & -1 &  0  \\
   0 &  1 &  0 & -1 \\
   0 &  0 &  1 &  1  
\end{bmatrix} , \quad
C^\mc V_0 =  \begin{bmatrix}
   10\\
   15 \\
   15 \\
   10
\end{bmatrix} \, \quad
C^\mc E_0 =  \begin{bmatrix}
   8\\
   4 \\
   4 \\
   8
\end{bmatrix}
\end{equation*}

We then provide a set of O-D pairs with demand, $S = \{s_1 = (v_1,v_2), s_2 = (v_1,v_4), s_3 = (v_3,v_4)\}$, so that
\begin{equation*}
\Delta_1 =  \begin{bmatrix}
   0 & 0 & 0 \\
   1 & 0 & 0 \\
   0 & 0 & 0 \\
   0 & 1 & 1
\end{bmatrix}, \quad
\Delta_2 =  \begin{bmatrix}
   -1 & -1 & 0 \\
    0 &  0 &  0 \\
    0 &  0 & -1 \\
    0 &  0 & 0 
\end{bmatrix}.
\end{equation*}

We further construct a series of possible vectors of all possible disturbed capacities for all nodes and links, and their corresponding probability vectors as follows:
\begin{equation*}
\vec{C}^{v_1} = \vec{C}^{v_4} = \begin{bmatrix}
   10\\
    5 \\
   0
\end{bmatrix} , \quad
\vec{P}^{v_1} = \vec{P}^{v_4} =  \begin{bmatrix}
   0.9\\
   0.05 \\
   0.05
\end{bmatrix} ,\quad
\vec{C}^{v_2} =  \vec{C}^{v_3} =  \begin{bmatrix}
   15\\
   10\\
   5
\end{bmatrix} , \quad
\vec{P}^{v_2} = \vec{P}^{v_3} = \begin{bmatrix}
   0.85\\
   0.1\\
   0.05\\
\end{bmatrix}\,, 
\end{equation*}
\begin{equation*}
\vec{C}^{e_1} = \vec{C}^{e_4} = \begin{bmatrix}
   8\\
   4 \\
   0
\end{bmatrix} , \quad
\vec{P}^{e_1} = \vec{P}^{e_4} =  \begin{bmatrix}
   0.9\\
   0.05 \\
   0.05
\end{bmatrix} ,\quad
\vec{C}^{e_2} =  \vec{C}^{e_3} =  \begin{bmatrix}
   4\\
   2
\end{bmatrix} , \quad
\vec{P}^{e_2} = \vec{P}^{e_3} = \begin{bmatrix}
   0.95\\
   0.05\\
\end{bmatrix}\,.
\end{equation*}
Implicitly, $P_{dis} = 0.8$. We then provide three possible flow matrices that conforms the node and link capacity constraints mentioned in Assumption~\ref{assum:flow} in the scenarios when the network is undisturbed, when node $v_4$ is disturbed such that $c_{v_4} = c_{v_4,1} = 5$, and when link $e_3$ is disturbed such that $c_{e_3} = c_{e_3,1} = 2$ as below,
\begin{equation*}
X(0,0) =   \begin{bmatrix}
    5 & 3 & 0 \\
    0 & 2 & 0 \\
    0 & 3 & 0 \\
    0 & 2 & 5 
\end{bmatrix}, \quad
X(v_4, 1) =   \begin{bmatrix}
    5 & 3 & 0 \\
    0 & 1 & 0 \\
    0 & 3 & 0 \\
    0 & 1 & 1 
\end{bmatrix},\quad
X(e_3,1) =  \begin{bmatrix}
    5 & 2 & 0 \\
    0 & 3 & 0 \\
    0 & 2 & 0 \\
    0 & 3 & 5 
\end{bmatrix} \,.
\end{equation*}
According to~\eqref{eq:D_constr}--\eqref{eq:flow_cons}, we can infer that
\begin{equation*}
D_1(0,0) =   \begin{bmatrix}
    0 & 0 & 0 \\
    5 & 0 & 0 \\
    0 & 0 & 0 \\
    0 & 5 & 5 
\end{bmatrix}, \quad
D_1(v_4, 1) =   \begin{bmatrix}
    0 & 0 & 0 \\
    5 & 0 & 0 \\
    0 & 0 & 0 \\
    0 & 4 & 1 
\end{bmatrix},\quad
D_1(e_3,1) =  \begin{bmatrix}
    0 & 0 & 0 \\
    5 & 0 & 0 \\
    0 & 0 & 0 \\
    0 & 5 & 5 
\end{bmatrix} \,.
\end{equation*}
\end{example}

We then define the throughput of a momentary network as the maximal total amount of flights that the network can accommodate per unit of time while satisfying~\eqref{eq:n2D}--\eqref{eq:flow_cons} and the Assumption~\ref{assum:flow}. 
\begin{definition}[Throughput]
Given a risk-aware air transportation network $\mc N  =(\mc G,P_{dis},C^{\mc E},C^{\mc V},P^{\mc E},P^{\mc V})$, where $\mc G = (\mc V, \mc E)$, we let $i = \{0\}\cup \mc E \cup \mc V$, and let $k = 0$ if $i=0$, $k =1,\cdots, k_{i}$ if $ i \in \mc E \cup \mc V$. As a result, $\mc N(i,k)$ is a disturbed moment of $\mc N$ if $i \in \mc E \cup \mc V$ and $ k = 1,\cdots, k_{i}$, an undisturbed moment of $\mc N$ if $i =0$ and $ k = 0$.

The \emph{throughput} of the momentary air transportation network, $\mc N(i,k)= (\mc G,C^{\mc E}(i,k),C^{\mc V}(i,k))$, denoted as $S^{\ast}(\mc N(i,k))$, is the maximal amount of demands that the network can fulfill per unit of time while satisfying flow reservation law and the node and link capacity constraints, i.e., 
\begin{align}
 \label{eq:tp_mat}   
 \begin{split}
 S^{\ast}(\mc N(i,k)) &= \max_{X(i,k),D_1(i,k),D_2(i,k)}  \1^T_{ N_S} D_1(i,k)^T \1_{ N_\mc V}\\
    \text{s.t. }\quad  &EX(i,k) = D_1(i,k)+D_2(i,k),
    \quad X(i,k) \geq 0\\
    & X(i,k)\1_{ N_S} \leq C^{\mc E}(i,k),
    \quad  E_+ X(i,k)\1_{ N_S} \leq C^{\mc V}(i,k)\\
    & D_1(i,k)^T \1_{N_\mc V} = -D_2(i,k)^T \1_{ N_\mc V}\\
    & D_1(i,k) \leq  \Delta_1 M ,
    \quad -D_2(i,k) \leq  -\Delta_2 M, 
    \quad D_1(i,k) \geq 0 ,
    \quad D_2(i,k) \leq 0 \,.
 \end{split}
\end{align}
We claim $S^{\ast}(\mc N(0,0))$ as the \emph{undisturbed throughput} of $\mc N$ and $S^{\ast}(\mc N(ev,k))$ as the \emph{disturbed throughput} of $\mc N$ with moment $\mc N(ev,k)$ for any $ev \in \mc E \cup \mc V$ and $k =1,\cdots, k_{ev}$.
\end{definition}

The objective function of~\eqref{eq:tp_mat} is the largest possible sum of fulfilled O-D demands in the network, as computed in~\eqref{eq:n2D}. The constraints depict the flow conservation law as in~\eqref{eq:flow_cons}, the node capacity constraints, the link capacity constraints, and the constraints for demand matrices in~\eqref{eq:D_constr}.

As the linear program~\eqref{eq:tp_mat} is feasible and bounded, an optimal solution always exists. The theorem below then provides a necessary and sufficient condition to the optimal solution of~\eqref{eq:tp_mat}.
\begin{thm}
\label{thm:LP_sol}
$X(i,k), D_1(i,k), D_2(i,k)$ is an optimal solution of~\eqref{eq:tp_mat} if and only if there exists $U_1(i,k)\in \mathbb{R}^{{N_\mc V} \times {N_S}}$, $U_2(i,k)\in \mathbb{R}^{{N_\mc E} \times {N_S}}$, $U_3(i,k), U_4(i,k), U_5(i,k), U_6(i,k)\in \mathbb{R}^{{ N_\mc V} \times {N_S}}$, $\vec{h}_1(i,k)\in \mathbb{R}^{N_\mc E}$, $\vec{h}_2(i,k) \in \mathbb{R}^{N_\mc V}$, and $\vec{h}_3(i,k)\in \mathbb{R}^{N_S}$ such that the following constraints are held:
\begin{align}
\label{eq:primal_1}    & EX(i,k) = D_1(i,k)+D_2(i,k), \quad  
    X(i,k) \geq 0,\quad
    X(i,k)\1_{N_S} \leq C^{\mc E}(i,k),\quad 
    E_+ X\1_{N_S} \leq C^{\mc V}(i,k)\\
\label{eq:primal_2}    & D_1(i,k)^T \1_{ N_\mc V} = -D_2(i,k)^T \1_{ N_\mc V},  
    D_1(i,k) \leq  \Delta_1 M, 
    -D_2(i,k) \leq  -\Delta_2 M, 
    D_1(i,k) \geq 0, 
    D_2(i,k) \leq 0 \\
\label{eq:dual_1}    & E^T U_1(i,k) - U_2(i,k) + \big(\vec{h}_1(i,k) + {E_+}^T \vec{h}_2(i,k)\big)\1^T_{ N_S} =0\\
\label{eq:dual_2}    & -\1_{ N_\mc V}\1^T_{ N_S} + \1_{ N_\mc V}\vec{h}_3(i,k)^T - U_1(i,k) + U_3(i,k) - U_5(i,k) =0\\
\label{eq:dual_3}    & \1_{N_\mc V}\vec{h}_3(i,k)^T - U_1(i,k) - U_4(i,k) + U_6(i,k) =0\\
\label{eq:dual_4}    & U_2(i,k),\cdots, U_6(i,k) \geq 0, \quad \vec{h}_1(i,k), \vec{h}_2(i,k) \geq 0 \\
\label{eq:zero_duality_gap}    & \1^T_{N_S} D_1(i,k)^T \1_{ N_\mc V} = \vec{h}_1(i,k)^T C^{\mc E} +\vec{h}_2(i,k)^T C^{\mc V} + tr(U_3(i,k)^T \Delta_1 M) - tr(U_4(i,k)^T \Delta_2 M)\,.
\end{align}
\end{thm}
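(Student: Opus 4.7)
The linear program in~\eqref{eq:tp_mat} is a standard (bounded, feasible) linear program, so I plan to prove the claim by invoking strong LP duality. Specifically, the conditions~\eqref{eq:primal_1}--\eqref{eq:zero_duality_gap} are precisely the KKT conditions of~\eqref{eq:tp_mat}: primal feasibility (lines~\eqref{eq:primal_1}--\eqref{eq:primal_2}), dual feasibility or stationarity (lines~\eqref{eq:dual_1}--\eqref{eq:dual_4}), and zero duality gap (line~\eqref{eq:zero_duality_gap}). For an LP these conditions are both necessary and sufficient for optimality of a primal feasible point, so once I identify them as KKT conditions the theorem follows.

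The first step is to attach one Lagrange multiplier to each primal constraint of~\eqref{eq:tp_mat}: $U_1$ (free) for $EX=D_1+D_2$; $U_2\geq 0$ for $X\geq 0$; $\vec{h}_1\geq 0$ for $X\1_{N_S}\leq C^{\mc E}$; $\vec{h}_2\geq 0$ for $E_+X\1_{N_S}\leq C^{\mc V}$; $\vec{h}_3$ (free) for $D_1^T\1_{N_\mc V}+D_2^T\1_{N_\mc V}=0$; $U_3\geq 0$ for $D_1\leq \Delta_1 M$; $U_4\geq 0$ for $-D_2\leq -\Delta_2 M$; $U_5\geq 0$ for $-D_1\leq 0$; and $U_6\geq 0$ for $D_2\leq 0$. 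The sign pattern of these multipliers is exactly~\eqref{eq:dual_4}, so dual feasibility is accounted for. I would then write out the Lagrangian
\begin{align*}
    L &= \1_{N_\mc V}^T D_1 \1_{N_S} - \mathrm{tr}\bigl(U_1^T(EX-D_1-D_2)\bigr) + \mathrm{tr}(U_2^T X)\\
    &\quad - \vec{h}_1^T(X\1_{N_S}-C^{\mc E}) - \vec{h}_2^T(E_+X\1_{N_S}-C^{\mc V}) - \vec{h}_3^T(D_1^T\1_{N_\mc V}+D_2^T\1_{N_\mc V})\\
    &\quad - \mathrm{tr}\bigl(U_3^T(D_1-\Delta_1 M)\bigr) - \mathrm{tr}\bigl(U_4^T(-D_2+\Delta_2 M)\bigr) + \mathrm{tr}(U_5^T D_1) - \mathrm{tr}(U_6^T D_2).
\end{align*}

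The bulk of the technical work will be the matrix calculus of computing $\nabla_X L=0$, $\nabla_{D_1}L=0$, and $\nabla_{D_2}L=0$ to recover~\eqref{eq:dual_1},~\eqref{eq:dual_2}, and~\eqref{eq:dual_3} respectively. The derivative with respect to $X$ uses $\nabla_X \mathrm{tr}(U_1^TEX)=E^TU_1$ and $\nabla_X(\vec{h}_2^T E_+X\1_{N_S})=E_+^T\vec{h}_2\1_{N_S}^T$, which yields~\eqref{eq:dual_1} directly. For $\nabla_{D_1}L$ and $\nabla_{D_2}L$, the only subtle piece is the term $\vec{h}_3^T D_a^T \1_{N_\mc V}=\mathrm{tr}(\1_{N_\mc V}\vec{h}_3^T D_a^T)$ for $a=1,2$, whose gradient is $\1_{N_\mc V}\vec{h}_3^T$; careful bookkeeping of signs then reproduces~\eqref{eq:dual_2}--\eqref{eq:dual_3}. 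I expect this matrix calculus (particularly the $E_+$ term and the $\vec{h}_3$ coupling) to be the main obstacle, but it is routine once set up.

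Finally, the dual LP has objective $\vec{h}_1^T C^{\mc E}+\vec{h}_2^T C^{\mc V}+\mathrm{tr}(U_3^T\Delta_1 M)-\mathrm{tr}(U_4^T\Delta_2 M)$ (the remaining constant terms of $L$ after the stationarity conditions eliminate the variable terms). Because~\eqref{eq:tp_mat} is feasible and bounded, strong LP duality guarantees a primal--dual pair with equal objective values, i.e., the condition~\eqref{eq:zero_duality_gap}. For the forward (necessity) direction, an optimal $(X,D_1,D_2)$ together with the dual optimizer provided by strong duality satisfies all listed conditions. For the converse (sufficiency), primal feasibility~\eqref{eq:primal_1}--\eqref{eq:primal_2} together with dual feasibility~\eqref{eq:dual_1}--\eqref{eq:dual_4} and matching objectives~\eqref{eq:zero_duality_gap} force $(X,D_1,D_2)$ to be primal optimal by weak duality. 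This closes both directions and completes the proof.
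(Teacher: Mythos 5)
Your proposal is correct and follows essentially the same route as the paper: assign the same multipliers to the same constraints, form the Lagrangian, derive the stationarity conditions~\eqref{eq:dual_1}--\eqref{eq:dual_3} by matrix calculus, and combine primal/dual feasibility with the zero duality gap (strong duality one way, weak duality the other) to characterize optimality. The only differences are cosmetic sign conventions in the Lagrangian and your slightly more explicit statement of the two directions of the equivalence.
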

The proof of Theorem~\ref{thm:LP_sol} is presented in~\ref{appendix:LP_solution_pf}.



In the rest of the paper, we refer to the risk-aware air transportation network defined in Definition~\ref{def:network_ori} as the \emph{original network}, and all nodes and links in the original network as the original nodes and links.

\subsection{Extended Network Model}
\label{sec:extend_model}
The capacity constraints in~\eqref{eq:tp_mat} implies that, disruptions to the network can sometimes lead to reduced throughput and hence degraded performance of the network. Reserve capacity can manage the overflow demands and traffic in the original network during disruptions to certain extent. There are several potential forms of reserve capacity in the network, such as additional under-utilized space and infrastructure and reserved ``cushion'' space in the original network for contingency management. Here we focus on the addition of reserve infrastructure to the network by constructing a set of backup nodes/vertiports with relatively small capacities. We use such backup nodes and associated backup links for overflow traffic management when the original network is disturbed.


To figure out where to optimally construct the backup vertiports, we postulate the existence of a set of backup locations (nodes) and their resulting new flight corridors (links). We then need to decide where to construct the backup vertiports among the candidate locations and their corresponding capacities.

We first introduce some basic concepts associated with the extended network and reserve capacity: backup nodes and links, and extended graph. 

\noindent\underline{Backup Nodes and Links:}
We denote the set of all available backup nodes $\mc V^b = \{v_{N_\mc V+1}, \cdots,v_{N_\mc V+N_{\mc V^b}}\}$ and the set of all available backup links $\mc E^b = \{v_{N_\mc E+1}, \cdots,v_{N_\mc E+N_{\mc E^b}}\}$ for some $N_{\mc V^b}, N_{\mc E^b} \in \mathbb{N}_{>0}$. Further, we let the index sets be $\mc J_{\mc V^b} = \{N_\mc V+1, \cdots, N_\mc V+N_{\mc V^b}\}$ and $\mc J_{\mc E^b} = \{N_\mc E+1, \cdots, N_\mc E+N_{\mc E^b}\}$.

We denote the capacity of a candidate backup node $v \in \mc V^b$ as $C_{v} \geq 0$, where $C_{v} = 0$ implies that candidate location $v$ has no backup vertiport and $C_{v} > 0$ implies that candidate $v$ has a vertiport with capacity $C_{v}$. We denote the capacity of a backup link $e \in \mc E^b$ as $C_e$. The vector of capacity for all backup nodes (resp., links) is therefore $C^{\mc V^b} = \{C_{v_i}\}_{i\in \mc J_{\mc V^b}}$ (resp., $C^{\mc E^b} = \{C_{e_i}\}_{i\in \mc J_{\mc E^b}}$). Moreover, we make the following assumptions on the backup nodes and links.

\begin{assum}[Backup Nodes and Links]\
\label{assum:backup}
\begin{enumerate}
    \item For each backup node $v \in\mc V^b$, there exists $v_0 \in \mc V$, such that $(v,v_0),(v_0,v) \in \mc E^b$.
    \item For each backup link $e = (\tau_e, \sigma_e) \in \mc E^b$, $(\sigma_e,\tau_e) \in \mc E^b$, only one node among $\tau_e, \sigma_e$ is a backup node, i.e., either $\tau_e \in\mc V$, $\sigma_e \in \mc V^b$, or $\tau_e \in\mc V^b$, $\sigma_e \in \mc V$.
    \item The capacity of a backup link aligns with and is always equal to the (only) backup node it connects to, i.e., for all $e\in \mc E^b$, $C_e = C_{\sigma_e}$ if ${\sigma_e} \in \mc V^b$ and $C_e = C_{\tau_e}$ if ${\tau_e} \in \mc V^b$.
    \item A backup link or node $ev \in \mc E^b \cup \mc V^b$ will never be disturbed, so that its capacity is always $C_{ev}$.
\end{enumerate}
\end{assum}

The approaches for identifying qualified candidate options for backup nodes and links vary according to the environment and infrastructure requirements and alternative travel route standards. The approach we adopt in the case study will be discussed in detail in Section~\ref{sec:casestudy}.

\noindent\underline{Extended Graph:}
We then let the set of all nodes be $\mc V^t = \mc V \cup \mc V^b$ and the set of all links be $\mc E^t = \mc E \cup \mc E^b$. 
As a result, the total number of nodes and links are $N_{\mc V^t} = N_\mc V+N_{\mc V^b}$ and $N_{\mc E^t} = N_\mc E+N_{\mc E^b}$, and the index set for nodes and links in the extended model are $\mc J_{\mc V^t} = \mc J_{\mc V}\cup \mc J_{\mc V^b}$ and $\mc J_{\mc E^t} = \mc J_{\mc E}\cup \mc J_{\mc E^b}$. We define the graph that contains all original and backup nodes and links, $\mc G^t = (\mc V^t, \mc E^t)$, as an \emph{extended graph} of an original network with graph $(\mc V, \mc E)$.

The backup nodes and links are solely for temporary usage, and the flights are allowed to travel through the backup options only if they are affected by the disruption on the original network. We therefore define the extended network based on a disturbed moment of network as follows.
\begin{definition} [Extended Air Transportation Network]\
    \label{def:extended_network}
Given an air transportation network $\mc N =(\mc G,P_{dis},\allowbreak C^{\mc E}, C^{\mc V},P^{\mc E},P^{\mc V})$, where $\mc G = (\mc V, \mc E)$, 
and a disturbed moment of $\mc N$, $\mc N(ev,k) = (\mc G, \vec{C}^{\mc E}(ev,k),\vec{C}^{\mc V}(ev,k))$, where $ev \in \mc V \cup \mc E$ and $k=1,\cdots,k_{ev}$.
An \emph{extended air transportation network} of $\mc N(ev,k)$ is a tuple $\mc N^{ext}(ev,k) = (\mc N(ev,k), \mc V^b,  \mc E^b,C^{\mc V_b})$, where $\mc V^b,  \mc E^b$ and $C^{\mc V_b}$ are the sets of backup nodes and links, and the capacity vector of backup nodes defined as above. 
\end{definition}

Since the flights traveling through an undisturbed moment of the network will not use the backup nodes and links, we further assign $\mc N^{ext}(0,0) := \mc N(0,0)$.

\begin{assum}[Rerouting under Disruption]\
\label{assum:reroute}
Given an original air transportation network $\mc N$ with network graph $\mc G = (\mc V, \mc E)$, the disturbed network $\mc N(ev,k)= (\mc G, \vec{C}^{\mc E}(ev,k),\vec{C}^{\mc V}(ev,k))$, and the extended network $\mc N^{ext}(ev,k) = (\mc N(ev,k), \mc V^b, \mc E^b,C^{\mc V_b})$, the flight routing through the network must follow the rules below.

    \begin{enumerate}
    \item If $e \in \mc E$ is disturbed, in addition to the links and nodes in the original network $\mc N$, the affected flights may travel through a backup node $v \in \mc V^b$ in close proximity and its associated connecting links $(\tau_e, v), (v,\sigma_e) \in \mc E^b$, and $(\tau_e, v), (v,\sigma_e)$ qualifies to provide an alternative path for $e$ only if the ratio $r = (d_{\tau_e, v} + d_{v,\sigma_e})/d_{\tau_e, \sigma_e}$ is within $[\underline{d}_1, \overline{d}_2]$ for some $\underline{d}_1, \overline{d}_2 \in \mathbb{R}_{>0}$ and $1<\underline{d}_1 \leq \overline{d}_2 $, where $d_{\cdot,\cdot}$ represents the Euclidean distance between the physical locations of the two vertiports. 
    \item If $v \in \mc V$ is disturbed, in addition to the links and nodes in the original network $\mc N$, the affected flights may travel through a nearby backup node $v' \in \mc V^b$ that satisfies $(v, v') \in \mc E^b$, and the connecting links $(v, v'), (v', v)$. 
    \item If $v \in \mc V$ is disturbed, the flights for O-D pair $(v,v_1)$ for any $v_1 \neq v \in \mc V$ may take off from any backup node $v' \in \mc V^b$ that satisfies $(v, v') \in \mc E^b$ instead of $v$.
    \item If $v \in \mc V$ is disturbed, the flights for O-D pair $(v_2,v)$ for any $v_2 \neq v \in \mc V$ may land at any backup node $v' \in \mc V^b$ that satisfies $(v, v') \in \mc E^b$ instead of $v$.
    \item If the network is not disturbed, then the flights will not travel through any backup node or link.
\end{enumerate}

\end{assum}

Next, we consider the throughput of the extended networks $\mc N^{ext}(ev,k) = (\mc N(ev,k), \mc V^b,  \mc E^b,C^{\mc V_b})$, where $\mc N(ev,k) = (\mc G,C^{\mc E}(ev,k),C^{\mc V}(ev,k))$ and $\mc G = (\mc V, \mc E)$. 
Similar to that in the original case in Section~\ref{sec:original_model}, we first introduce the incidence matrices, demand indicating matrices, and capacity matrices specifically for the extended network model. We will discuss the cases when the disturbed element $ev$ is a link or a node separately, i.e., when $ev = e^a\in \mc E$ or when $ev = v^a \in \mc V$. 

\noindent \underline{Case 1: A Link is Disturbed ($ev = e^a\in \mc E$).}
When a link $e^a \in \mc E^b$ is disturbed such that $c_{e^a} = c_{e^a,k}$ for any $k=1,\cdots,k_{e^a}$, we define an $N_{\mc V^t} \times N_{\mc E^t}$ matrix $E'(e^a)$ such that
\begin{align}
\label{eq:E_e}
[E'(e^a)]_{ij} = \begin{cases}  [E]_{ij} &\text{ if } i \in \mc J_{\mc V}, j \in \mc J_{\mc E}\\
                                -1 &\text{ if }  j \in \mc J_{\mc E^b}, v_i = \tau_{e^a} = \tau_{e_j}\\
                                1 &\text{ if }  j \in \mc J_{\mc E^b}, \tau_{e_j} = \tau_{e^a}, v_i = \sigma_{e_j}\\
                             1 &\text{ if } j \in \mc J_{\mc E^b}, v_i =\sigma_{e^a} = \sigma_{e_j}\\
                             -1 &\text{ if } j \in \mc J_{\mc E^b}, \sigma_{e_j} =\sigma_{e^a}, v_i = \tau_{e_j} \\
                             0 &\text{ otherwise} \,.
    \end{cases}
\end{align}
$E'(e^a)$ defined in~\eqref{eq:E_e} is a reformulated incidence matrix of the extended graph $\mc G^t$ when $e^a$ is disturbed, where we only preserve the columns for links used for rerouting according to Assumption~\ref{assum:reroute} in addition to those in the original graph, while leaving the rest of columns zero. 
We then divide $E(e^a)$ into the block matrices $E$, $\mathbf{0}^{{N_{\mc V^b}} \times {N_\mc E }}$, $E^{b,1}(e^a) \in \mathbb{R}^{{N_{\mc V}} \times {N_{\mc E^b} }}$, and $E^{b,2}(e^a)\in \mathbb{R}^{{N_{\mc V^b}} \times {N_{\mc E^b} }}$ such that
\begin{equation}
\label{eq:E_e_block}   
E'(e^a) =
\left(
  \begin{array}{c|c}
   E & E^{b,1}(e^a) \\
  \hline
  \mathbf{0}^{{N_{\mc V^b}} \times {N_\mc E }} & E^{b,2}(e^a)
\end{array} \right) \,.
\end{equation}


According to Assumption~\ref{assum:reroute}, one can compute the capacity vector for all rerouting backup links flights may travel through when $e^a$ is disturbed, denoted as $C^{\mc E^b}(e^a)$, with the capacity vector for all backup nodes $C^{\mc V^b}$ and $E^{b,2}(e^a)$ defined in~\eqref{eq:E_e_block}, such that 
\begin{equation}
\label{eq:cap_e_add}
    C^{\mc E^b}(e^a) = \Big[{ E^{b,2}(e^a)_+} \odot \big(C^{\mc V^b} \1^T_{N_{\mc E^b} } \big)\Big]^T \cdot \1_{N_{\mc V^b}} \,.
\end{equation}
We then construct the capacity vectors for all nodes and links, $C^{\mc V^t}(e^a)\in \mathbb{R}^{N_{\mc V^t}}$ and $C^{\mc E^t}(e^a)\in \mathbb{R}^{N_{\mc E^t}}$, by concatenating the capacity vectors of the original nodes and links and those of the backups:
\begin{equation}
    \label{eq:cap_e_ext}
    C^{\mc V^t}(e^a,k) =         
\left(
  \begin{array}{c}
  C^{\mc V}_0  \\
  \hline
  C^{\mc V^b} 
\end{array} \right), \quad
C^{\mc E^t}(e^a,k) =\left(
  \begin{array}{c}
  C^{\mc E}(e^a,k)  \\
  \hline
  C^{\mc E^b}(e^a) 
\end{array} \right) \,.
\end{equation}

The destination and origin indicating matrices, $\Delta_1'(e^a)$ and $\Delta_2'(e^a)$, are now matrices of size $N_{\mc V^t}  \times {N_S} $, such that
\begin{equation}
\label{eq:Delta_e}   
\Delta_1'(e^a) = \left(
  \begin{array}{c}
  \Delta_1  \\
  \hline
  \mathbf{0}^{N_{\mc V^b} \times {N_S} } 
\end{array} \right),
\Delta_2'(e^a) = \left(
  \begin{array}{c}
  \Delta_2  \\
  \hline
  \mathbf{0}^{{N_{\mc V^b}} \times {N_S} } 
\end{array} \right)
\end{equation}
We denote the amount of demand that the extended network $\mc N^{ext}(e^a,k)$ will fulfill for any O-D pair $s_{\ell} \in S$ per unit of time as $n'_{\ell}(e^a,k) \geq 0 $ for all $\ell \in \mc J_S$.
We then define $D_1'(e^a,k), D_2'(e^a,k) \in \mathbb{R}^{N_{\mc V^t} \times N_S}$ such that $[D_1'(e^a,k)]_{i\ell} = [\Delta_1'(e^a)]_{i\ell}\cdot n_\ell'(e^a,k)$ and $[D_2'(e^a,k)]_{i\ell} = [\Delta_2'(e^a)]_{i\ell}\cdot n_\ell'(e^a,k)$, and hence
\begin{equation}
\label{eq:n2D_e}
    \sum_{\ell =1}^{N_S} n_{\ell}'(e^a,k)= \1^T_{ N_S} D_1'(e^a,k)^T \1_{ N_\mc V}\, ,
\end{equation}
while given some large enough real number $M$, $D_1'(e^a,k)$, $D_2'(e^a,k)$, $\Delta_1'(e^a)$, and $\Delta_2'(e^a)$ must satisfy
\begin{align}
\label{eq:D_constr_e}
\begin{split}    
    & D_1'(e^a,k)^T \1_{N_{\mc V^t}} = -D_2'(e^a,k)^T \1_{ N_{\mc V^t}},\\
    & D_1'(e^a,k) \leq  \Delta_1'(e^a) M,
    \quad  -D_2'(e^a,k) \leq  -\Delta_2'(e^a) M, 
    \quad D_1'(e^a,k) \geq 0,
    \quad  D_2'(e^a,k) \leq 0 \,.
\end{split}
\end{align}

We let $X(e^a,k) \in \mathbb{R}^{N_{\mc E^t} \times N_S}$ be the flow matrix for all links and demands, with each entry indicating the amount of flights assigned to any demand traveling through any link per unit of time. Following the flow conservation law, we require
\begin{equation}
\label{eq:flow_cons_e}
    E'(e^a)X'(e^a,k) = D_1'(e^a,k)+D_2'(e^a,k), 
    \quad X'(e^a,k) \geq 0 \,.
\end{equation}

\noindent \underline{Case 2: Node is Disturbed ($ev = v^a \in \mc V$).}
When a node $v^a\in \mc V$ is disturbed, and $c_{v^a} = c_{v^a,k}$ for any $k=1,\cdots,k_{v^a}$, any backup node connected directly to $v^a$ -- the backup node connected to $v^a$ by a single backup link -- works as a complement of $v^a$ according to Assumption~\ref{assum:reroute}. As a result, the extended network, $\mc N^{ext}(v^a,k)$, works the same as adding additional capacity to the disturbed moment of network, $\mc N(v^a,k)$, yielding the same throughput. 

The additional capacity is in fact the sum of capacities of all backup nodes that the flights may reroute to when $v^a$ is disturbed. To compute this sum, we first define $\Delta^{adj}$ as an $N_\mc V \times  N_{\mc V^b}$ matrix indicating all backup nodes adjacent to any original node in $\mc V$, that is, 
\begin{align}
    [\Delta^{adj}]_{i,j} = \begin{cases}  1 &\text{ if } (v_i,v_{N_\mc V+j}) \in \mc E^b\\
                            0 &\text{ otherwise }\,.              
    \end{cases}
\end{align}
The additional capacity that the node $v^a$ obtains from the adjacent nodes, denoted as $C^{\mc V^b}(v^a) \in \mathbb{R}^{N_{\mc V}}$, can then be computed as
\begin{equation}
\label{eq:cap_v_add}
    C^{\mc V^b}(v^a) = \vec{e}_{N_\mc V}(i) \odot (\Delta^{adj} \cdot C^{\mc V^b} ) \,,
\end{equation}  
where $i \in \mc J_{\mc V}$ and $v^a = v_i$, indicating that only the disturbed node may get the additional capacity.

We then assign the incidence matrix, destination and origin indicating matrices, the capacity vectors for all nodes and links as follows, 
\begin{align}
\begin{split}
\label{eq:ext_v}    
 & E'(v^a) = E, \quad \Delta_1'(v^a) = \Delta_1, \quad  \Delta_2'(v^a) = \Delta_2, \\
 & C^{\mc E^t}(v^a,k) = C^{\mc E}_0\\
 & C^{\mc V^t}(v^a,k) = C^{\mc V}(v^a,k) + C^{\mc V^b}(v^a) \,.
\end{split}
\end{align}

Mirroring that in case 1, we denote the amount of demand the extended network $\mc N^{ext}(v^a,k)$ will fulfill for any O-D pair $s_{\ell} \in S$ per unit of time as $n'_{\ell}(v^a,k) \geq 0 $ for all $\ell \in \mc J_S$ and define $D_1'(v^a,k), D_2'(v^a,k) \in \mathbb{R}^{N_{\mc V^t} \times N_S}$ such that $[D_1'(v^a,k)]_{i\ell} = [\Delta_1'(v^a)]_{i\ell}\cdot n_\ell'(v^a,k)$ and $[D_2'(v^a,k)]_{i\ell} = [\Delta_2'(v^a)]_{i\ell}\cdot n_\ell'(v^a,k)$, and hence
\begin{equation}
\label{eq:n2D_v}
    \sum_{\ell =1}^{N_S} n_{\ell}'(v^a,k)= \1^T_{ N_S} D_1'(v^a,k)^T \1_{ N_\mc V}\, ,
\end{equation}
while given some large enough real number $M$, $D_1'(v^a,k)$, $D_2'(v^a,k)$, $\Delta_1'(v^a)$, and $\Delta_2'(v^a)$ must satisfy
\begin{align}
\label{eq:D_constr_v}
\begin{split}    
    & D_1'(v^a,k)^T \1_{N_{\mc V}} = -D_2'(v^a,k)^T \1_{ N_{\mc V}},\\
    & D_1'(v^a,k) \leq  \Delta_1'(v^a) M,
    \quad  -D_2'(v^a,k) \leq  -\Delta_2'(v^a) M, 
    \quad D_1'(v^a,k) \geq 0,
    \quad  D_2'(v^a,k) \leq 0 \,.
\end{split}
\end{align}

Again, we let $X'(v^a,k) \in \mathbb{R}^{N_{\mc E} \times N_S}$ be the flow matrix for all links and demands for the extended network, with each entry indicating the amount of flights assigned to any demand traveling through any link per unit of time. The flow conservation law requires that
\begin{equation}
\label{eq:flow_cons_v}
    E'(v^a)X'(v^a,k) = D_1'(v^a,k)+D_2'(v^a,k), 
    \quad X'(v^a,k) \geq 0 \,.
\end{equation}
Notice that, although we omit the extended links in this formulation of the extended network when $v^a$ is disturbed, we are able to infer the amount of flows that travel through the backup links from the flow matrix $X'(v^a,k)$, while such mapping is not unique.

\begin{example}
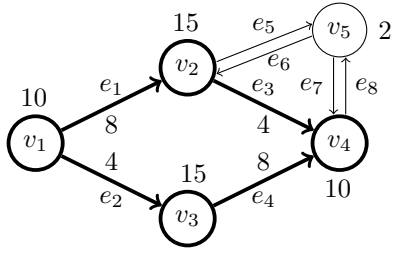
\begin{figure}
    \centering    
    \begin{tikzpicture}
    
    \node[draw, line width=0.5mm, circle] (1) at (0, 0) {$v_1$};
    \node[text width=1cm] at (0.3,0.6) {$10$};
    
    \node[draw,line width=0.5mm, circle] (2) at (2, 1) {$v_2$};
    \node[text width=1cm] at (2.3,1.6) {$15$};

    \node[draw,line width=0.5mm, circle] (3) at (2, -1) {$v_3$};
    \node[text width=1cm] at (2.4,-0.4) {$15$};

    \node[draw,line width=0.5mm, circle] (4) at (4, 0) {$v_4$};
    \node[text width=1cm] at (4.3,-0.6) {$10$};

    \node[draw, circle] (5) at (4, 1.5) {$v_5$};
    \node at (4.6,1.5) {$2$};


    \draw[->, line width=0.5mm] (1) --  node[above] {$e_1$} node[below] {$8$}(2);
    \draw[->, line width=0.5mm] (1) --  node[below] {$e_2$} node[above] {$4$}(3);
    \draw[->, line width=0.5mm] (2) --  node[above] {$e_3$} node[below] {$4$}(4);
    \draw[->, line width=0.5mm] (3) --  node[below] {$e_4$} node[above] {$8$}(4);
    \draw[->] ([yshift = 0.8mm]2.east) -- node[above] {$e_5$} ([yshift = 0.8mm]5.west);
    \draw[->] ([yshift = -0.8mm]5.west) -- node[below, xshift=2mm, yshift=1.3mm] {$e_6$} ([yshift = -0.8mm]2.east);

    \draw[->] ([xshift = -0.8mm]5.south) -- node[left] {$e_7$} ([xshift = -0.8mm]4.north);
    \draw[->] ([xshift = 0.8mm]4.north) -- node[right] {$e_8$} ([xshift = 0.8mm]5.south);
    \end{tikzpicture}
    \caption{The notional original air transportation network (bolded) with additional 1 backup node and 4 backup links.}
    \label{fig:ex_ext_network}
\end{figure}

We illustrate our concepts in the extended air transportation network model with Fig.~\ref{fig:ex_ext_network}, which adds a potential backup node ${v_5}$ to the original risk-aware network as in Fig.~\ref{fig:ex_ori_network} with $C_{v_5}=2$. We assume that $\mc V$, $\mc E$, $S$, $C^\mc V$, $C^\mc E$, $P^\mc V$, and $P^\mc E$ remains the same as those in the original network in Example~\ref{ex:ori}. In the extended network we have $\mc V^b = \{v_5\}$ and $\mc E^b = \{e_5,e_6,e_7,e_8\}$.

When node $v_4$ is disturbed such that $c_{v_4} = c_{v_4,1} = 5$, the flights are now allowed to travel through the node $v_5$ and the links $e_7, e_8$, and hence we have $E'(v_4) = E$, $\Delta_1'(v_4) = \Delta_1$, $\Delta_2'(v_4) = \Delta_2$, $C^{\mc E^t}(v_4,1) = C^{\mc E}_0$, $C^{\mc V^t}(v_4,1) = [10,15,15,7]^T$ according to~\eqref{eq:cap_v_add}--\eqref{eq:ext_v}.

When link $e_3$ is disturbed such that $c_{e_3} = c_{e_3,1} = 2$, the flights are now allowed to travel through the node $v_5$ and the links $e_5,e_7$, and according to~\eqref{eq:E_e}--\eqref{eq:Delta_e}, we can compute
\begin{equation*}
E'(e_3) =  \begin{bmatrix}
  -1 & -1 &  0 &  0 & 0 & 0 & 0 & 0  \\
   1 &  0 & -1 &  0 & -1 & 0 & 0 & 0  \\
   0 &  1 &  0 & -1 & 0 & 0 & 0 & 0  \\
   0 &  0 &  1 &  1 & 0 & 0 & 1 & 0  \\
   0 &  0 &  0 &  0 & 1 & 0 & -1 & 0  
\end{bmatrix},  \quad
\Delta_1'(e_3) =  \begin{bmatrix}
   0 & 0 & 0 \\
   1 & 0 & 0 \\
   0 & 0 & 0 \\
   0 & 1 & 1 \\
   0 & 0 & 0 
\end{bmatrix}, \quad
\Delta_2'(e_3) =  \begin{bmatrix}
   -1 & -1 & 0 \\
    0 &  0 &  0 \\
    0 &  0 & -1 \\
    0 &  0 & 0 \\
   0 & 0 & 0 
\end{bmatrix}\,,
\end{equation*}

\begin{equation*}
C^{\mc E^t}(e_3,1) = [8,4,2,8,2,2,2,2]^T, \quad  C^{\mc V^t}(e_3,1) = [10,15,15,10,2]^T \,.
\end{equation*}

We then provide two possible flow matrices and demand matrices that conforms to constraints~\eqref{eq:D_constr_e}--\eqref{eq:flow_cons_e}, \eqref{eq:D_constr_v}--\eqref{eq:flow_cons_v}, when node $v_4$ is disturbed and when link $e_3$ is disturbed as below,
\begin{equation*}
X'(v_4, 1) =   \begin{bmatrix}
    5 & 0 & 0 & 0\\
    3 & 2 & 3 & 2 \\
    0 & 0 & 0 & 2
\end{bmatrix}^T, \quad
X'(e_3,1) =  \begin{bmatrix}
    5 & 0 & 0 & 0 & 0 & 0 & 0 & 0\\
    3 & 3 & 2 & 3 & 1 & 0 & 1 & 0\\
    0 & 0 & 0 & 4 & 0 & 0 & 0 & 0
\end{bmatrix}^T \,.
\end{equation*}

\begin{equation*}
D_1'(v_4, 1) =   \begin{bmatrix}
    0 & 0 & 0 \\
    5 & 0 & 0 \\
    0 & 0 & 0 \\
    0 & 5 & 2 
\end{bmatrix},\quad
D_1'(e_3,1) =  \begin{bmatrix}
    0 & 0 & 0 \\
    5 & 0 & 0 \\
    0 & 0 & 0 \\
    0 & 6 & 4 \\
    0 & 0 & 0 
\end{bmatrix} \,.
\end{equation*}

\end{example}



Similar to the throughput computation for the disturbed or undisturbed moment of the original network in~\eqref{eq:tp_mat}, we follow the capacity and flow reservation constraints, and hence the throughput of $\mc N^{ext}(ev,k)$ can be computed with the optimization formulation below.
\begin{align}
 \label{eq:tp_ext}   
 \begin{split}
 S^{\ast}(\mc N^{ext}(ev,k))& = \max_{X'(ev,k),D_1'(ev,k),D_2'(ev,k)}  \1^T_{ N_S} D_1'(ev,k)^T \1_{ N_{\mc V,ev}}\\
    \text{s.t. }\quad  &E(ev)'X(ev,k)' = D_1'(ev,k)+D_2'(ev,k),
    \quad X'(ev,k) \geq 0\\
    & X'(ev,k)\1_{N_S} \leq C^{\mc E^t}(ev,k),
    \quad  E(ev)_+ X(ev,k) \1_{N_S} \leq C^{\mc V^t}(ev,k),\\
    & D_1'(ev,k)^T \1_{N_{\mc V,ev}} = -D_2'(ev,k)^T \1_{ N_{\mc V,ev}}, \\
    & D_1'(ev,k) \leq  \Delta_1'(ev) M ,
    \quad -D_2'(ev,k) \leq  -\Delta_2'(ev) M, 
    \quad D_1'(ev,k) \geq 0 ,
    \quad D_2'(ev,k) \leq 0 \,,
 \end{split}
\end{align}
where we let $\1_{ N_{\mc V,ev}} =\1_{ N_{\mc V^t}}$, $X'(ev,k) \in \mathbb{R}^{N_{\mc E^t} \times N_S}$, $D_1'(ev,k), D_1'(ev,k) \in \mathbb{R}^{N_{\mc V^t} \times N_S}$ if $ ev \in \mc E$; $\1_{ N_{\mc V,ev}} =\1_{ N_{\mc V}}$, $X'(ev,k) \in \mathbb{R}^{N_\mc E \times N_S}$, $D_1(ev,k), D_1(ev,k) \in \mathbb{R}^{N_\mc V \times N_S}$ if $ ev \in \mc V$. Moreover, since we let $\mc N^{ext}(0,0) = \mc N(0,0)$, then $S^{\ast}(\mc N^{ext}(0,0)) = S^{\ast}(\mc N(0,0))$.

To avoid cumbersome notation, we drop the notation $(ev,k)$ when $ev$ and $k$ are clear from context.

The following proposition then characterizes the optimal solution of the linear program~\eqref{eq:tp_ext} in the same fashion as in Theorem~\ref{thm:LP_sol}.
\begin{prop}
\label{thm:LP_sol_ext}
$X'(ev,k), D_1'(ev,k), D_2'(ev,k)$ is an optimal solution of~\eqref{eq:tp_ext} if and only if there exists $u(ev,k)= \big( X'(ev,k),D_1'(ev,k), D_2'(ev,k),U_1'(ev,k),\dots, U_6'(ev,k), \vec{h}'_1(ev,k), \vec{h}'_2(ev,k),\vec{h}'_3(ev,k) \big)$ where 
\begin{enumerate}
    \item if $ev \in \mc E$, $U_1'(ev,k)\in \mathbb{R}^{{N_{\mc V^t}} \times {N_S}}$, $U_2'(ev,k)\in \mathbb{R}^{{N_{\mc E^t}} \times {N_S}}$, $U_3'(ev,k),\cdots, U_6'(ev,k)\in \mathbb{R}^{{N_{\mc V^t}} \times {N_S}}$, $\vec{h}'_1(ev,k)\in \mathbb{R}^{N_{\mc E^b}}$, $\vec{h}_2'(ev,k) \in \mathbb{R}^{N_{\mc V^t}}$, and $\vec{h}'_3(ev,k)\in \mathbb{R}^{N_S}$;
    \item if $ev \in \mc V$, $U_1'(ev,k)\in \mathbb{R}^{{N_\mc V} \times {N_S}}$, $U_2'(ev,k)\in \mathbb{R}^{{N_\mc E} \times {N_S}}$, $U_3'(ev,k),\cdots, U_6'(ev,k)\in \mathbb{R}^{{ N_\mc V} \times {N_S}}$, $\vec{h}_1'(ev,k)\in \mathbb{R}^{N_\mc E}$, $\vec{h}_2'(ev,k) \in \mathbb{R}^{N_\mc V}$, and $\vec{h}_3'(ev,k)\in \mathbb{R}^{N_S}$,
\end{enumerate}
such that the following conditions are held:
\begin{align}
\label{eq:LP_ext}
\begin{split}    
&  E'(ev)X'(ev,k) = D_1'(ev,k)+D_2'(ev,k), \\
&   X'(ev,k) \geq 0,\quad
    X'(ev,k)\1_{N_S} \leq C^{\mc E^t}(ev,k),\quad 
     E'(ev)_+ X'(ev,k)\1_{ N_S} \leq C^{\mc V^t}(ev,k),\\
&   D_1'(ev,k)^T \1_{N_{\mc V^t}} = -D_2'(ev,k)^T \1_{ N_{\mc V^t}} \quad \text{ if } ev \in \mc E,\\
&   D_1'(ev,k)^T \1_{N_\mc V} = -D_2'(ev,k)^T \1_{ N_\mc V} \quad \text{ if } ev \in \mc V, \\
&    D_1'(ev,k) \leq  \Delta_1'(ev) M, \quad
    -D_2'(ev,k) \leq  -\Delta_2'(ev) M, \quad
    D_1'(ev,k) \geq 0, \quad
    D_2'(ev,k) \leq 0, \\
    & E'(ev)^T U_1'(ev,k) - U_2'(ev,k) + (\vec{h}'_1(ev,k) + { E'(ev)_+}^T \vec{h}'_2(ev,k))\1^T_{N_S} =0,\\
    & -\1_{ N_{\mc V^t}}\1^T_{N_S} + \1_{N_{\mc V^t}}\vec{h}'_3(ev,k)^T - U_1'(ev,k) + U_3'(ev,k) - U_5'(ev,k) =0 \quad \forall ev \in \mc E,\\
    & -\1_{  N_{\mc V}}\1^T_{N_S} + \1_{  N_{\mc V}}\vec{h}'_3(ev,k)^T - U_1'(ev,k) + U_3'(ev,k) - U_5'(ev,k) =0 \quad \forall ev \in \mc V,\\
    & \1_{  N_{\mc V^t}}\vec{h}'_3(ev,k)^T - U_1'(ev,k) - U_4'(ev,k) + U_6'(ev,k) =0 \qquad \forall ev \in \mc E,\\
    & \1_{  N_{\mc V}}\vec{h}'_3(ev,k)^T - U_1'(ev,k) - U_4'(ev,k) + U_6'(ev,k) =0 \qquad \forall ev \in \mc V,\\
    & U_2'(ev,k),U_3'(ev,k),U_4'(ev,k), U_5'(ev,k), U_6'(ev,k) \geq 0, \quad \vec{h}'_1(ev,k), \vec{h}'_2(ev,k) \geq 0, \\
\end{split}
\end{align}
and 
\begin{align}
\begin{split}    
\label{eq:zero_duality_gap2}    
&\1^T_{N_S} D_1'(ev,k)^T \1_{ N_{\mc V^t}} = \vec{h}'_1(ev,k)^T C^{\mc E^t}(ev,k) +\vec{h}'_2(ev,k)^T C^{\mc V^t}(ev,k) + tr(U_3(ev,k)^T \Delta_1'(ev) M)\\
& \hspace{20em}- tr(U_4'(ev,k)^T \Delta_2'(ev) M) \qquad \forall ev \in \mc E,\\
&\1^T_{N_S} D_1'(ev,k)^T \1_{N_{\mc V}} = \vec{h}'_1(ev,k)^T C^{\mc E^t}(ev,k) +\vec{h}'_2(ev,k)^T C^{\mc V^t}(ev,k) + tr(U'_3(ev,k)^T \Delta_1'(ev) M) \\
& \hspace{20em} - tr(U_4'(ev,k)^T \Delta_2'(ev) M) \qquad \forall ev \in \mc V,\\
\end{split}
\end{align}
\end{prop}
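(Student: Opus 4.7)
The plan is to mirror the proof of Theorem~\ref{thm:LP_sol}, since problem~\eqref{eq:tp_ext} is a linear program with essentially the same structural form as~\eqref{eq:tp_mat}, just with $E$, $\Delta_1$, $\Delta_2$, $C^{\mc E}$, $C^{\mc V}$ replaced by their ``primed'' counterparts $E'(ev)$, $\Delta_1'(ev)$, $\Delta_2'(ev)$, $C^{\mc E^t}(ev,k)$, $C^{\mc V^t}(ev,k)$, and with dimensions adjusted depending on whether $ev\in\mc E$ or $ev\in\mc V$. The conditions~\eqref{eq:LP_ext}--\eqref{eq:zero_duality_gap2} listed in the proposition are precisely primal feasibility, dual feasibility, and zero duality gap (strong duality) for this LP; my strategy is therefore to instantiate the standard LP duality argument once and read off the KKT system, handling the two dimensional cases in parallel.

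First I would form the Lagrangian of~\eqref{eq:tp_ext} by associating one dual variable with each primal constraint: $U_1'$ with the flow conservation equality $E'(ev)X' = D_1'+D_2'$, $U_2'$ with the nonnegativity $X'\geq 0$, $\vec{h}_1'$ and $\vec{h}_2'$ with the link and node capacity inequalities, $\vec{h}_3'$ with the equality $D_1'^T\1 = -D_2'^T\1$, $U_3'$ and $U_4'$ with the upper bounds $D_1'\leq \Delta_1'(ev)M$ and $-D_2'\leq -\Delta_2'(ev)M$, and $U_5', U_6'$ with the sign constraints on $D_1'$ and $D_2'$. Taking $\partial/\partial X' = 0$, $\partial/\partial D_1' = 0$, and $\partial/\partial D_2' = 0$ then yields exactly the three matrix stationarity identities in the middle of~\eqref{eq:LP_ext}; the nonnegativity of $U_2', \dots, U_6', \vec{h}_1', \vec{h}_2'$ is forced by the direction of the corresponding primal inequalities.

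Next I would invoke strong duality. The primal LP is feasible (take $X'=0$, $D_1'=D_2'=0$) and its objective is bounded above by $\sum_v C_v$, so by LP strong duality an optimal primal-dual pair exists and satisfies equal objective values, which is precisely~\eqref{eq:zero_duality_gap2}. Conversely, any primal-dual pair satisfying primal feasibility, dual feasibility, and equal objectives must be optimal by weak duality. This gives the ``if and only if'' claim.

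The main obstacle is the bookkeeping between the two cases. When $ev\in\mc E$, the matrices $E'(ev)$, $\Delta_1'(ev)$, $\Delta_2'(ev)$ act on the full extended index sets of size $N_{\mc V^t}$ and $N_{\mc E^t}$, so the dual variables $U_1',\dots,U_6'$ and $\vec{h}_2'$ inherit the extended dimensions as stated. When $ev\in\mc V$, however, by~\eqref{eq:ext_v} the problem collapses back onto the original graph: $E'(ev)=E$, $\Delta_i'(ev)=\Delta_i$, with the backup capacity folded additively into $C^{\mc V^t}(ev,k)$ via~\eqref{eq:cap_v_add}. In this case the Lagrangian has the same form as in Theorem~\ref{thm:LP_sol}, so the dual variables and stationarity identities live in the original dimensions $N_\mc V$ and $N_\mc E$, matching the two separate lines in~\eqref{eq:LP_ext} and~\eqref{eq:zero_duality_gap2}. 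Once this case split is made carefully, the remainder of the argument is routine: apply Theorem~\ref{thm:LP_sol}'s derivation verbatim with the primed symbols substituted in, and conclude.
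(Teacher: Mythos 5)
Your proposal is correct and follows essentially the same route the paper takes: the paper explicitly omits the proof on the grounds that one need only repeat the Lagrangian/duality derivation of Theorem~\ref{thm:LP_sol} with the primed parameters $E'(ev)$, $\Delta_1'(ev)$, $\Delta_2'(ev)$, $C^{\mc E^t}(ev,k)$, $C^{\mc V^t}(ev,k)$ substituted in, which is precisely what you do, including the dimensional case split between $ev\in\mc E$ and $ev\in\mc V$. Your added remarks on feasibility and boundedness of the LP are consistent with the paper's treatment of~\eqref{eq:tp_mat} and introduce no gap.
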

We omit the proof of Proposition~\ref{thm:LP_sol_ext} as it follows closely to that of Theorem~\ref{thm:LP_sol}. As~\eqref{eq:tp_ext} differs from~\eqref{eq:tp_mat} only by several parameters, to show the the results in~\eqref{eq:LP_ext}, one just need to substitute the corresponding parameters.



\section{Optimal Backup Vertiport Planning}
\label{sec:selection}

We now introduce a mathematical optimization framework to plan the capacity and location of backup nodes for a risk-aware air transportation network, aiming at maximizing the expected throughput of the resulted extended network during all possible disruptions together with total budget and cost-effective considerations. 

At each backup location $v_i \in \mc V^b$, where $i \in \mc J_{\mc V^b}$, we either build a backup vertiport with some positive capacity or choose to not building anything. We construct a monotonically increasing sequence $\{C_{v_i,m}\}_{m=1}^{K_Z}$ to represent a list of capacities, where $K_Z \in \mc \mathbb{N}_{>0}$ and $K_Z >1$, $C_{v_i,1} = 0$, and $0< C_{v_i,2} <\cdots <  C_{v_i,K_Z}$. We assume that, to build a backup vertiport at location $v_i$, we can only choose among its corresponding viable capacities $C_{v_i} = C_{v_i,m}$ for some $m = 2, \cdots,K_Z$, and we denote the cost for building a vertiport with capacity $C_{v_i,m}$ at $v_i$ as $F_{v_i,m}>0$. Similarly, if we choose not to build a backup vertiport at $v_i$, then $C_{v_i} = C_{v_i,1} = 0$, and the resulting construction cost is $F_{v_i,1} =0$. We define the $N_{\mc V^b} \times K_Z$ potential capacity matrix as $C^{b, all}$, where $[C^{b, all}]_{i- N_{\mc V},m} = C_{v_{i},m}$ for all $i\in \mc J_{\mc V^b}$ and $m= 1,\cdots,K_Z$; also, we define the $N_{\mc V^b} \times K_Z$ cost matrix for all backup nodes as $F$, where $[F]_{i- N_{\mc V},m} = F_{v_i,m}$ for all $i\in \mc J_{\mc V^b}$ and $m= 1,\cdots,K_Z$.

We let $z_{ij}$ be the binary capacity-selecting variable for all $i = 1, \cdots, N_{\mc V^b}$ and $j = 1,\cdots, K_Z$: if we are building a vertiport at location $v_{i+N_{\mc V}}$ with capacity $C_{v_{i+N_{\mc V}},m}$ for some $m = 2,\cdots,K_Z$, then $z_{im} = 1$ while $z_{ij} = 0$ for all $j\neq m$; if we are not building a vertiport at location $v_{i+N_{\mc V}}$, then $z_{i1} = 1$ while $z_{ij} = 0$ for all $j\neq 1$. We further let $Z$ be the $N_{\mc V^b} \times K_Z$ capacity-selecting decision matrix such that $[Z]_{ij} = z_{ij}$.

Notice that we can only choose one capacity for each backup vertiport, and hence $Z$ needs to satisfy:
\begin{equation}
    \label{eq:z_constr}
    Z \in \{0,1\}^{{N_{\mc V^b }} \times K_Z}, \quad
    Z \1_{K_Z} = \1_{N_{\mc V^b }}\,.
\end{equation}

Given any $Z$ that satisfies~\eqref{eq:z_constr}, we can then compute the capacity vector of all backup nodes as
\begin{equation}
    \label{eq:c_V_b}
    C^{\mc V^b} = (C^{b,all} \odot Z) \1_{K_Z}  \,.
\end{equation}

With the candidate locations of backup vertiports and the selection variable $Z$, we next formulate the problem for selecting the optimal locations and capacities of backup vertiports as a mixed-integer problem.

\subsection{Mixed-Integer Problem for UAM Network Design}
\label{sec:method}
Given any original risk-aware network, the extended graph, and the chosen backup node capacities, we evaluate the performance of the extended network through the \emph{expected throughput} of the network when nodes and links are disturbed, $\mathbb{E} [S^{\ast}(\mc N^{ext}(ev,k))]$, which can be computed as 
\begin{align}
\label{eq:obj_exp}
\begin{split}
\mathbb{E} [S^{\ast}(\mc N^{ext}(ev,k))] &= \sum_{i =1}^{N_{\mc E}} \sum_{ k = 1}^{k_{e_i}} S^{\ast}(\mc N^{ext}(e_i,k)) p_{e_i,k} + \sum_{j=1}^{N_{\mc V}} \sum_{ k = 1}^{k_{v_j}} S^{\ast}(\mc N^{ext}(v_j,k))p_{v_j,k} + (1-P_{dis})S^{\ast}(\mc N(0,0)) \,.
\end{split}
\end{align}

As budget is always a concern in the infrastructure-constructing problem, we additionally require that the total constructing cost cannot exceed a given budget level, $\overline{F}$, i.e.,
\begin{equation}
\label{eq:budget}
\1_{N_S}^T(Z\odot F)\1_{K_Z} \leq \overline{F} \,.
\end{equation}

The optimization problem below will choose the capacity-selecting decision matrix $Z$. It chooses the optimal $Z$ that maximizes the weighted difference of expected throughput defined in~\eqref{eq:obj_exp} and the total construction cost $\1_{N_S}^T(Z\odot F)\1_{K_Z}$. The weighting parameter $w$ can be interpreted as a valuation factor that communicates between network throughput and the cost of reaching to a certain throughput level. A smaller $w$ indicates that the planner assigns a higher valuation to the per unit increase in network throughput. Given a $w >0$, we consider the decision-making optimization problem as follows:
\begin{align}
\label{eq:obj}
\begin{split}
    \max_{Z} & \Big (\mathbb{E} [S^{\ast}(\mc N^{ext}(ev,k))] - w \1_{N_S}^T(Z\odot F)\1_{K_Z} \Big)\\
    \text{s.t.} \quad & Z \in \{0,1\}^{{N_{\mc V^b }} \times K_Z}, \quad
    Z \1_{K_Z} = \1_{N_{\mc V^b} } , \\
    & \1_{N_S}^T(Z\odot F)\1_{K_Z} \leq \overline{F}
\end{split}
\end{align}

We denote a feasible solution to problem~\eqref{eq:tp_ext} where we compute $S^{\ast}(\mc N^{ext}(ev,k))$ as a tuple $u(ev,k)$, where 
\begin{align*}
    u(ev,k) &= \big( X'(ev,k),D_1'(ev,k), D_2'(ev,k),U_1'(ev,k),\dots, U_6'(ev,k), \vec{h}_1'(ev,k), \vec{h}_2'(ev,k),\vec{h}_3'(ev,k) \big)
\end{align*}
and let $\mc U = \{u(ev,k) \mid ev\in \mc E\cup \mc V, k = 1,\cdots, k_{ev}\}$.

With~\eqref{eq:obj_exp} and Proposition~\ref{thm:LP_sol_ext}, we then reformulate the optimization problem~\eqref{eq:obj} as 
\begin{align}
\label{eq:obj_split}
\begin{split}
 (1-P_{dis})S^{\ast}(\mc N(0,0)) &  +  \max_{\mc U,Z}  \Big(\sum_{i =1}^{N_{\mc E}}\sum_{ k = 1}^{k_{e_i}} p_{e_i,k} \1^T_{N_S} D_1(e_i,k)^T \1_{ N_{\mc V^t}} + \sum_{j=1}^{N_{\mc V}} \sum_{ k = 1}^{k_{v_j}} p_{v_j,k} \1^T_{N_S} D_1(v_j,k)^T \1_{ N_\mc V}  \\
&   \hspace{22em} - w \1_{N_S}^T (Z\odot F)\1_{K_Z}\Big) \\
\text{s.t.} \quad   
& C^{\mc V^b} = (Z\odot C^{b,all}) \1_{K_Z}, \\
& u(ev,k) \text{ satisfies~\eqref{eq:LP_ext}--\eqref{eq:zero_duality_gap2} } \quad \forall~ev\in \mc E\cup \mc V, k = 1,\cdots,k_{ev} \\
&   Z \in \{0,1\}^{{N_{\mc V^b }} \times K_Z}, \quad
    Z \1_{K_Z} = \1_{N_{\mc V^b} }, \quad
    \1_{N_S}^T(Z\odot F)\1_{K_Z} \leq \overline{F} \,.
\end{split}
\end{align}

The problem~\eqref{eq:obj_split} is a Mixed Integer Program (MIP) with binary variable $Z$. It includes the throughput computation conditions in~\eqref{eq:LP_ext}--\eqref{eq:zero_duality_gap2}, the vertiport-constructing conditions in~\eqref{eq:z_constr}, and the budget constraint in~\eqref{eq:budget}.

Unfortunately, in~\eqref{eq:zero_duality_gap2}, both $C^{\mc V^t}(v)$ (defined in~\eqref{eq:cap_v_add}--\eqref{eq:ext_v}) and $C^{\mc E^t}(e)$ (defined in~\eqref{eq:cap_e_add}--\eqref{eq:cap_e_ext}) are functions of $C^{\mc V^b}$ (defined in~\eqref{eq:c_V_b}), a term with the binary variable $Z$. Therefore, the constraints of problem~\eqref{eq:obj_split} contains the bilinear terms of variables $Z, \vec{h}'_1(e,k), \vec{h}'_2(e,k)$, and $\vec{h}'_2(v,k)$: $\vec{h}'_1(e,k)^T C^{\mc E^t}(e,k)$, $\vec{h}'_2(e,k)^T C^{\mc V^t}(e)$, and $\vec{h}'_2(v,k)^T C^{\mc V^t}(v,k)$. In the next subsection, we are then converting these terms into a series of linear terms of variables.

\subsection{Mixed-Integer Linear Problem for UAM Network Design}
\label{sec:MILP}

We are able to get rid of the bilinear terms  $\vec{h}'_1(e,k)^T C^{\mc E^t}(e,k)$, $\vec{h}'_2(e,k)^T C^{\mc V^t}(e)$, and $\vec{h}'_2(v,k)^T C^{\mc V^t}(v,k)$, using the two propositions below.
\begin{prop}
\label{prop:bigM}
Given $Z \in \{0,1\}^{{N_{\mc V^b }} \times K_Z}$, $\Lambda, \Lambda' \in \mathbb{R}^{{N_{\mc V^b }}\times K_Z}$, we assume $Z\1_{K_Z} = \1_{N_{\mc V^b }}$ and $0\leq \Lambda \leq M_{\Lambda}$ for some $M_{\Lambda} > 0$. Then $\Lambda'$ satisfies the conditions~\eqref{eq:bigM_1}--\eqref{eq:bigM_2}  below if and only if $\Lambda' = \Lambda \odot Z$:
\begin{align}
    \label{eq:bigM_1} 0 &\leq \Lambda' \leq  M_{\Lambda}Z\\
    \label{eq:bigM_2} 0 &\leq \Lambda-\Lambda' \leq  M_{\Lambda}(\1_{N_{\mc V^b }} \1_{K_Z}^T - Z) \,.
\end{align}
\end{prop}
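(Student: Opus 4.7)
The plan is to prove the biconditional by reducing it to an entrywise argument, exploiting the fact that $Z$ is binary and that each entry of $\Lambda$ lies in $[0, M_\Lambda]$. The row-sum constraint $Z\1_{K_Z} = \1_{N_{\mc V^b}}$ is not actually needed for this lemma; it only plays a role elsewhere. The computation decouples completely across the indices $(i,m)$, so I would fix an entry and work scalar-by-scalar.

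For the forward direction, I would assume $\Lambda' = \Lambda \odot Z$ and verify the four scalar inequalities implied by~\eqref{eq:bigM_1}--\eqref{eq:bigM_2} at each entry. Since $Z_{im} \in \{0,1\}$ and $[\Lambda]_{im} \in [0, M_\Lambda]$, we have $[\Lambda']_{im} = [\Lambda]_{im} Z_{im} \in [0, M_\Lambda Z_{im}]$, which yields~\eqref{eq:bigM_1}. Moreover, $[\Lambda]_{im} - [\Lambda']_{im} = [\Lambda]_{im}(1 - Z_{im}) \in [0, M_\Lambda(1 - Z_{im})]$, which is exactly the entrywise version of~\eqref{eq:bigM_2}.

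For the reverse direction, I would assume both inequalities hold and perform a case split on $Z_{im}$. When $Z_{im} = 0$, condition~\eqref{eq:bigM_1} squeezes $[\Lambda']_{im}$ between $0$ and $0$, giving $[\Lambda']_{im} = 0 = [\Lambda]_{im}\, Z_{im}$. When $Z_{im} = 1$, condition~\eqref{eq:bigM_2} reads $0 \leq [\Lambda]_{im} - [\Lambda']_{im} \leq 0$, forcing $[\Lambda']_{im} = [\Lambda]_{im} = [\Lambda]_{im}\, Z_{im}$. Either way, $[\Lambda']_{im} = [\Lambda]_{im} Z_{im}$, and assembling over all $(i,m)$ gives $\Lambda' = \Lambda \odot Z$.

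There is essentially no hard step here: the proposition is the standard big-$M$ linearization of a binary-times-bounded-continuous product, and the only delicate point is making sure the tight bounds are invoked in the correct case so that the implication collapses to equality (rather than just a containment). Accordingly, the write-up would be short, with the bulk of the argument being the two-line case split above.
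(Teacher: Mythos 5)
Your proof is correct and follows essentially the same route as the paper's: an entrywise case split on the binary entry $[Z]_{im}$, using $0 \leq \Lambda \leq M_\Lambda$ for the forward direction and the squeezed inequalities for the converse. Your observation that the row-sum constraint $Z\1_{K_Z} = \1_{N_{\mc V^b}}$ is not needed here is a valid and slightly sharper remark than what the paper states.
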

\begin{proof}
The proposition is a matrix version of the big M method.
First, if $\Lambda' = \Lambda \odot Z$, then $[\Lambda']_{ij} = 0$ if $[Z]_{ij} =0$, and $[\Lambda']_{ij} = [\Lambda]_{ij}$ if $[Z]_{ij} =1$. Since $0\leq \Lambda \leq M_{\Lambda}$, then~\eqref{eq:bigM_1} and~\eqref{eq:bigM_2} are satisfied.

On the other side, \eqref{eq:bigM_1} indicates that $\Lambda' =0$ if $[Z]_{ij} =0$, and ~\eqref{eq:bigM_2} indicates that $[\Lambda']_{ij} = [\Lambda]_{ij} \leq M_{\Lambda}$ if $[Z]_{ij} =1$, and hence $\Lambda' = \Lambda \odot Z$.
\end{proof}

\begin{prop}
\label{prop:bilinear}
Let $\vec{h}_{12} \in \mathbb{R}^{N_{\mc E^b}}$, $\vec{h}_{22} \in \mathbb{R}^{N_{\mc V^b}}$, and $\vec{h}_{2} \in \mathbb{R}^{N_{\mc V}}$, such that $\vec{h}_{11},\vec{h}_{21},\vec{h}_{2} \geq 0$. Given $Z \in \{0,1\}^{{N_{\mc V^b}}\times K_Z}$, and let $v\in \mc V$, $e \in \mc E$, then
\begin{enumerate}
    \item If $\Lambda = C^{b,all}\odot( \mc E^{b,2}_+ \cdot \vec{h}_{12} \1^T_{K_Z})$ and $\Lambda' = \Lambda \odot Z$, then $\1^T_{N_{\mc V^b}} \Lambda' \1_{K_Z} = \vec{h}^T_{12} C^{\mc E^b}(e)$;
    \item If $\Lambda = C^{b,all}\odot(\vec{h}_{22} \1^T_{K_Z})$ and $\Lambda' = \Lambda \odot Z$, then $\1^T_{N_{\mc V^b}} \Lambda' \1_{K_Z} = \vec{h}^T_{22} C^{\mc V^b}$;
    \item If $\Lambda = C^{b,all}\odot((\Delta^{adj})^T (\vec{e}_{N_\mc V}(v) \odot \vec{h}_{2}) \1^T_{K_Z})$ and $\Lambda' = \Lambda \odot Z$, then $\1^T_{N_{\mc V^b}} \Lambda' \1_{K_Z} = \vec{h}^T_2 C^{\mc V^b}(v)$.
\end{enumerate}
\end{prop}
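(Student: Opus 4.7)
The plan is to prove each of the three identities directly from the definitions by expanding both sides in coordinates and matching the resulting sums term-by-term. The key building block in every case is the identity $[C^{\mc V^b}]_i = \sum_m [C^{b,all}]_{im}[Z]_{im}$, which follows immediately from~\eqref{eq:c_V_b}, together with the formulas for $C^{\mc E^b}(e)$ in~\eqref{eq:cap_e_add} and for $C^{\mc V^b}(v)$ in~\eqref{eq:cap_v_add}.

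First I would dispose of part (ii), which is the cleanest and serves as a template. Expanding $\vec{h}_{22}^T C^{\mc V^b} = \sum_{i,m}[\vec{h}_{22}]_i [C^{b,all}]_{im}[Z]_{im}$. On the other hand, the $(i,m)$-entry of $\Lambda'$ is $[C^{b,all}]_{im}[\vec{h}_{22}]_i[Z]_{im}$, and summing all entries via $\1^T_{N_{\mc V^b}}(\cdot)\1_{K_Z}$ reproduces exactly the same double sum.

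Parts (i) and (iii) follow the same pattern with one extra matrix contraction. For part (i), I would substitute $[C^{\mc E^b}(e)]_j = \sum_i [E^{b,2}(e)_+]_{ij}[C^{\mc V^b}]_i$ from~\eqref{eq:cap_e_add}, then expand $[C^{\mc V^b}]_i$ as above, turning $\vec{h}_{12}^T C^{\mc E^b}(e)$ into a triple sum over $(i,j,m)$. On the $\Lambda'$ side, the factor $E^{b,2}(e)_+\cdot\vec{h}_{12}$ contributes $\sum_j [E^{b,2}(e)_+]_{ij}[\vec{h}_{12}]_j$ to the $(i,m)$ entry, so summing yields the same triple sum. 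Part (iii) is analogous: the elementary vector $\vec{e}_{N_\mc V}(v)$ picks out the single row $i$ with $v=v_i$ on both sides (one side through $\Delta^{adj} C^{\mc V^b}$, the other through $(\Delta^{adj})^T(\vec{e}_{N_\mc V}(v)\odot\vec{h}_2)$), after which the computation mirrors that of part (i).

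There is no deep obstacle here; the statements are essentially restatements of the bilinearity of the Hadamard product with respect to the $\1^T(\cdot)\1$ contraction, together with the associativity of ordinary matrix multiplication appearing in $E^{b,2}(e)_+$ and $\Delta^{adj}$. The only care required is bookkeeping, namely tracking which indices are contracted by each Hadamard product versus each matrix product, and aligning the notation (in particular reading ``$\mc E^{b,2}_+$'' in the statement of part (i) as $E^{b,2}(e)_+$ from~\eqref{eq:E_e_block}). The nonnegativity hypotheses on $\vec{h}_{12},\vec{h}_{22},\vec{h}_2$ are not actually used in the identities themselves, but they ensure the big-$M$ bound required by Proposition~\ref{prop:bigM} when these identities are later combined to linearize the bilinear terms in~\eqref{eq:obj_split}.
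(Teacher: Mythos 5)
Your proposal is correct and matches the paper's intent: the paper omits the proof of Proposition~\ref{prop:bilinear} as ``straightforward by matrix calculations,'' and your coordinate-by-coordinate expansion of $[C^{\mc V^b}]_i=\sum_m [C^{b,all}]_{im}[Z]_{im}$ together with the contractions in~\eqref{eq:cap_e_add} and~\eqref{eq:cap_v_add} is exactly that calculation. Your closing remark that the nonnegativity hypotheses are needed only for the subsequent big-$M$ linearization, not for the identities themselves, is also accurate.
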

Proposition~\ref{prop:bilinear} is straightforward by matrix calculations, and hence we omit the proof here.

With Proposition~\ref{prop:bigM} and~\ref{prop:bilinear} and some large enough $M_{\Lambda}$, we can therefore rewrite the constraint~\eqref{eq:zero_duality_gap2}, substituting the bilinear terms with linear terms and conditions.

For all $e\in \mc E$, $k = 1,\cdots,k_e$, we let $\vec{h}'_{11}(e,k) \in \mathbb{R}^{N_{\mc E}}$, $\vec{h}'_{12}(e,k) \in \mathbb{R}^{N_{\mc E^b}}$, $\vec{h}'_{21}(e,k) \in \mathbb{R}^{N_{\mc V}}$, $\vec{h}'_{22}(e,k) \in \mathbb{R}^{N_{\mc V^b}}$, such that 
$\vec{h}'_1(e,k) =\left(
    \begin{array}{c}
    \vec{h}'_{11}(e,k)  \\
    \hline
    \vec{h}'_{12}(e,k) 
    \end{array} \right)$ and
$\vec{h}'_2(e,k) = \left(
    \begin{array}{c}
    \vec{h}'_{21}(e,k)  \\
    \hline
    \vec{h}'_{22}(e,k)
    \end{array} \right)$. We then have
\begin{align}
\label{eq:zero_gap_e}
\begin{split}
    \1^T_{N_S} D_1'(e,k)^T \1_{ N_{\mc V^t}} &= \vec{h}'_1(e,k)^T C^{\mc E^t}(e,k)+\vec{h}'_2(e,k)^T C^{\mc V^t}(e) \\
 & \hspace{10em}   + tr(U_3'(e,k)^T \Delta_1'(e) M) - tr(U_4'(e,k)^T \Delta_2'(e) M) \\
& = \vec{h}_{11}(e,k)^T C^{\mc E}(e,k)+\vec{h}'_{12}(e,k)^T C^{\mc E^b}(e) + \vec{h}'_{21}(e,k)^T C^{\mc V}_0 + \vec{h}'_{22}(e,k)^T C^{\mc V^b}  \\
&  \hspace{10em} + tr(U_3'(e,k)^T \Delta_1'(e) M) - tr(U_4'(e,k)^T \Delta_2'(e) M) \\
& =\vec{h}'_{11}(e,k)^T C^{\mc E}(e,k)+\1^T_{N_{\mc V^b}} \Lambda_1'(e,k) \1_{K_Z} + \vec{h}'_{21}(e,k)^T C^{\mc V}_0 + \1^T_{N_{\mc V^b}} \Lambda'_2(e,k) \1_{K_Z}\\
&  \hspace{10em} + tr(U_3'(e,k)^T \Delta_1'(e) M) - tr(U_4'(e,k)^T \Delta_2'(e) M)  \,,
\end{split}
\end{align}
where
\begin{align}
\begin{split}
\label{eq:lambda_e}
&\Lambda_1(e,k) = C^{b,all}\odot(  E^{b,2}_+ \cdot \vec{h}'_{12}(e,k) \1^T_{K_Z}), \quad
0 \leq \Lambda'_1(e,k) \leq  M_{\Lambda} Z, \\
& \Lambda_2(e,k) = C^{b,all}\odot(\vec{h}'_{22}(e,k) \1^T_{K_Z}), \quad
0 \leq \Lambda'_2(e,k) \leq  M_{\Lambda}Z, \\
& 0 \leq \Lambda_1(e,k)-\Lambda'_1(e,k) \leq  M_{\Lambda}(\1_{N_{\mc V^b}} \1_{K_Z}^T - Z),\quad
0 \leq \Lambda_2(e,k)-\Lambda'_2(e,k) \leq  M_{\Lambda}(\1_{N_{\mc V^b}} \1_{K_Z}^T - Z) \,.
\end{split}
\end{align}

Similarly,  for all $v\in \mc V$ and $k = 1,\cdots,k_v$, 
\begin{align}
\begin{split}
\label{eq:zero_gap_v}
\1^T_{N_S} D_1'(v,k)^T \1_{ N_{\mc V}} & = \vec{h}'_1(v,k)^T C^{\mc E}_0 +\vec{h}'_2(v,k)^T C^{\mc V^t}(v,k) \\
& \hspace{10em} + tr(U_3'(v,k)^T \Delta_1'(v) M) - tr(U_4'(v,k)^T \Delta_2'(v) M) \\
& =\vec{h}'_1(v,k)^T C^{\mc E}_0 +\vec{h}'^T_2(v,k) C^{\mc V}(v^a,k) +\1^T_{N_{\mc V^b}} \Lambda'_3(v,k) \1_{K_Z}  \\
& \hspace{10em} + tr(U_3'(v,k)^T \Delta_1'(v) M) - tr(U_4'(v,k)^T \Delta_2'(v) M)  \,,
\end{split}
\end{align}
where we let
\begin{align}
\begin{split}
\label{eq:lambda_v}
&\Lambda_3(v,k) = C^{b,all}\odot((\Delta^{adj})^T (\vec{e}_{N_{\mc V}} \odot \vec{h}'_{2}(v,k)) \1^T_{K_Z}), \quad
0 \leq \Lambda'_3(v,k) \leq  M_{\Lambda}Z, \\
& 0 \leq \Lambda_3(v,k)-\Lambda'_3(v,k) \leq  M_{\Lambda}(\1_{N_{\mc V^b}} \1_{K_Z}^T - Z) \,.
\end{split}
\end{align}

We now let $\eta(ev,k)$ be the tuple of solutions,
\begin{align}
\begin{split}
    \eta(ev,k) &= \big(X(ev,k),D_1'(ev,k), D_2'(ev,k),U_1'(ev,k),\dots, U_6'(ev,k), \vec{h}'_1(ev,k), \vec{h}'_2(ev,k),\vec{h}'_3(ev,k),\\
    & \hspace{10em} \Lambda_1(ev,k),\Lambda_2(ev,k),\Lambda_3(ev,k),\Lambda'_1(ev,k),\Lambda'_2(ev,k),\Lambda'_3(ev,k)\big) \,,
\end{split}
\end{align}
and let $\mc H = \{\eta(ev,k) \mid ev\in \mc E\cup \mc V, k = 1,\cdots, k_{ev}\}$ be the set of all solutions.


We rewrite the set of equations with bi-linear terms into the linear conditions as in~\eqref{eq:zero_gap_e}--\eqref{eq:lambda_v}, and hence convert the problem~\eqref{eq:obj_split} into a MILP as follows.
\begin{align}
\label{eq:obj_milp}
\begin{split}
(1-P_{dis})S^{\ast}(\mc N(0,0))  +  &\max_{\mc H,Z}  \Big(\sum_{i =1}^{N_{\mc E}}\sum_{ k = 1}^{k_{e_i}} p_{e_i,k} \1^T_{N_S} D_1(e_i,k)^T \1_{ N_{\mc V^t}}  + \sum_{j=1}^{N_{\mc V}} \sum_{ k = 1}^{k_{v_j}} p_{v_j,k} \1^T_{N_S} D_1(v_j,k)^T \1_{ N_\mc V}\\
& \hspace{23em}- w \1_{N_S}^T(Z\odot F)\1_{K_Z} \Big) \\
\text{s.t.} \quad   
& \eta(ev,k) \text{ satisfies~\eqref{eq:LP_ext} } \quad \forall ev\in \mc E\cup \mc V, k = 1,\cdots,k_{ev} ,  \\
& \eta(e,k) \text{ satisfies~\eqref{eq:zero_gap_e}--\eqref{eq:lambda_e} } \quad \forall e\in \mc E, k = 1,\cdots,k_{e} , \\
& \eta(v,k) \text{ satisfies~\eqref{eq:zero_gap_v}--\eqref{eq:lambda_v} } \quad \forall v\in \mc E, k = 1,\cdots,k_{v} , \\
&  Z \in \{0,1\}^{{N_{\mc V^b }} \times K_Z}, \quad
    Z \1_{K_Z} = \1_{N_{\mc V^b} }, \quad
    \1_{N_S}^T(Z\odot F)\1_{K_Z} \leq \overline{F} \,.
\end{split}
\end{align}

We are able to solve such MILP~\eqref{eq:obj_milp} with some known solvers, such as Gurobi. In the next section, we show how we apply this optimization approach to three networks with different sizes and network topologies.

\section{Case Study}\label{sec:casestudy}

In this section, we apply the proposed network design and optimization approach described in Sections~\ref{sec:network} and \ref{sec:selection} to design UAM networks for three representative real-world cases in the U.S.---Milwaukee, Atlanta, and Dallas--Fort Worth metropolitan areas. We first introduce the selection and setup of the case study. On the performance evaluation of the network design outcomes, we utilize both design visualizations and quantitative metrics\footnote{The code used in the case study can be found in: https://github.com/QinshuangCoolWei/Risk-aware-Urban-Air-Mobility-Network-Design-with-Overflow-Redundancy.git.}. 

\subsection{General Information and Setup}

In this case study, we apply the proposed approach on three representative cities in the U.S.---Milwaukee, Atlanta, and Dallas--Fort Worth metropolitan areas--- to reflect network diversity in multiple aspects. Table~\ref{tbl:usecases} provides a summary of the basic information of the three use cases. On the city scale, this set of cities include medium (Milwaukee), large (Atlanta), and mega (Dallas--Fort Worth) metropolitan areas. More importantly, the UAM networks of these three metropolitan areas vary in topological structure. Figure~\ref{fig:Setup} shows visualizations of the three original UAM network structures. We design the original UAM networks based on information such as urban structure and travel demand forecast between locations throughout the city. The design of the original UAM network is assumed to be business-driven. In contrast to safety-driven design, a business-driven design, such as network for on-demand UAM operations, seeks to maximize the revenue of the system. Because this work focuses on a network infrastructure expansion problem which chooses the location and capacity of backup vertiports that premise on given original network and potential backup nodes, detailed processes of obtaining the original UAM networks are not included here. In fact, the original networks in this problem can be any design outcomes discussed in Section~\ref{sec:litreview}. 

We first set the values of the undisturbed capacities for all nodes and links in the networks as saved in the Github file. For each node or link $ev$, we then set four discrete values of disturbed capacities at $0, 0.25\cdot C_{ev}, 0.5\cdot C_{ev}$, and $0.75\cdot C_{ev}$, with conditional probability given that $ev$ is disturbed as $0.05,0.1,0.15$, and $0.7$, respectively. We assume that there is always exactly one node or link disturbed in the network, and that each of them is disturbed with equal probability.

\begin{table}[h!]
\centering
\caption{Summary statistics of the use cases for three metropolitan areas}
\label{tbl:usecases}
\begin{tabular}{l|c|c|c}
\hline
                                      & Milwaukee & Atlanta              & Dallas--Fort Worth \\ \hline
Approx. Area Size (miles$^2$)            & 630       & 1,350                 & 3,500              \\
No. of Regular Nodes                  & 7         & 11                   & 15                \\ 
No. of O-D Pairs                      & 12        & 58                   & 64                \\ 
No. of Candidate Backup Nodes        & 24        & 26                   & 45                \\ 
Topology                              & Star      & Hybrid mesh and star & Multi star        \\ \hline
\end{tabular}
\end{table}

The Milwaukee network with 7 regular nodes and 12 O-D pairs is a simplified version to highlight its star topology. The Atlanta network with 11 regular nodes and 58 O-D pairs has a hybrid mesh and star topology, where the Atlanta downtown is a clear hub in the network, while other sub-centers such as Marietta, Sandy Springs, and Doraville, have frequent traffic flows between them as well. The Dallas--Fort Worth network with 15 regular nodes and 64 O-D pairs, also the largest example in this study, has a unique multi star topology. In this area, other than the ``twin cities'' Dallas downtown and Fort Worth downtown, the Dallas Fort Worth International Airport (DFW) and Arlington in between are the other two potential hubs in the UAM network.

The blue asterisks in Figure~\ref{fig:Setup} depict the locations of the candidate backup nodes in each of the city cases. Because the locations of the candidate backup nodes for building small vertiports are restricted by several factors, this problem is naturally a discrete facility location problem, instead of a continuous one. Specifically, we identify these candidate locations with the following rules:
\begin{enumerate}
    \item The geographic conditions must permit. The candidate locations must avoid lakes, rivers, mountains, forests, ranches, restricted areas, etc.
    \item They are not deep inside large residential areas. This is to mitigate the negative societal impacts such as community noise and privacy, which are keys to the integration of UAM. 
    \item They are not deep inside large factories and industrial campuses. At the current stage, we assume that such areas are not appropriate for building backup vertiports. 
    \item The current infrastructure at the candidate locations can potentially be repurposed into small vertiports. Some examples are parking lots, open spaces, rooftops, etc.
    \item The candidate locations are not too far outside of the convex hull of the regular nodes. With limited budget, locations outside of this convex hull are highly likely to become the ``inefficient'' choices.
    \item In addition to the constraints above, the entire candidate set must provide an overall comprehensive coverage of the entire metropolitan area. 
\end{enumerate}

\begin{figure}[t!]
	\centering
        \includegraphics[width=0.3\textwidth]{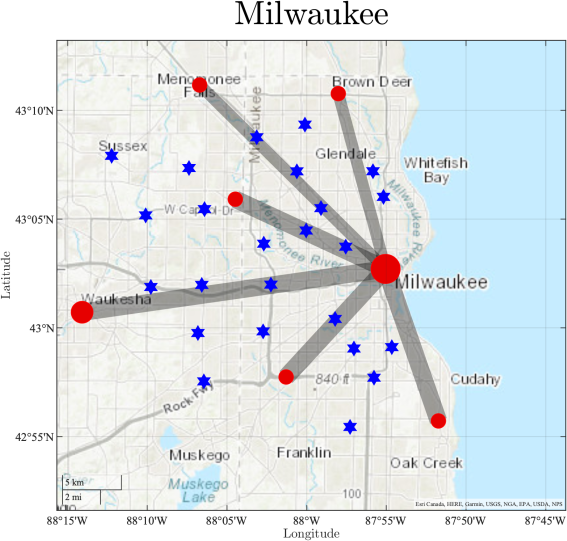}
        \hspace*{0.05cm}
        \includegraphics[width=0.3\textwidth]{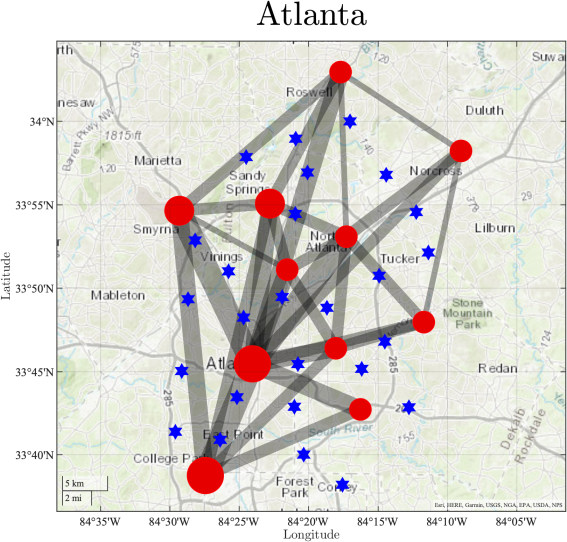}
        \hspace*{0.05cm}
        \includegraphics[width=0.3\textwidth]{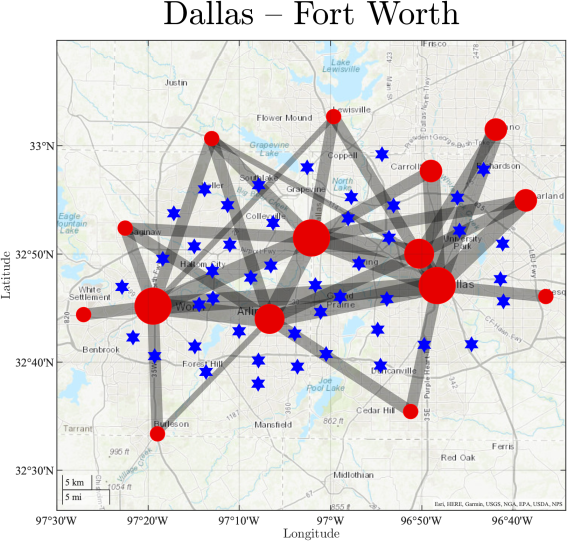}\\
	\caption{Visualizations of the original business-driven UAM network designs with candidate locations for backup nodes in Milwaukee (left), Atlanta (middle), and Dallas--Fort Worth (right) metropolitan areas.}
	\label{fig:Setup}
\end{figure}

In the following experiments, we apply the proposed formulation and optimization approaches to add reserve capacities to the original UAM networks. In particular, we examine the resulting network designs under different budget constraints and valuation settings. Note that we normalize the node and link capacities, as well as the budget and costs for constructing the backup vertiports to relative scales. Node and link capacities are normalized within range 0 to 10. The largest nodes, such as Atlanta downtown and Dallas downtown, have the maximum capacity 10; most of the nodes in the regular network has capacity between 4 and 8. In this case study, the set of O-D pairs is the same as the set of links. As for the alternative path, only the ``single stop'' scenario is considered according to Assumption~\ref{assum:reroute}, i.e., one can replace a direct path $A \rightarrow B$ with $A \rightarrow C \rightarrow B$, where $C$ is a backup node. Furthermore, to be more practical, $C$ qualifies to provide an alternative path for $A \rightarrow B$ if the ratio $r = (d_{A \rightarrow C} + d_{C \rightarrow B})/d_{A \rightarrow B}$ is within $[1.02, 1.5]$ where $d_{\cdot \rightarrow \cdot }$ gives the distance between the nodes. If $r < 1.02$, the backup node is too close to the disturbed link; if $r > 1.5$, the additional travel distance is deemed too long. Because of the design philosophy to avoid large, under-utilized infrastructures, the capacity of a selected backup node is limited to 1 or 2. The costs for building small (capacity 1) and large (capacity 2) vertiports are 4 and 6, respectively. The total budget level may differ with the scale of the original UAM network. In this study we investigate a budget range of 0 to 150. 

We select the locations and capacities for the backup vertiports among the given set of candidate locations by solving the MILP~\eqref{eq:obj_milp}, and then assess the results with two forms of evaluation. First, the design visualizations present the locations and sizes of the selected backup nodes, as well as the resulting backup links for each design setting. Then, using quantitative metrics, we evaluate the design outcome vs. budget level through three crucial aspects of the redundancy of a UAM network: \textit{capacity} (throughput enhancement), \textit{diversity} (travel alternative diversity), and \textit{coverage} (maxi-minimal distance to landing) for contingency management purposes. 

\subsection{Design Visualizations}

\begin{figure}[htbp]
	\centering
        \includegraphics[width=0.3\textwidth]{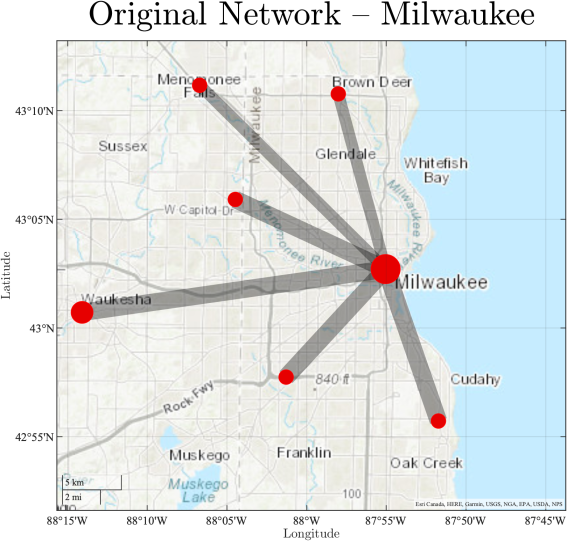}
        \hspace*{0.05cm}
        \includegraphics[width=0.3\textwidth]{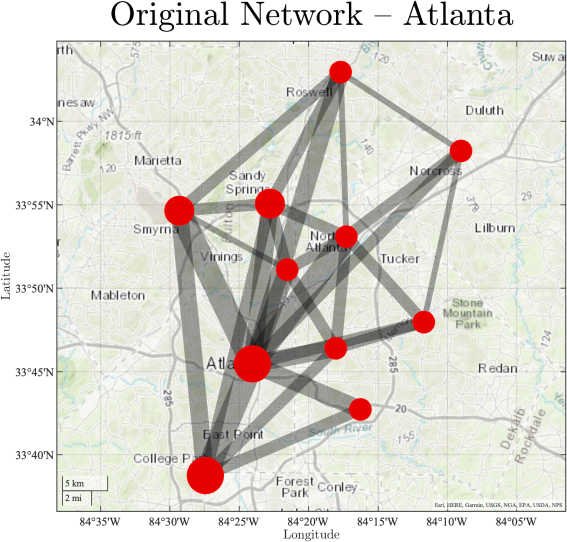}
        \hspace*{0.05cm}
        \includegraphics[width=0.3\textwidth]{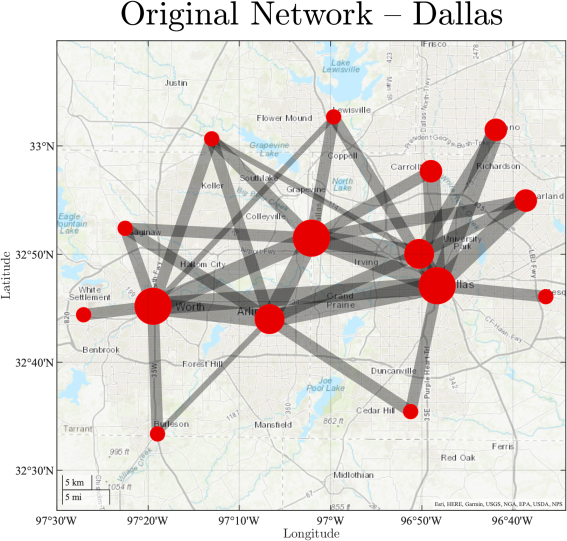}\\
        \vspace*{0.15cm}
        \includegraphics[width=0.3\textwidth]{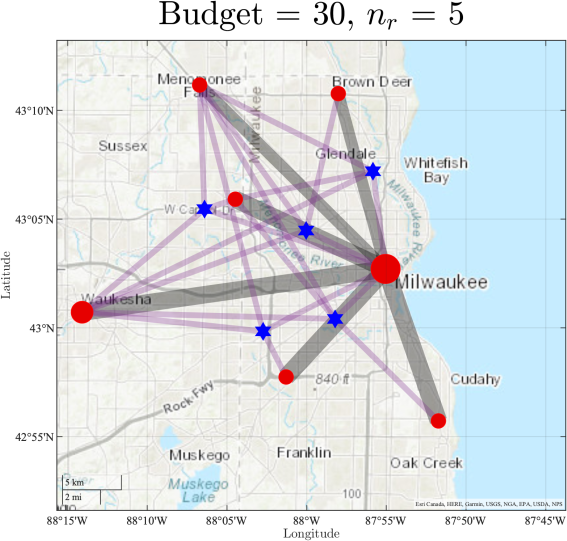}
        \hspace*{0.05cm}
        \includegraphics[width=0.3\textwidth]{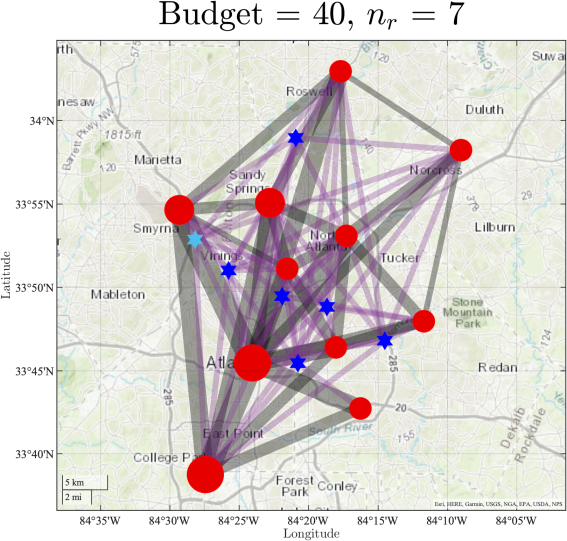}
        \hspace*{0.05cm}
	\includegraphics[width=0.3\textwidth]{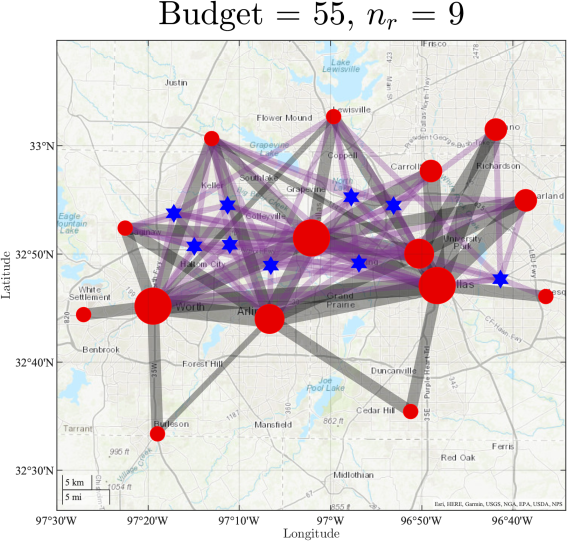}\\
        \vspace*{0.15cm}
        \includegraphics[width=0.3\textwidth]{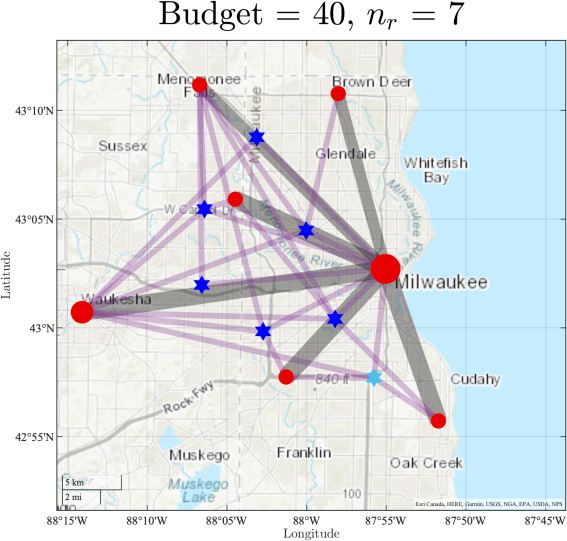}
        \hspace*{0.05cm}
        \includegraphics[width=0.3\textwidth]{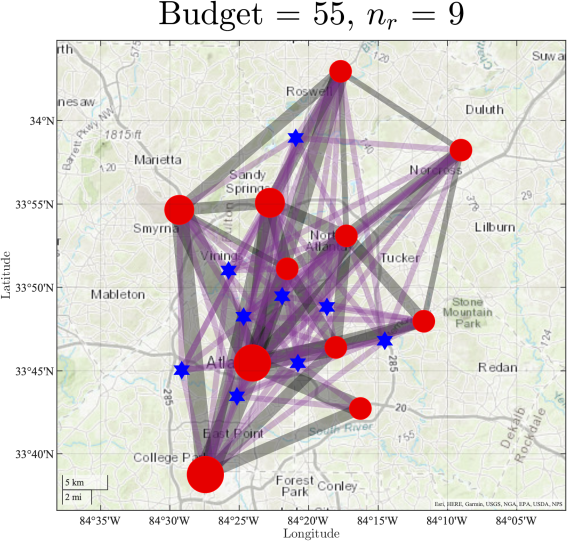}
        \hspace*{0.05cm}
	\includegraphics[width=0.3\textwidth]{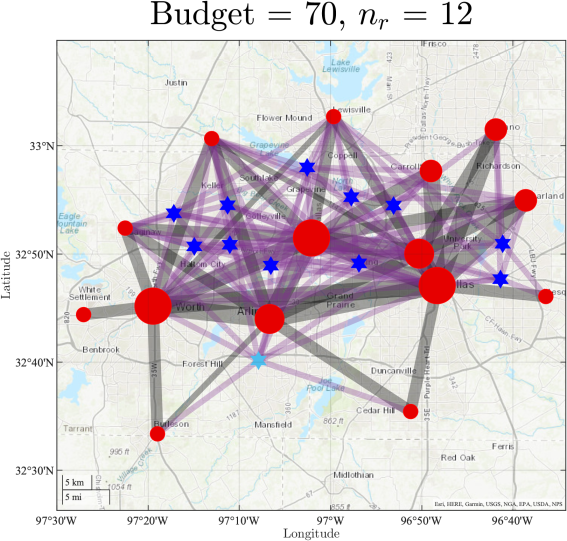}\\
        \vspace*{0.15cm}
        \includegraphics[width=0.3\textwidth]{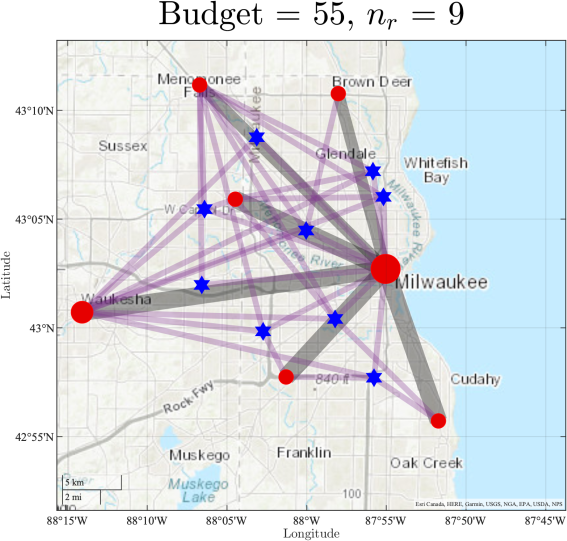}
        \hspace*{0.05cm}
        \includegraphics[width=0.3\textwidth]{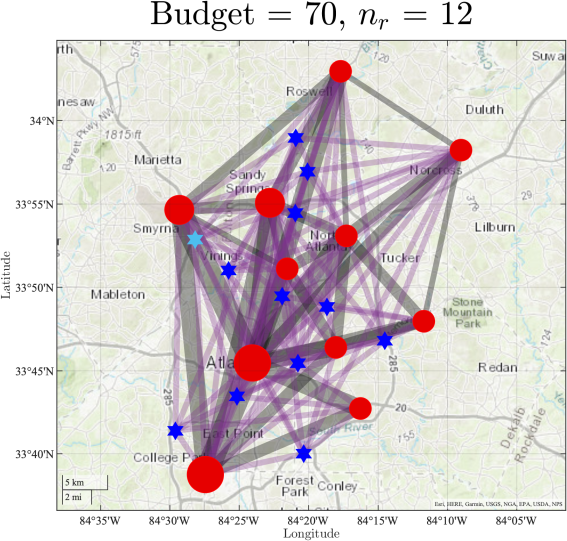}
        \hspace*{0.05cm}
        \includegraphics[width=0.3\textwidth]{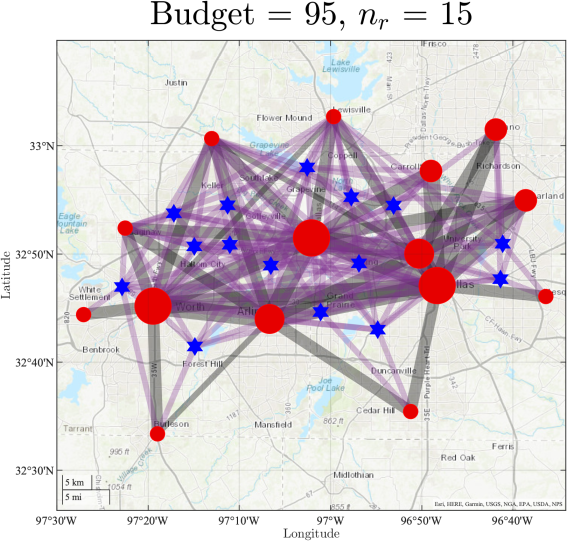}\\
	\caption{Visualizations of UAM network designs at different budget levels for Milwaukee (left), Atlanta (middle), and Dallas--Fort Worth (right) metropolitan areas. For each city, the budget increases from top to bottom, resulting in networks with increased number of backup vertiports and flight corridors. The red circle and blue asterisk denote regular and backup nodes, respectively. Notice that only the selected backup nodes with nonzero capacity are presented in this figure. The gray and purple lines denote regular and backup links, respectively. The size/width of a node/link indicates its capacity.}
	\label{fig:Result1}
\end{figure}

Figure~\ref{fig:Result1} displays, for each city, network designs at three different budget levels. The first row of Figure~\ref{fig:Result1} shows the original networks without any reserve capacity for comparison. 
We represent the backup nodes and links with blue asterisks and purple lines, respectively. One can distinguish the small (capacity 1) and large (capacity 2) backup nodes through the light and dark blue asterisks, respectively. The capacity of a backup link is constrained by the capacity of the backup node it connects with. Note that the backup links in Figure~\ref{fig:Result1} only reflect the choices for alternative travel paths, i.e., a purple line connects original node $A$ with backup node $C$ when $C$ serves as a transit point for at least one O-D pair involving $A$. The backup node $C$ can still serve as a backup landing side for original node $B$ (under disruption) even if $B$ and $C$ are not connected by a purple link in the figure. 

We make some overall observations across all three city cases. First, within the budget level ranges shown in Figure~\ref{fig:Result1}, the numbers of backup nodes and links chosen by our mathematical program increase with the level of budget, which aligns with the intuition. Second, more backup nodes also result in improved coverage over the entire network. Third, when the capacity of a backup node is constrained to 1 or 2, the design process always choose to build as many large nodes (with capacity 2) as possible. At certain budget levels, such as 40 and 70, some small backup nodes (with capacity 1) are chosen as well. Fourth, since the objective is to maximize the total network throughput, the placement of the backup vertiports largely depends on the spatial distribution of O-D pairs in the network. Especially at a small budget level, our method tends to choose locations that can provide alternative travel path for more O-D pairs.

Next, we discuss the result of each individual city case. 
The Milwaukee network has the most straightforward topological structure among the three cases. In this star network, 6 nodes in the Milwaukee area only connect with the central node in Milwaukee downtown. When the budget level is 30, a set of 5 backup nodes can provide alternative travel path for every O-D pair and diverse landing options with good overall coverage. The longest O-D pair in the middle benefits from its ``central'' location and has the most alternative travel options. As the budget level increases, the set of backup nodes can provide more travel alternative options near the periphery as well as better overall coverage. Through the Milwaukee case, we can also see that a backup node will not provide alternative travel path for an O-D pair if its location is too close to the relevant link. 
The Atlanta network has a hybrid mesh and star structure. When the budget level is 40, the optimization program selects 7 backup nodes (6 large and 1 small). Although this set of 7 backup nodes can already provide alternative travel path for every O-D pair in the network, their locations are mostly concentrated in the central region where the traffic is dense. With more budget, more backup nodes from the southern region emerges and the overall coverage becomes more balanced. 
The Dallas--Fort Worth network has a unique multi star structure. In addition, most O-D pairs and traffic distribute in the northern region of the network. As a result, when the budget level is 55, a relatively low budget level considering the scale and complexity of the Dallas--Fort Worth network, all 9 constructed backup vertiports locate in the upper part of the network. When the budget level increases to 70, a set of 12 backup vertiports includes one in the lower part of the network, yet a few O-D pairs still has to operate without alternative travel path. When the budget level is 95, the 15 selected backup vertiports can now provide a highly comprehensive coverage over the entire network. In the meantime, every O-D pair in the network has at least one alternative travel path. In general, it takes more infrastructures and resources to provide comprehensive reserve capacity for a complex network like Dallas--Fort Worth compared to a simple network like Milwaukee.

\begin{figure}[h!]
	\centering
        \includegraphics[width=0.3\textwidth]{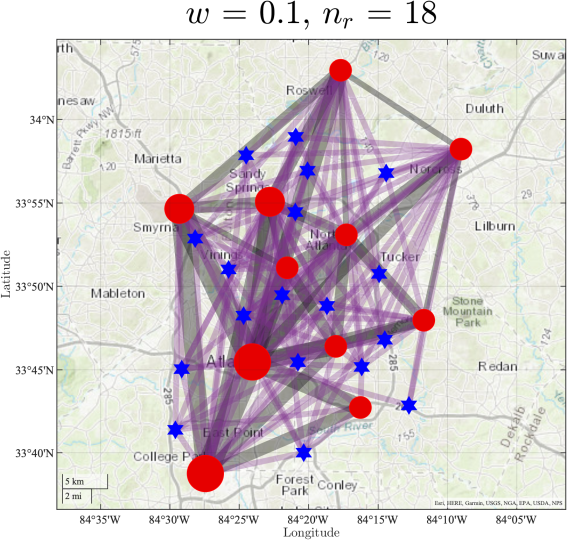}
        \hspace*{0.05cm}
        \includegraphics[width=0.3\textwidth]{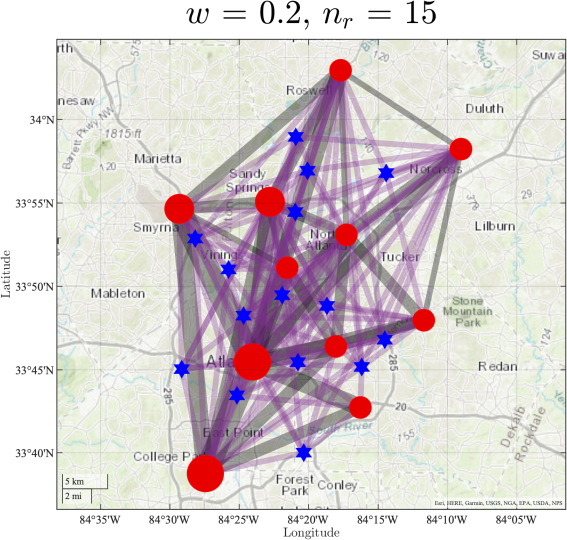}
        \hspace*{0.05cm}
        \includegraphics[width=0.3\textwidth]{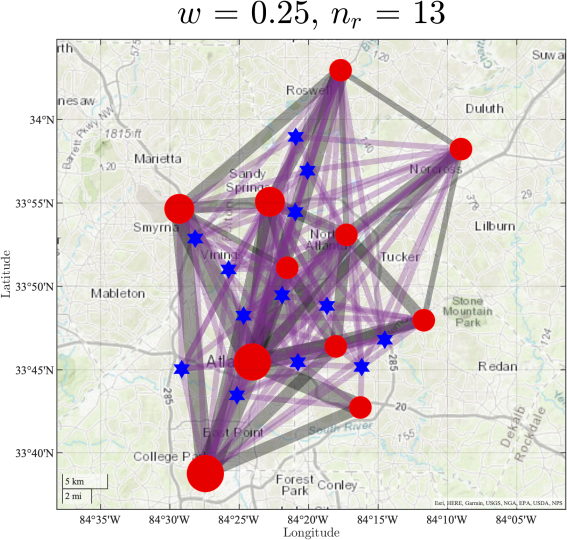}\\
        \vspace*{0.15cm}
        \includegraphics[width=0.3\textwidth]{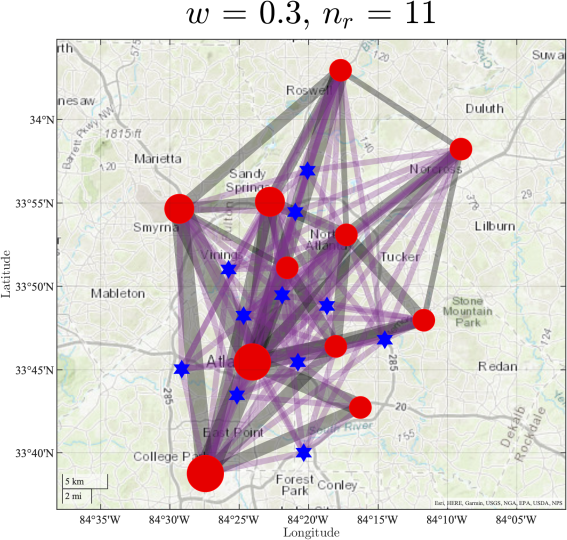}
        \hspace*{0.05cm}
        \includegraphics[width=0.3\textwidth]{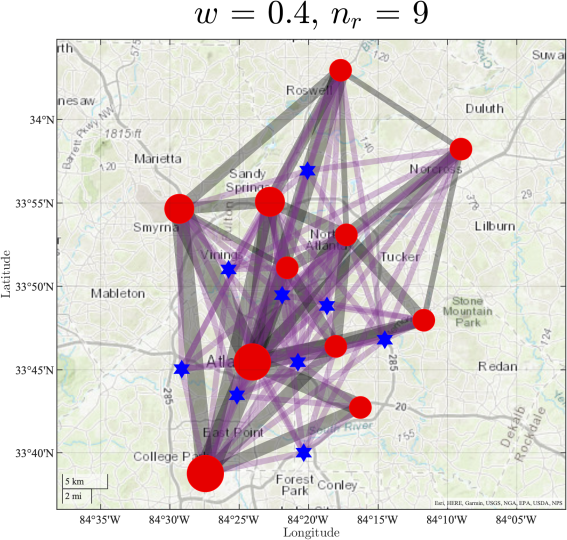}
        \hspace*{0.05cm}
        \includegraphics[width=0.3\textwidth]{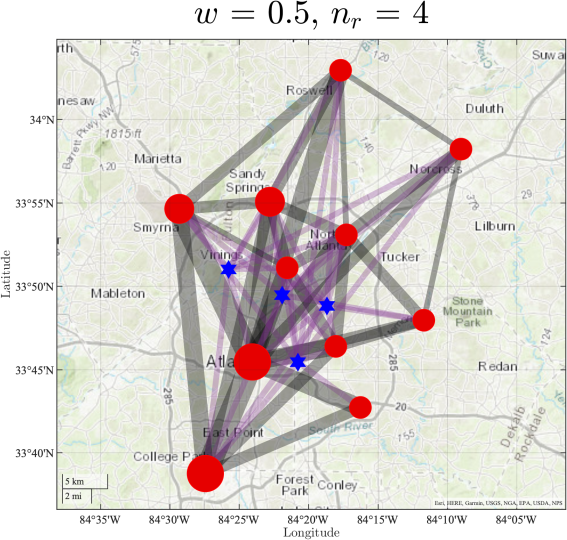}
	\caption{Visualizations of the Atlanta metropolitan area UAM network designs at different valuation ($w$) levels while fixing the budget level at 150.}
	\label{fig:Resultw}
\end{figure}

Other than the budget level, another dimension to visually compare different design outcomes is the valuation parameter. We interpret the weighting parameter $w$ in the objective function of the MILP~(Eqn. \eqref{eq:obj_milp}) associated with the construction cost as the valuation of the additional expected network throughput brought by the additional infrastructures (vertiport, charging station, etc.) built on the backup nodes. A smaller $w$ means that we give a higher value to the per unit increase in expected network throughput. Therefore, at a fixed (high) budget level, a smaller $w$ value should result in the building of more backup nodes in the network. For the sake of brevity, we demonstrate the impact of $w$ through an illustrative example using the Atlanta network. Figure~\ref{fig:Resultw} shows the Atlanta network design outcomes at six different $w$ values while fixing the budget level at 150, the highest budget in this study. As $w$ increases from 0.1 to 0.5 (the valuation of network throughput enhancement decreases), the number of selected backup nodes decreases from 18 to 4. In the meantime, the pattern of selection at different backup node numbers matches with the observations in Figure~\ref{fig:Result1}, similar to a reduced budget manner. This indicates that, if the system designer or policy maker places a low valuation on the per unit increase in expected network throughput, then the capacity-selecting program will choose to build only a small number of backup nodes and infrastructures even with ample budget. This result again confirms that $w$ is a crucial factor in the formulation especially for cost-benefit analysis.

\subsection{Quantitative Metric Comparisons}

In addition to the network visualizations, we also quantitatively assess how the increasing budget may enhance the redundancy of a UAM network. Specifically, we evaluate at different budget levels the three most critical angles of a UAM network design’s redundancy: capacity, diversity, and coverage. To begin with, we utilize network throughput enhancement, which is also the design objective, to evaluate the increment of the UAM network's capacity with the backup infrastructures. Figure~\ref{fig:Exp1} shows the network throughput enhancement results for all three city cases when the budget level is between 0 and 150. 

Formally, we define the total network throughput enhancement $\delta$ as the sum of difference between the throughput with and without backup infrastructures for all possibilities, and define the expected network throughput enhancement $\overline{\delta}$ as the expectation of $\delta$ across the network, i.e., 
\begin{align}
    \delta &= \sum_{i =1}^{N_{\mc E}} \sum_{ k = 1}^{k_{e_i}} \Big(S^{\ast}(\mc N^{ext}(e_i,k)) -\mc N(e_i,k)\Big)\cdot \frac{p_{e_i,k}}{P^{e_i}_{dis}} + \sum_{j=1}^{N_{\mc V}} \sum_{ k = 1}^{k_{v_j}} \Big(S^{\ast}(\mc N^{ext}(v_j,k)) - \mc N(v_j,k)\Big)\cdot \frac{p_{v_i,k}}{P^{v_i}_{dis}} \\
   \overline{\delta} &= \sum_{i =1}^{N_{\mc E}} \sum_{ k = 1}^{k_{e_i}} \Big(S^{\ast}(\mc N^{ext}(e_i,k)) -\mc N(e_i,k)\Big)\cdot {p_{e_i,k}} + \sum_{j=1}^{N_{\mc V}} \sum_{ k = 1}^{k_{v_j}} \Big(S^{\ast}(\mc N^{ext}(v_j,k)) - \mc N(v_j,k)\Big)\cdot p_{v_i,k} \,.\end{align}
In the case study, we assume that each link and node in the original network will be disturbed with equal probability, i.e., $P^{e_i}_{dis} = P^{v_j}_{dis} = P_{dis}/(N_{\mc E} +N_{\mc V})$.

The top plot of Figure~\ref{fig:Exp1} shows how the expected network throughput enhancement $\overline{\delta}$ grows with budget level. We can observe two important features from the expected network throughput enhancement result. 
The first key observation is that, for all three city cases, the optimal design (and the resulting network throughput enhancement) remains the same when the budget level exceeds a certain threshold. For the current use case and setting, this threshold is 60 for Milwaukee, an 105 for both Atlanta and Dallas--Fort Worth. This indicates that under the default setting of the valuation parameter $w$, it is not necessary to make additional investments beyond this point, even with more budget available. The maximum throughput enhancement a network achieves can depend on many factors, such as the network structure, topology, distribution of O-D pairs, and the capacities of the original nodes and links. 
The second key observation is that, for a network with larger scale, it takes more investments and backup infrastructure to achieve a certain level of average throughput enhancement. For example, in the top plot of Figure~\ref{fig:Exp1}, it takes budget level 50 for the Milwaukee UAM network to reach to an expected network throughput enhancement $\overline{\delta} =0.4$. To achieve the same level of average throughput enhancement, the required budget levels for the Atlanta and Dallas--Fort Worth UAM networks are 80 and 105, respectively. 

Other than the average network throughput enhancement, the bottom plot of Figure~\ref{fig:Exp1} shows the growth of the total network throughput enhancement $\delta$ with budget level. 
Now, the total network throughput enhancement $\delta$ can reflect not only the growth pattern, but also the scale of the network. One can see that the total throughput enhancements $\delta$'s of the Atlanta and Dallas--Fort Worth UAM networks with backup vertiports and links are approximately three times of that of the Milwaukee UAM network. 

\begin{figure}[h!]
	\centering
        \includegraphics[width=0.95\textwidth]{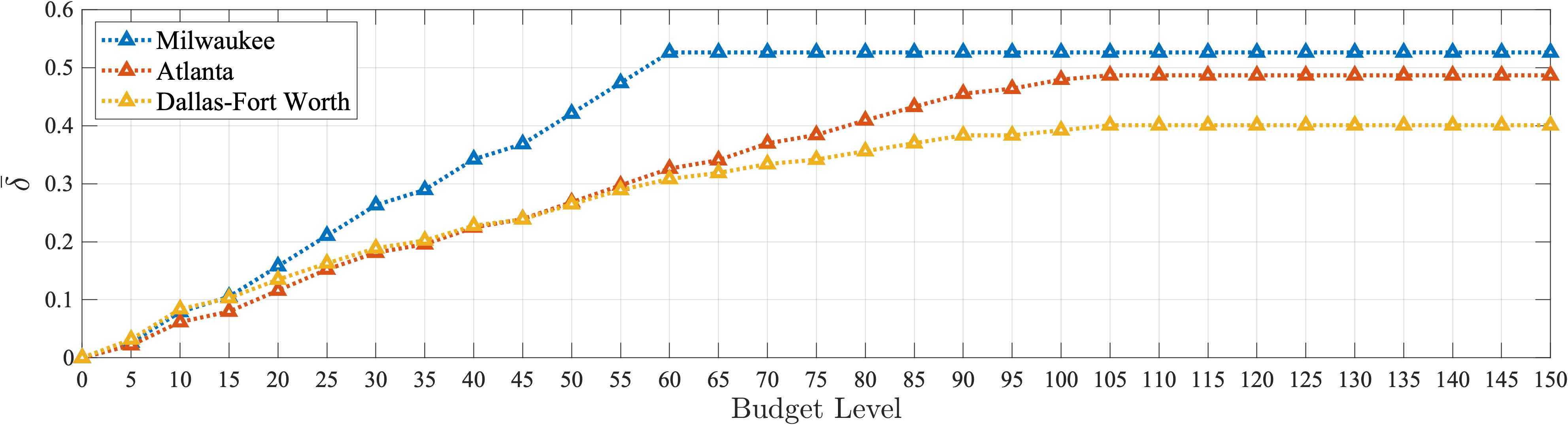}\\
        \vspace*{0.2cm}
        \includegraphics[width=0.95\textwidth]{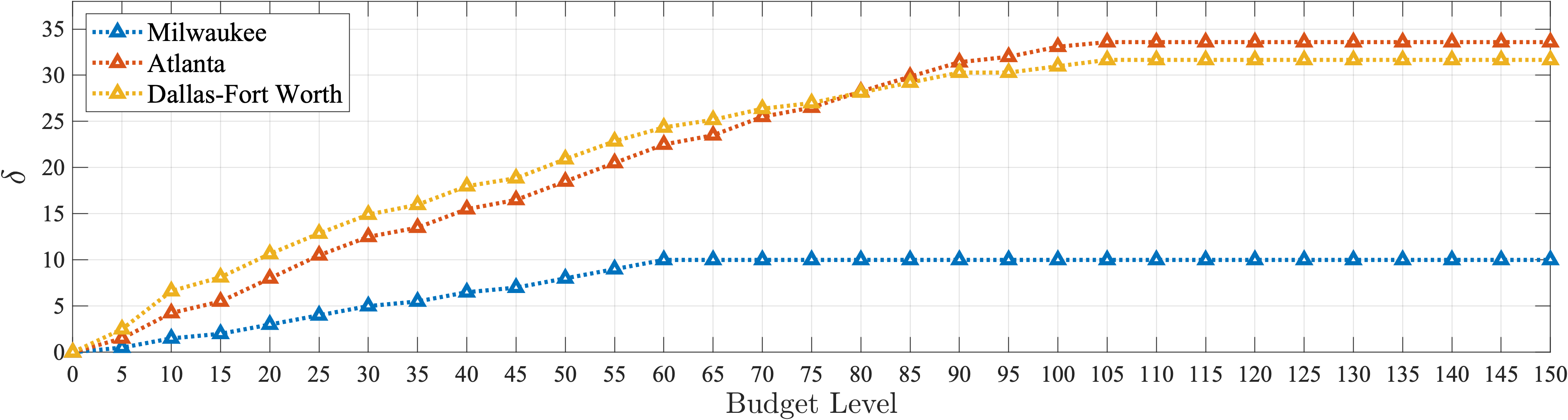}
	\caption{Expected network throughput enhancements $\overline{\delta}$ (top) and total network throughput enhancements $\delta$ (bottom) vs. budget level for Milwaukee, Atlanta, and Dallas--Fort Worth metropolitan areas.}
	\label{fig:Exp1}
\end{figure}

Compared to the network throughput enhancement, the next two quantitative metrics describing diversity and coverage are more of the ``byproducts'' of the design outcome. They are not the objectives of this design process. However, they reflect important aspects of an air transportation network's redundancy and are expected to be improved with the addition of reserve infrastructures. Therefore, we want to investigate how these two aspects are reinforced with increased budget. A table which includes the median and worst case results of these two aspects can be found in Table~\ref{tbl:diversitycoverage} in the Appendix.

\begin{figure}[h!]
	\centering
        \includegraphics[width=0.95\textwidth]{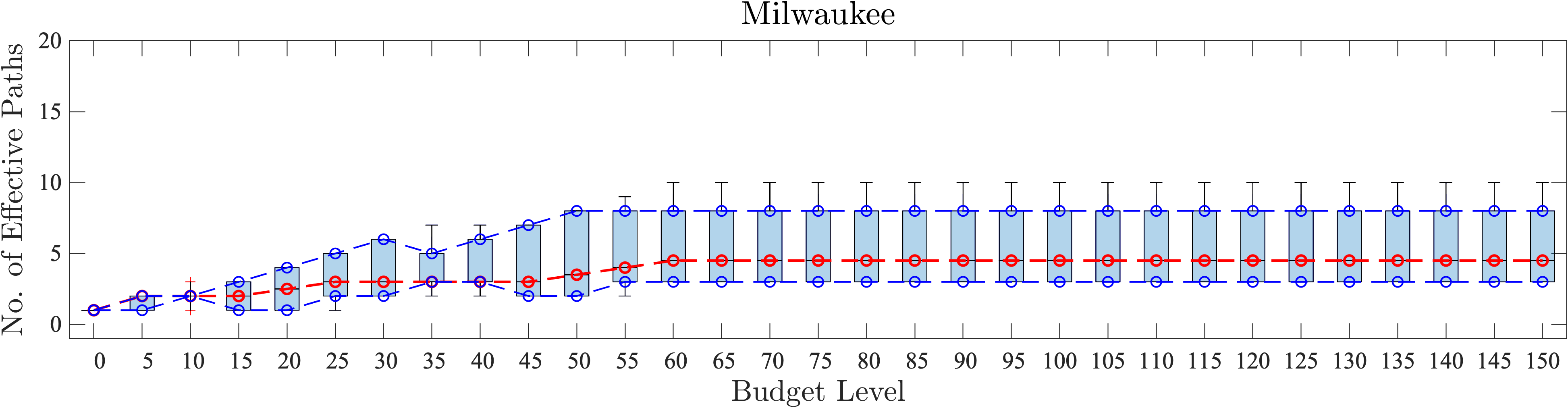}\\
        \vspace*{0.15cm}
        \includegraphics[width=0.95\textwidth]{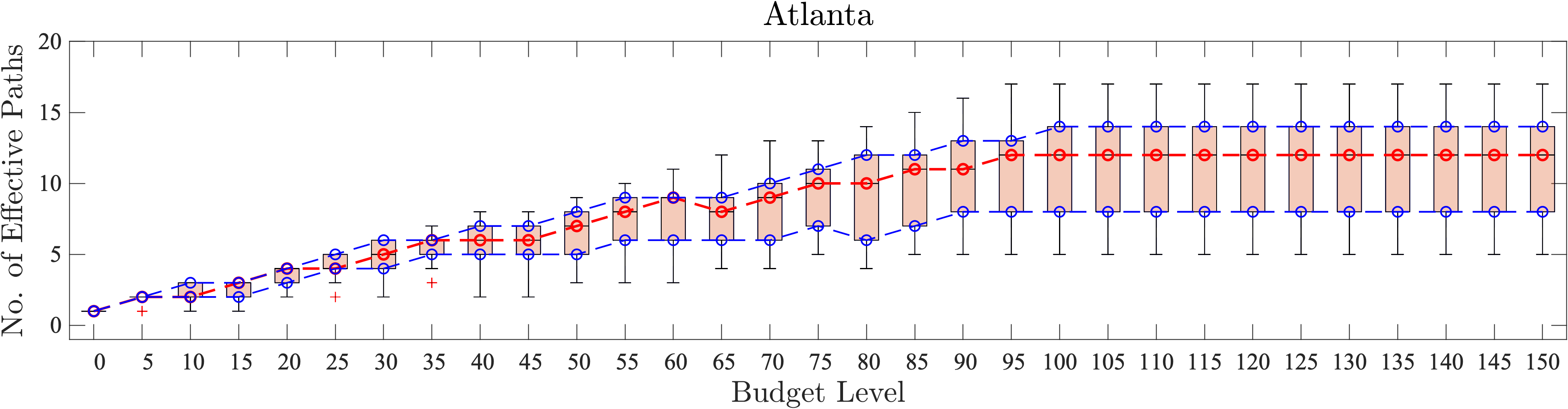}\\
        \vspace*{0.15cm}
        \includegraphics[width=0.95\textwidth]{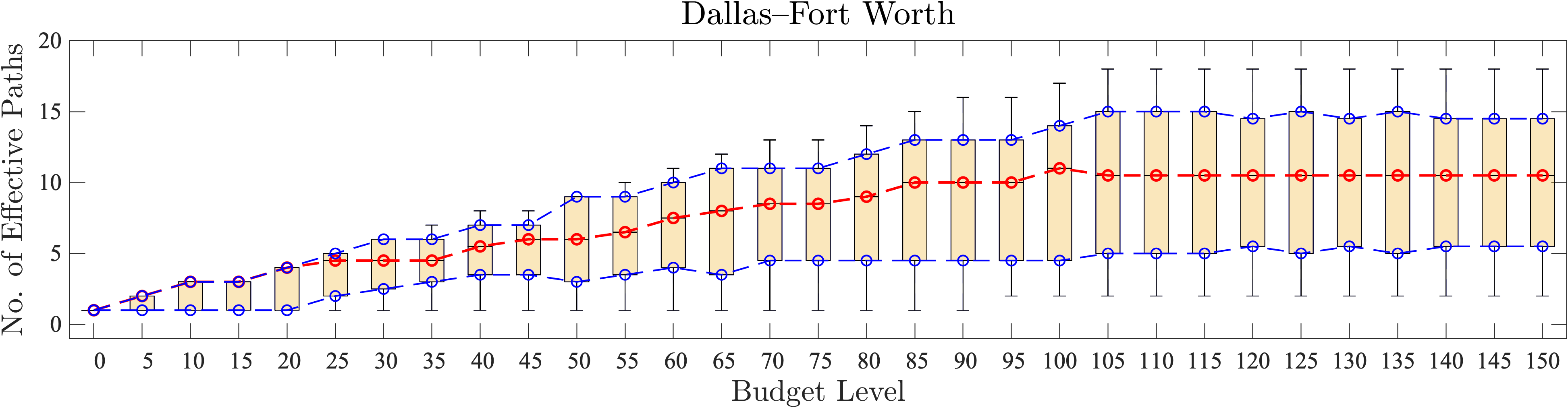}
	\caption{Distribution of travel alternative diversity (number of effective connections between a specific O-D pair) vs. budget level for Milwaukee, Atlanta, and Dallas--Fort Worth metropolitan areas. We use boxplot to display the distribution of travel alternative diversity among all O-D pairs of a network at each budget level.}
	\label{fig:Exp2}
\end{figure}

Figure~\ref{fig:Exp2} demonstrates the travel alternative diversity of the extended networks. The travel alternative diversity of a specific O-D pair is the number of effective connections/paths between the origin and destination. At each budget level, we use boxplot to display the distribution of travel alternative diversity among all O-D pairs of a network. All three subplots of Figure~\ref{fig:Exp2} have the same range on the y-axis. In the original network without any backup infrastructure, each O-D pair has exactly one effective path (link). For the Milwaukee UAM network, the largest design (design with the most backup nodes) has at least 3 effective paths for any O-D pair, while the median is 4.5. At the budget level of 30, a design with 5 reserve nodes (also shown in Figure~\ref{fig:Result1}) can ensure that each O-D pair has at least 2 effective paths, while the median is 3. For a small UAM network like Milwaukee, this is a good level of redundancy on travel alternative diversity. In the largest design of the Atlanta UAM network, the minimum and median numbers of effective paths are 5 and 12, respectively. At the budget level of 50, a design with 8 reserve nodes can ensure that each O-D pair has at least 3 effective paths, while the median is 7. For the Dallas--Fort Worth network, the minimum and median numbers of effective paths in the largest design are 2 and 10.5, respectively, achieved at the budget level of 95. Like analyzed previously, the travel alternative diversity varies among O-D pairs because of the special topological structure and distribution of O-D pairs in the Dallas--Fort Worth network. Since the robustness in travel alternative diversity is not an objective in the design process, the ``worst case'' is improved relatively slowly. This observation motivates a potential robust design approach for future work, which will be further discussion in Section~\ref{sec:remarks}. In the largest design, from Milwaukee to Atlanta to Dallas--Fort Worth the variability in travel alternative diversity increases, which is a joint effect of network scale and topological structure.

Finally, we use maximum landing distance to observe how the addition of the backup nodes may improve the network's overall coverage. In contingency management, a better coverage means that it is easier for an en-route flight to find a nearest vertiport for landing. Although the configuration of an eVTOL aircraft gives it more flexibility to land at a suitable place in the city in case of emergency, being able to land at a vertiport can greatly benefit the continuity of the operations and reduce extra costs. The maximum landing distance of an O-D pair is defined as follows: let $x$ be a location along the link $e$ which connects the O-D pair, and $\mc V^t$ be the set of all (original and backup) vertiports in the current UAM network, the minimum landing distance from $x$ to any vertiport in the network is $\min_{v \in \mc V^t} d(x,v)$, where we use Euclidean distance for the distance function $d(x,v)$. Then, the maximum landing distance along the link of this O-D pair is obtained by the max-min problem $\max_{x \in e} \min_{v \in \mc V^t} d(x,v)$. This metric is an indicator of the worst landing case over the entire flight path. Although this demanding criterion for evaluating the UAM network's coverage is not a design objective in this work, we can still see its improvement with added backup vertiports.

\begin{figure}[h!]
	\centering
        \includegraphics[width=0.95\textwidth]{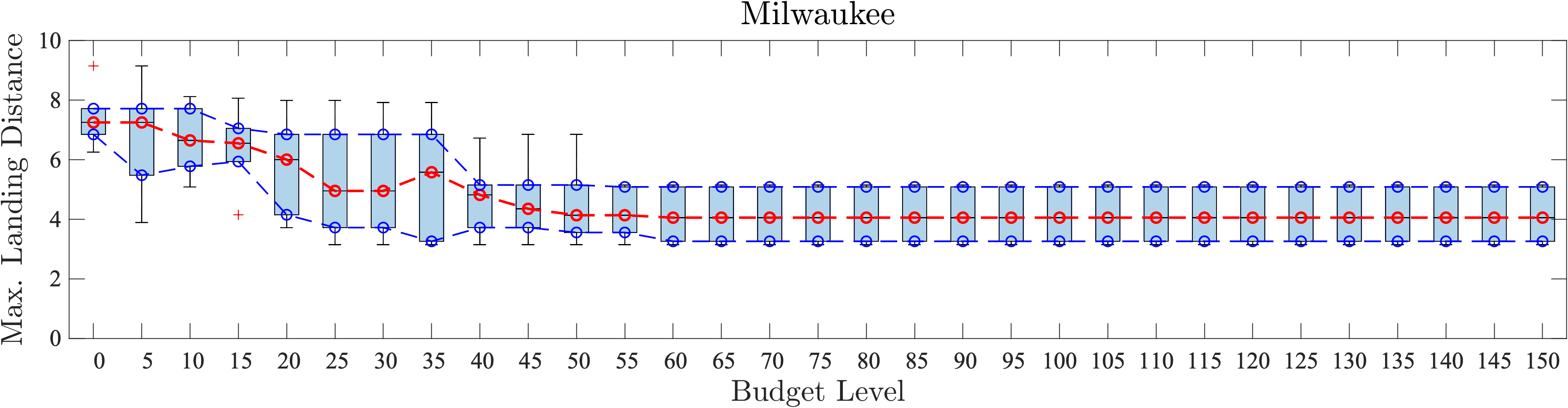}\\
        \vspace*{0.15cm}
        \includegraphics[width=0.95\textwidth]{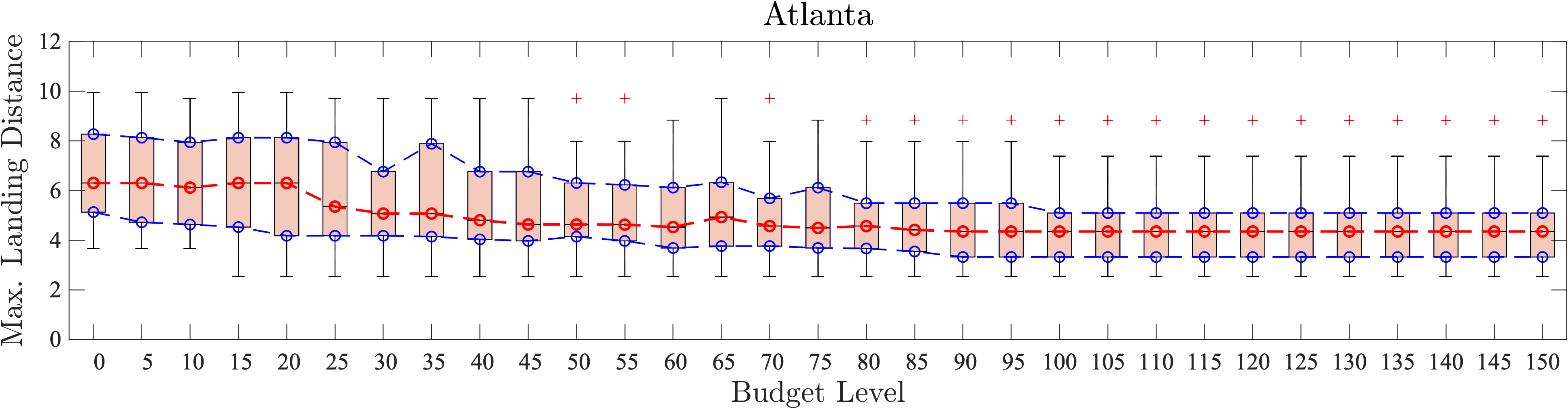}\\
        \vspace*{0.15cm}
        \includegraphics[width=0.95\textwidth]{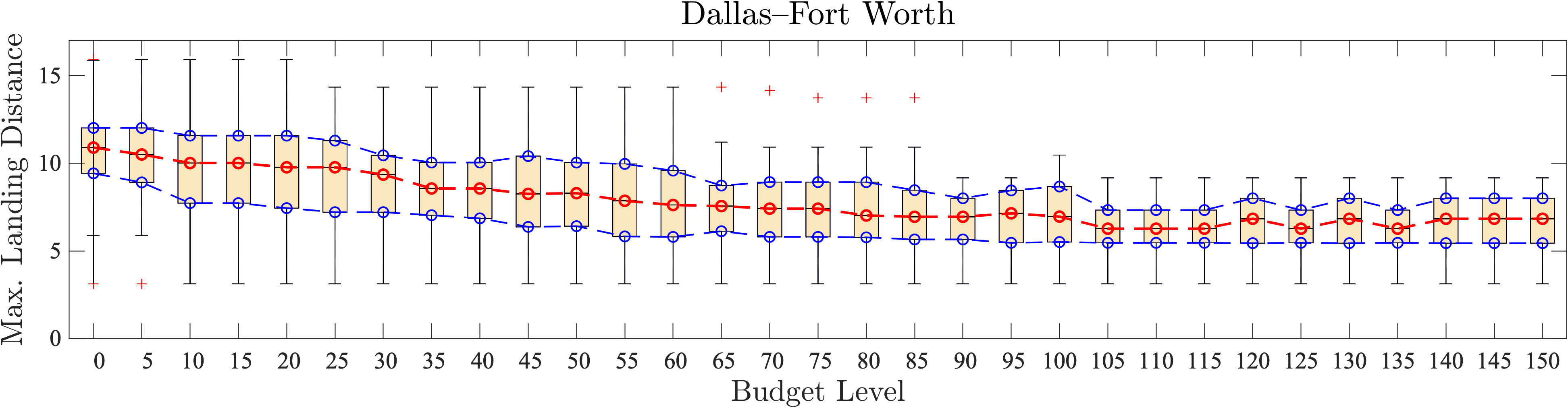}
	\caption{Distribution of maximum landing distance (maximum distance to the nearest landing site along route between a specific O-D pair) in kilometers vs. budget level for Milwaukee, Atlanta, and Dallas--Fort Worth metropolitan areas.}
	\label{fig:Exp3}
\end{figure}

Similar to Figure~\ref{fig:Exp2}, Figure~\ref{fig:Exp3} presents at each budget level the distribution of maximum landing distance among all O-D pairs of a network. In Figure~\ref{fig:Exp3}, although the variation patterns differ between the three city cases due to different network features, the overall improvement trend is clear. For the Milwaukee network, without any backup vertiport the maximum and median values are 9.1 and 7.3 kilometers, respectively. With budget level 30 and 5 backup vertiports, these two values are improved to 7.9 and 5. When the budget level is 55, the largest network with 9 backup vertiports achieves significant improvement in the coverage---the maximum and median are reduced to 5.2 and 4.1, respectively. For the Atlanta network, the median value is reduced from 6.3 in the original network to 4.3 when the budget level exceeds 100. The variability of the distribution also reduces significantly, which is a signal of overall reinforcement in coverage. However, we also notice that the coverage on the ``worst case'' O-D pair is still limited, as the maximum value is only slightly reduced from 10 to 8.8 in the largest design. In fact, there is one O-D pair in the Atlanta network that the set of backup vertiports are still not significantly closer to the mid point of the trip. 
For the Dallas network, the coverage is also steadily improved with the budget level. As the budget level increases from 0 to 85, the median of maximum landing distance decreases significantly from 10.9 to 6.9, while the 25-th and 75-th percentiles are reduced by similar amounts as well. During this budget range, the worst case is reduced from 15.9 to 13.7. Nevertheless, as the budget level changes from 85 to 90, the worst case reduces significantly from 13.7 to 9.2. This single additional reserve vertiport turns out to make a substantial difference on the worst case. Therefore, this design would be preferred if coverage is a secondary consideration in this process. 
Overall, due to the nature of this travel path coverage criterion, our backup vertiports can already achieve considerable improvement. Although one can further improve the coverage if it becomes an objective in the optimization set-up, yet that can only happen at the cost of network throughput and travel alternative diversity. We will further discuss this potential possibility and trade-off study next in Section~\ref{sec:remarks}.

\section{Remarks}\label{sec:remarks}


Overall, this work raises the new concept of UAM network design with reserve capacity for better operational guarantees under disruption and proposes an optimization framework to optimally deploy such temporarily utilized capacity under limited resources. The case study results also meet the general design expectations and demonstrate the effectiveness of the design approach. While adding value to the current literature in several ways, like most similar projects in the literature, this work can still be further improved in multiple aspects. In this section, we briefly discuss the limitations of this study and corresponding future avenues. 

In the first place, an alternative robust design approach may add redundancy ``more evenly'' throughout the entire UAM network. Current approach seeks design schemes which maximize the total throughput (largest possible sum of O-D flows) of the network. Consequently, under limited budget, the placement of reserve vertiports is prioritized to locations that are likely to ``serve more demands''. While this logic is valid, a potential drawback is that, under certain network topological structures, some proportion of the network could be under-served by the reserve functionality. One such example is the travel diversity distribution result in Figure~\ref{fig:Exp2}. At the budget level of 25, while many (over 25\%) O-D pairs in the Milwaukee and Dallas--Fort Worth networks already have as many as 5 effective travel paths, at least one O-D pair still only has 1 effective path. The same observation and variability in reserve functionality among O-D pairs can also be seen in the maximum landing distance result in Figure~\ref{fig:Exp3}. A robust design formulation optimizing the ``worst case'' among O-D pairs is a potential solution to such imbalance issues. This can be achieved by an alternative objective function with min-max component. However, such a robust design formulation also has disadvantages. Because it neglects the spatial distribution and density of travel demands, some backup vertiports could be even more under-utilized, while others are being over-utilized or even becoming insufficient. Since operational efficiency is a crucial consideration in this planning process, it would be fruitful to explore some trade-offs between robustness and efficiency through formulations that are similar or equivalent to multi-criteria optimization. 

Second, incorporating congestion-aware features in the design of air transportation network can further improve the overall performance of the network. Congestion has been widely explored in ground transportation literature, but not as well explored in air transportation network. Too many flights in the corridors or vertiports comparing to the corresponding capacities can cause congestion, so that the travel time of the flights will increase with the aggregated flights in the network or disturbed network capacity. The proposed design of the risk-aware air transportation network does not consider congestion, while the disturbed network may worsen the congestion in the network because of the capacity drop. As travel time is also one of the main concerns of the transportation systems besides throughput, adding related components into the current problem formulation may provide a more comprehensive analysis to this infrastructure-construction problem. 

A third potential avenue is the simultaneous design of both network redundancy and proactive contingency plans. Operations planning under disruptions is another closely related topic once the UAM network design with reserve capacity becomes reality. A coordination of network design and operations planning can further capture operational dynamics (e.g., centralized planning, vehicle routing, battery charging) in the network optimization process. Based on the current framework, additional components such as queuing theory will further benefit UAM contingency planning in disruption scenarios. 

Fourth, although the current solving time for the optimization problem \eqref{eq:obj_milp} is reasonable given that it is a network design problem instead of a traffic assignment problem, and thus will not be computed too many times over a single network, exploring some decomposition approaches may still improve the efficiency of our algorithm and enhance the scalability of the design problem.

Finally, it will be fruitful to model more complex disruption scenarios. In this work we assume that one node or link is disturbed in each disruption scenario and consider all possible disruption scenarios in an independent and uniform way. One can further consider: (1) when some disruption scenarios are more likely to happen than others, and (2) certain disruption scenarios can happen simultaneously. In the first case, one can assign more differentiated probability distributions to disruption scenarios. In the second case, one can utilize a correlation matrix to capture the dependencies among the individual disruption scenarios. The proposed optimization framework can incorporate these disruption scenarios through introducing the additional components mentioned above.

\section{Conclusion}\label{sec:conclusion}

In this paper, we proposed a novel concept of air transportation network with reserve capacity and developed an optimization framework to plan the locations and capacities of backup vertiports for increasing the network's redundancy for contingency management under disruption. We designed a risk-aware original air transportation network model for on-demand urban mobility service, and a corresponding extended network model with additional reserve capacity (backup nodes and links) under disruption. To maximize the expected throughput of the extended network, we formulate this network design problem into a bi-level and bilinear optimization problem and show that we can obtain an optimal solution for the design problem by solving a MILP. The resulting design is a network scheme with two types of permanent infrastructures -- the original vertiports for on-demand mobility service, and a handful of smaller backup vertiports for overflow and contingency management during disruption. 

In the case study, we applied the proposed methodology on three real-world metropolitan areas with different sizes and network topological structures. We observed that the proposed method chooses more backup vertiports under either a higher budget level or a higher valuation of the network throughput enhancement. In addition, a larger network requires more backup vertiports to attain a certain level of expected throughput enhancement throughout the network. The existence of the backup vertiports can also reinforce the network from the perspectives of travel alternative diversity and emergency landing coverage.
Some future avenues of this research include robust designs with balanced enhancement among O-D pairs, incorporation of congestion-aware components, exploration of decomposition approaches, modeling of complex disruption scenarios, and coordination between network design and proactive contingency plans. Overall, the proposed concept and solution can serve as a new paradigm for air transportation network and infrastructure design.

\section*{Acknowledgement}

This work was sponsored by the National Aeronautics and Space Administration (NASA) University Leadership Initiative (ULI) program under project ``Autonomous Aerial Cargo Operations at Scale'', via grant number 80NSSC21M071 to the University of Texas at Austin. The authors are grateful to NASA project technical monitors and project partners for their support. Any opinions, findings, conclusions, or recommendations expressed in this material are those of the authors and do not necessarily reflect the views of the project sponsor.


\newpage

\bibliographystyle{model5-names}\biboptions{authoryear}

\newpage

\appendix

\section{Proof of Theorem~\ref{thm:LP_sol}}
\label{appendix:LP_solution_pf}

Since $i,k$ are fixed when proceeding with the problem, we therefore drop all the notations $(i,k)$ in this proof. As~\eqref{eq:tp_mat} is a linear program, we first derive the dual problem of~\eqref{eq:tp_mat}, and then find the optimal solutions to~\eqref{eq:tp_mat} using the necessary and sufficient optimality conditions in linear programming.

The corresponding Lagrangian can be defined as 
\begin{align}
\begin{split}
\label{eq:lagrangian}
&\mc L(X,D_1,D_2,U_1,U_2,\cdots,U_6,\vec{h}_1,\vec{h}_2,\vec{h}_3) = 
-\1^T_{ N_S} D_1^T \1_{N_{\mc V}} \\
& \qquad\qquad +tr(U_1^T E X)-tr(U_1^T D_1)-tr(U_1^T D_2)-tr(U_2^T X)\\
& \qquad\qquad + \vec{h}_1^T (X\1_{N_S} - C^{\mc E}) + \vec{h}_2^T ( E_{+} X\1_{ N_S} - C^{\mc V}) +\vec{h}_3^T D_1^T \1_{ N_{\mc V}} + \vec{h}_3^T D_2^T \1_{ N_{\mc V}} \\
& \qquad\qquad + tr(U_3^T D_1) - tr(U_3^T \Delta_1 M)  - tr(U_4^T D_2) + tr(U_4^T \Delta_1 M) - tr(U_5^T D_1) + tr(U_6^T D_2)
\end{split}
\end{align}

The dual of linear program~\eqref{eq:tp_mat} is then
\begin{align}
\begin{split}
\label{eq:dual_prob}    
\max_{X,D_1,D_2,U_1,U_2,\cdots,U_6,\vec{h}_1,\vec{h}_2,\vec{h}_3} & G(U_1,U_2,\cdots,U_6,\vec{h}_1,\vec{h}_2,\vec{h}_3)\\
:= \max_{X,D_1,D_2,U_1,U_2,\cdots,U_6,\vec{h}_1,\vec{h}_2,\vec{h}_3} &\Big(\min_{X,D_1,D_2}\mc L(X,D_1,D_2,U_1,U_2,\cdots,U_6,\vec{h}_1,\vec{h}_2,\vec{h}_3)\Big)\\
\text{s.t.} \quad & U_2,U_3,U_4, U_5, U_6 \geq 0, \quad \vec{h}_1, \vec{h}_2 \geq 0
\end{split}
\end{align}

Trace is invariant under cyclic permutations. Also, based on matrix calculus, let $A_1, A_2, A_3$ be $n_1 \times n_2$, $n_2 \times n_3$, and $n_3 \times n_4$ matrices, respectively, where $n_1, n_2, n_3, n_4 \in \mathbb{N}$, then
\begin{align*}
    \frac{\partial }{\partial A_2} (A_1 A_2) = A_1^T, \quad \frac{\partial }{\partial A_2} (A_1 A_2 A_3) = A_1^T A_3^T \,.
\end{align*}

Moreover, for any scalar $c \in \mathbb{R}$, $ c = tr(c) = c^T$.

We then derive the following:
\begin{align}
\label{eq:partial_X}
\begin{split}    
& \quad~ \frac{\partial }{\partial X} \mc L(X,D_1,D_2,U_1,U_2,\cdots,U_6,\vec{h}_1,\vec{h}_2,\vec{h}_3) \\
&=  \frac{\partial }{\partial X} tr\Big((U_1^T E - U_2^T +  \1_{ N_S}\vec{h}_1^T + \1_{N_S}\vec{h}_2^T  E_+ )X\Big) \\
&= E^T U_1 - U_2 + (\vec{h}_1+E_+^T \vec{h}_2)\1_{ N_S}^T \,.
\end{split}
\end{align}

Similarly,
\begin{align}
\label{eq:partial_D1}
\begin{split}    
& \quad~ \frac{\partial }{\partial D_1} \mc L(X,D_1,D_2,U_1, U_2,\cdots,U_6,\vec{h}_1,\vec{h}_2,\vec{h}_3) \\
&= \frac{\partial }{\partial D_1} (-\1^T_{N_S} D_1^T \1_{N_{\mc V}}+\vec{h}_3^T D_1^T \1_{ N_{\mc V}}) + \frac{\partial }{\partial D_1} tr(-U_1^T D_1+U_3^T D_1 -U_5^T D_1)\\
&= \frac{\partial }{\partial D_1} tr(-\1^T_{N_{\mc V}} D_1 \1_{N_S}+\1_{N_{\mc V}}^T D_1 \vec{h}_3) + \frac{\partial }{\partial D_1} tr(-U_1^T D_1+U_3^T D_1 -U_5^T D_1)\\
& = -\1_{N_{\mc V}} \1^T_{N_S} + \1_{N_{\mc V}} \vec{h}_3^T - U_1 +U_3-U_5 \,,
\end{split}
\end{align}

and 

\begin{align}
\begin{split}    
& \quad~ \frac{\partial }{\partial D_2} \mc L(X,D_1,D_2,U_1,U_2,\cdots,U_6,\vec{h}_1,\vec{h}_2,\vec{h}_3) \\
&= \frac{\partial }{\partial D_2} tr(-U_1^T D_2 - U_4^T D_2+U_6^T D_2) + \frac{\partial }{\partial D_2} \vec{h}_3^T D_2^T \1_{N_{\mc V}}\\
& = \1_{ N_{\mc V}}\vec{h}_3^T - U_1 - U_4 + U_6\,.
\end{split}
\end{align}

Since the Lagrangian~\eqref{eq:lagrangian} is a linear function of $X,D_1,D_2$, then the minimum of Lagrangian subject to the variables $X,D_1,D_2$ is equal to the Lagrangian if and only if~\eqref{eq:lagrangian} is not changing with values of $X,D_1,D_2$, that is, $G(U_1,U_2,\cdots,U_6,\vec{h}_1,\vec{h}_2,\vec{h}_3) = \mc L(U_1,U_2,\cdots,U_6,\vec{h}_1,\vec{h}_2,\vec{h}_3)$ if and only if $\frac{\partial }{\partial A} \mc L(X,D_1,D_2, \\ U_1,U_2,\cdots,U_6,\vec{h}_1,\vec{h}_2,\vec{h}_3) = 0$, for $A = X, D_1, D_2$. 
We then convert the dual program~\eqref{eq:dual_prob} into the following equivalently:
\begin{align}
\label{eq:dual_prob2}    
\begin{split}
\max_{X,D_1,D_2,U_1,U_2,\cdots,U_6,\vec{h}_1,\vec{h}_2,\vec{h}_3} & -\vec{h}_1^T C^{\mc E} - \vec{h}_2^T C^{\mc V} - tr(U_3^T \Delta_1 M) +tr(U_4^T \Delta_2 M) \\
   \text{s.t.}  \quad & E^T U_1 - U_2 + (\vec{h}_1+E_+^T \vec{h}_2)\1_{N_S}^T =0\\
& -\1_{N_{\mc V}} \1^T_{N_S} + \1_{N_{\mc V}} \vec{h}_3^T - U_1 +U_3-U_5 =0\\
& \1_{ N_{\mc V}}\vec{h}_3^T - U_1 - U_4 + U_6 =0\\
& U_2,U_3,U_4, U_5, U_6 \geq 0, \quad \vec{h}_1, \vec{h}_2 \geq 0
\end{split}
\end{align}

We then conclude that, $X,D_1,D_2$ (resp., $U_1,U_2,\cdots,U_6,\vec{h}_1,\vec{h}_2,\vec{h}_3$) are optimal solutions to the original problem~\eqref{eq:tp_mat} (resp., the dual problem~\eqref{eq:dual_prob2}) if and only if the primal and dual feasibility conditions~\eqref{eq:primal_1}--\eqref{eq:dual_4} and the zero duality gap optimality condition~\eqref{eq:zero_duality_gap} are satisfied.

\newpage

\section{Additional Case Study Results}

\begin{table}[h!]
\scriptsize
\centering
\caption{Information on the median and worst case of travel alternative diversity and maximum landing distance vs. budget level for all three city cases}
\label{tbl:diversitycoverage}
\begin{tabular}{c|cccccc|cccccc}
\hline \hline
\multirow{3}{*}{\begin{tabular}[c]{@{}c@{}}Budget\\ Level\end{tabular}} & \multicolumn{6}{c|}{Travel Alternative Diversity}                                                                                                        & \multicolumn{6}{c}{Maximum Landing Distance}                                                                                                             \\ \cline{2-13} 
                                                                        & \multicolumn{2}{c|}{Milwaukee}                         & \multicolumn{2}{c|}{Atlanta}                           & \multicolumn{2}{c|}{Dallas--Fort Worth} & \multicolumn{2}{c|}{Milwaukee}                         & \multicolumn{2}{c|}{Atlanta}                            & \multicolumn{2}{c}{Dallas--Fort Worth} \\ \cline{2-13} 
                                                                        & \multicolumn{1}{c|}{Median} & \multicolumn{1}{c|}{Min} & \multicolumn{1}{c|}{Median} & \multicolumn{1}{c|}{Min} & \multicolumn{1}{c|}{Median}    & Min   & \multicolumn{1}{c|}{Median} & \multicolumn{1}{c|}{Max} & \multicolumn{1}{c|}{Median} & \multicolumn{1}{c|}{Max}  & \multicolumn{1}{c|}{Median}   & Max    \\ \hline
0                                                                       & \multicolumn{1}{c|}{1.0}    & \multicolumn{1}{c|}{1.0} & \multicolumn{1}{c|}{1.0}    & \multicolumn{1}{c|}{1.0} & \multicolumn{1}{c|}{1.0}       & 1.0   & \multicolumn{1}{c|}{7.3}    & \multicolumn{1}{c|}{9.1} & \multicolumn{1}{c|}{6.3}    & \multicolumn{1}{c|}{10.0} & \multicolumn{1}{c|}{10.9}     & 15.9   \\ \hline
5                                                                       & \multicolumn{1}{c|}{2.0}    & \multicolumn{1}{c|}{1.0} & \multicolumn{1}{c|}{2.0}    & \multicolumn{1}{c|}{1.0} & \multicolumn{1}{c|}{2.0}       & 1.0   & \multicolumn{1}{c|}{7.3}    & \multicolumn{1}{c|}{9.1} & \multicolumn{1}{c|}{6.3}    & \multicolumn{1}{c|}{10.0} & \multicolumn{1}{c|}{10.5}     & 15.9   \\ \hline
10                                                                      & \multicolumn{1}{c|}{2.0}    & \multicolumn{1}{c|}{1.0} & \multicolumn{1}{c|}{2.0}    & \multicolumn{1}{c|}{1.0} & \multicolumn{1}{c|}{3.0}       & 1.0   & \multicolumn{1}{c|}{6.7}    & \multicolumn{1}{c|}{8.1} & \multicolumn{1}{c|}{6.1}    & \multicolumn{1}{c|}{9.7}  & \multicolumn{1}{c|}{10.0}     & 15.9   \\ \hline
15                                                                      & \multicolumn{1}{c|}{2.0}    & \multicolumn{1}{c|}{1.0} & \multicolumn{1}{c|}{3.0}    & \multicolumn{1}{c|}{1.0} & \multicolumn{1}{c|}{3.0}       & 1.0   & \multicolumn{1}{c|}{6.6}    & \multicolumn{1}{c|}{8.1} & \multicolumn{1}{c|}{6.3}    & \multicolumn{1}{c|}{10.0} & \multicolumn{1}{c|}{10.0}     & 15.9   \\ \hline
20                                                                      & \multicolumn{1}{c|}{2.5}    & \multicolumn{1}{c|}{1.0} & \multicolumn{1}{c|}{4.0}    & \multicolumn{1}{c|}{2.0} & \multicolumn{1}{c|}{4.0}       & 1.0   & \multicolumn{1}{c|}{6.0}    & \multicolumn{1}{c|}{8.0} & \multicolumn{1}{c|}{6.3}    & \multicolumn{1}{c|}{10.0} & \multicolumn{1}{c|}{9.8}      & 15.9   \\ \hline
25                                                                      & \multicolumn{1}{c|}{3.0}    & \multicolumn{1}{c|}{1.0} & \multicolumn{1}{c|}{4.0}    & \multicolumn{1}{c|}{2.0} & \multicolumn{1}{c|}{4.5}       & 1.0   & \multicolumn{1}{c|}{5.0}    & \multicolumn{1}{c|}{8.0} & \multicolumn{1}{c|}{5.4}    & \multicolumn{1}{c|}{9.7}  & \multicolumn{1}{c|}{9.8}      & 14.4   \\ \hline
30                                                                      & \multicolumn{1}{c|}{3.0}    & \multicolumn{1}{c|}{2.0} & \multicolumn{1}{c|}{5.0}    & \multicolumn{1}{c|}{2.0} & \multicolumn{1}{c|}{4.5}       & 1.0   & \multicolumn{1}{c|}{5.0}    & \multicolumn{1}{c|}{7.9} & \multicolumn{1}{c|}{5.1}    & \multicolumn{1}{c|}{9.7}  & \multicolumn{1}{c|}{9.4}      & 14.4   \\ \hline
35                                                                      & \multicolumn{1}{c|}{3.0}    & \multicolumn{1}{c|}{2.0} & \multicolumn{1}{c|}{6.0}    & \multicolumn{1}{c|}{3.0} & \multicolumn{1}{c|}{4.5}       & 1.0   & \multicolumn{1}{c|}{5.6}    & \multicolumn{1}{c|}{7.9} & \multicolumn{1}{c|}{5.1}    & \multicolumn{1}{c|}{9.7}  & \multicolumn{1}{c|}{8.6}      & 14.4   \\ \hline
40                                                                      & \multicolumn{1}{c|}{3.0}    & \multicolumn{1}{c|}{2.0} & \multicolumn{1}{c|}{6.0}    & \multicolumn{1}{c|}{2.0} & \multicolumn{1}{c|}{5.5}       & 1.0   & \multicolumn{1}{c|}{4.8}    & \multicolumn{1}{c|}{6.7} & \multicolumn{1}{c|}{4.8}    & \multicolumn{1}{c|}{9.7}  & \multicolumn{1}{c|}{8.6}      & 14.4   \\ \hline
45                                                                      & \multicolumn{1}{c|}{3.0}    & \multicolumn{1}{c|}{2.0} & \multicolumn{1}{c|}{6.0}    & \multicolumn{1}{c|}{2.0} & \multicolumn{1}{c|}{6.0}       & 1.0   & \multicolumn{1}{c|}{4.4}    & \multicolumn{1}{c|}{6.9} & \multicolumn{1}{c|}{4.6}    & \multicolumn{1}{c|}{9.7}  & \multicolumn{1}{c|}{8.3}      & 14.4   \\ \hline
50                                                                      & \multicolumn{1}{c|}{3.5}    & \multicolumn{1}{c|}{2.0} & \multicolumn{1}{c|}{7.0}    & \multicolumn{1}{c|}{3.0} & \multicolumn{1}{c|}{6.0}       & 1.0   & \multicolumn{1}{c|}{4.1}    & \multicolumn{1}{c|}{6.9} & \multicolumn{1}{c|}{4.6}    & \multicolumn{1}{c|}{9.7}  & \multicolumn{1}{c|}{8.3}      & 14.4   \\ \hline
55                                                                      & \multicolumn{1}{c|}{4.0}    & \multicolumn{1}{c|}{2.0} & \multicolumn{1}{c|}{8.0}    & \multicolumn{1}{c|}{3.0} & \multicolumn{1}{c|}{6.5}       & 1.0   & \multicolumn{1}{c|}{4.1}    & \multicolumn{1}{c|}{5.2} & \multicolumn{1}{c|}{4.6}    & \multicolumn{1}{c|}{9.7}  & \multicolumn{1}{c|}{7.9}      & 14.4   \\ \hline
60                                                                      & \multicolumn{1}{c|}{4.5}    & \multicolumn{1}{c|}{3.0} & \multicolumn{1}{c|}{9.0}    & \multicolumn{1}{c|}{3.0} & \multicolumn{1}{c|}{7.5}       & 1.0   & \multicolumn{1}{c|}{4.1}    & \multicolumn{1}{c|}{5.2} & \multicolumn{1}{c|}{4.5}    & \multicolumn{1}{c|}{8.8}  & \multicolumn{1}{c|}{7.6}      & 14.4   \\ \hline
65                                                                      & \multicolumn{1}{c|}{4.5}    & \multicolumn{1}{c|}{3.0} & \multicolumn{1}{c|}{8.0}    & \multicolumn{1}{c|}{4.0} & \multicolumn{1}{c|}{8.0}       & 1.0   & \multicolumn{1}{c|}{4.1}    & \multicolumn{1}{c|}{5.2} & \multicolumn{1}{c|}{4.9}    & \multicolumn{1}{c|}{9.7}  & \multicolumn{1}{c|}{7.6}      & 14.4   \\ \hline
70                                                                      & \multicolumn{1}{c|}{4.5}    & \multicolumn{1}{c|}{3.0} & \multicolumn{1}{c|}{9.0}    & \multicolumn{1}{c|}{4.0} & \multicolumn{1}{c|}{8.5}       & 1.0   & \multicolumn{1}{c|}{4.1}    & \multicolumn{1}{c|}{5.2} & \multicolumn{1}{c|}{4.6}    & \multicolumn{1}{c|}{9.7}  & \multicolumn{1}{c|}{7.4}      & 14.2   \\ \hline
75                                                                      & \multicolumn{1}{c|}{4.5}    & \multicolumn{1}{c|}{3.0} & \multicolumn{1}{c|}{10.0}   & \multicolumn{1}{c|}{5.0} & \multicolumn{1}{c|}{8.5}       & 1.0   & \multicolumn{1}{c|}{4.1}    & \multicolumn{1}{c|}{5.2} & \multicolumn{1}{c|}{4.5}    & \multicolumn{1}{c|}{8.8}  & \multicolumn{1}{c|}{7.4}      & 13.7   \\ \hline
80                                                                      & \multicolumn{1}{c|}{4.5}    & \multicolumn{1}{c|}{3.0} & \multicolumn{1}{c|}{10.0}   & \multicolumn{1}{c|}{4.0} & \multicolumn{1}{c|}{9.0}       & 1.0   & \multicolumn{1}{c|}{4.1}    & \multicolumn{1}{c|}{5.2} & \multicolumn{1}{c|}{4.6}    & \multicolumn{1}{c|}{8.8}  & \multicolumn{1}{c|}{7.0}      & 13.7   \\ \hline
85                                                                      & \multicolumn{1}{c|}{4.5}    & \multicolumn{1}{c|}{3.0} & \multicolumn{1}{c|}{11.0}   & \multicolumn{1}{c|}{5.0} & \multicolumn{1}{c|}{10.0}      & 1.0   & \multicolumn{1}{c|}{4.1}    & \multicolumn{1}{c|}{5.2} & \multicolumn{1}{c|}{4.4}    & \multicolumn{1}{c|}{8.8}  & \multicolumn{1}{c|}{6.9}      & 13.7   \\ \hline
90                                                                      & \multicolumn{1}{c|}{4.5}    & \multicolumn{1}{c|}{3.0} & \multicolumn{1}{c|}{11.0}   & \multicolumn{1}{c|}{5.0} & \multicolumn{1}{c|}{10.0}      & 1.0   & \multicolumn{1}{c|}{4.1}    & \multicolumn{1}{c|}{5.2} & \multicolumn{1}{c|}{4.3}    & \multicolumn{1}{c|}{8.8}  & \multicolumn{1}{c|}{6.9}      & 9.2    \\ \hline
95                                                                      & \multicolumn{1}{c|}{4.5}    & \multicolumn{1}{c|}{3.0} & \multicolumn{1}{c|}{12.0}   & \multicolumn{1}{c|}{5.0} & \multicolumn{1}{c|}{10.0}      & 2.0   & \multicolumn{1}{c|}{4.1}    & \multicolumn{1}{c|}{5.2} & \multicolumn{1}{c|}{4.3}    & \multicolumn{1}{c|}{8.8}  & \multicolumn{1}{c|}{7.1}      & 9.2    \\ \hline
100                                                                     & \multicolumn{1}{c|}{4.5}    & \multicolumn{1}{c|}{3.0} & \multicolumn{1}{c|}{12.0}   & \multicolumn{1}{c|}{5.0} & \multicolumn{1}{c|}{11.0}      & 2.0   & \multicolumn{1}{c|}{4.1}    & \multicolumn{1}{c|}{5.2} & \multicolumn{1}{c|}{4.3}    & \multicolumn{1}{c|}{8.8}  & \multicolumn{1}{c|}{7.0}      & 10.5   \\ \hline
105                                                                     & \multicolumn{1}{c|}{4.5}    & \multicolumn{1}{c|}{3.0} & \multicolumn{1}{c|}{12.0}   & \multicolumn{1}{c|}{5.0} & \multicolumn{1}{c|}{10.5}      & 2.0   & \multicolumn{1}{c|}{4.1}    & \multicolumn{1}{c|}{5.2} & \multicolumn{1}{c|}{4.3}    & \multicolumn{1}{c|}{8.8}  & \multicolumn{1}{c|}{6.3}      & 9.2    \\ \hline
110                                                                     & \multicolumn{1}{c|}{4.5}    & \multicolumn{1}{c|}{3.0} & \multicolumn{1}{c|}{12.0}   & \multicolumn{1}{c|}{5.0} & \multicolumn{1}{c|}{10.5}      & 2.0   & \multicolumn{1}{c|}{4.1}    & \multicolumn{1}{c|}{5.2} & \multicolumn{1}{c|}{4.3}    & \multicolumn{1}{c|}{8.8}  & \multicolumn{1}{c|}{6.3}      & 9.2    \\ \hline
115                                                                     & \multicolumn{1}{c|}{4.5}    & \multicolumn{1}{c|}{3.0} & \multicolumn{1}{c|}{12.0}   & \multicolumn{1}{c|}{5.0} & \multicolumn{1}{c|}{10.5}      & 2.0   & \multicolumn{1}{c|}{4.1}    & \multicolumn{1}{c|}{5.2} & \multicolumn{1}{c|}{4.3}    & \multicolumn{1}{c|}{8.8}  & \multicolumn{1}{c|}{6.3}      & 9.2    \\ \hline
120                                                                     & \multicolumn{1}{c|}{4.5}    & \multicolumn{1}{c|}{3.0} & \multicolumn{1}{c|}{12.0}   & \multicolumn{1}{c|}{5.0} & \multicolumn{1}{c|}{10.5}      & 2.0   & \multicolumn{1}{c|}{4.1}    & \multicolumn{1}{c|}{5.2} & \multicolumn{1}{c|}{4.3}    & \multicolumn{1}{c|}{8.8}  & \multicolumn{1}{c|}{6.8}      & 9.2    \\ \hline
125                                                                     & \multicolumn{1}{c|}{4.5}    & \multicolumn{1}{c|}{3.0} & \multicolumn{1}{c|}{12.0}   & \multicolumn{1}{c|}{5.0} & \multicolumn{1}{c|}{10.5}      & 2.0   & \multicolumn{1}{c|}{4.1}    & \multicolumn{1}{c|}{5.2} & \multicolumn{1}{c|}{4.3}    & \multicolumn{1}{c|}{8.8}  & \multicolumn{1}{c|}{6.3}      & 9.2    \\ \hline
130                                                                     & \multicolumn{1}{c|}{4.5}    & \multicolumn{1}{c|}{3.0} & \multicolumn{1}{c|}{12.0}   & \multicolumn{1}{c|}{5.0} & \multicolumn{1}{c|}{10.5}      & 2.0   & \multicolumn{1}{c|}{4.1}    & \multicolumn{1}{c|}{5.2} & \multicolumn{1}{c|}{4.3}    & \multicolumn{1}{c|}{8.8}  & \multicolumn{1}{c|}{6.8}      & 9.2    \\ \hline
135                                                                     & \multicolumn{1}{c|}{4.5}    & \multicolumn{1}{c|}{3.0} & \multicolumn{1}{c|}{12.0}   & \multicolumn{1}{c|}{5.0} & \multicolumn{1}{c|}{10.5}      & 2.0   & \multicolumn{1}{c|}{4.1}    & \multicolumn{1}{c|}{5.2} & \multicolumn{1}{c|}{4.3}    & \multicolumn{1}{c|}{8.8}  & \multicolumn{1}{c|}{6.3}      & 9.2    \\ \hline
140                                                                     & \multicolumn{1}{c|}{4.5}    & \multicolumn{1}{c|}{3.0} & \multicolumn{1}{c|}{12.0}   & \multicolumn{1}{c|}{5.0} & \multicolumn{1}{c|}{10.5}      & 2.0   & \multicolumn{1}{c|}{4.1}    & \multicolumn{1}{c|}{5.2} & \multicolumn{1}{c|}{4.3}    & \multicolumn{1}{c|}{8.8}  & \multicolumn{1}{c|}{6.8}      & 9.2    \\ \hline
145                                                                     & \multicolumn{1}{c|}{4.5}    & \multicolumn{1}{c|}{3.0} & \multicolumn{1}{c|}{12.0}   & \multicolumn{1}{c|}{5.0} & \multicolumn{1}{c|}{10.5}      & 2.0   & \multicolumn{1}{c|}{4.1}    & \multicolumn{1}{c|}{5.2} & \multicolumn{1}{c|}{4.3}    & \multicolumn{1}{c|}{8.8}  & \multicolumn{1}{c|}{6.8}      & 9.2    \\ \hline
150                                                                     & \multicolumn{1}{c|}{4.5}    & \multicolumn{1}{c|}{3.0} & \multicolumn{1}{c|}{12.0}   & \multicolumn{1}{c|}{5.0} & \multicolumn{1}{c|}{10.5}      & 2.0   & \multicolumn{1}{c|}{4.1}    & \multicolumn{1}{c|}{5.2} & \multicolumn{1}{c|}{4.3}    & \multicolumn{1}{c|}{8.8}  & \multicolumn{1}{c|}{6.8}      & 9.2    \\ \hline \hline
\end{tabular}
\end{table}

\end{document}